\title{Semantic foundations of equality saturation}
\author{Dan Suciu}{University of Washington, Seattle}{suciu@cs.washington.edu}{https://orcid.org/0000-0002-4144-0868}{}
\author{Yisu Remy Wang}{University of California, Los Angeles}{remywang@cs.ucla.edu}{https://orcid.org/0000-0002-6887-9395}{}
\author{Yihong Zhang}{University of Washington, Seattle}{yz489@cs.washington.edu}{https://orcid.org/0009-0006-5928-4396}{}
\authorrunning{D. Suciu, Y.\,R. Wang, and Y. Zhang} %TODO mandatory. First: Use abbreviated first/middle names. Second (only in severe cases): Use first author plus 'et al.'
\keywords{the chase,
equality saturation,
term rewriting,
tree automata,
query optimization} %TODO mandatory; please add comma-separated list of keywords
\declaretheorem[name=Theorem]{thm}
\declaretheorem[name=Lemma,sibling=thm]{lem}
\declaretheorem[name=Definition,sibling=thm]{dfn}
\declaretheorem[name=Corollary,sibling=thm]{cor}
\declaretheorem[name=Example,sibling=thm]{exm}
\declaretheorem[name=Fact,sibling=thm]{fac}
\begin{document}

\newcommand{\egraphs}{\mbox{E-graphs}\xspace}
\newcommand{\egraph}{\mbox{E-graph}\xspace}
\newcommand{\Egraph}{\mbox{E-graph}\xspace}
\newcommand{\Egraphs}{\mbox{E-graphs}\xspace}
\newcommand{\eclass}{\mbox{E-class}\xspace}
\newcommand{\Eclass}{\mbox{E-class}\xspace}
\newcommand{\Enodes}{\mbox{E-nodes}\xspace}
\newcommand{\enodes}{\mbox{E-nodes}\xspace}
\newcommand{\enode}{\mbox{E-node}\xspace}
\newcommand{\Eclasses}{\mbox{E-classes}\xspace}
\newcommand{\eclasses}{\mbox{E-classes}\xspace}
\newcommand{\ematch}{\mbox{e-match}\xspace}
\newcommand{\ematching}{\mbox{E-matching}\xspace}
\newcommand{\Ematching}{\mbox{E-matching}\xspace}
\newcommand{\cclyzerpp}{\texttt{cclyzer++}\xspace}
\newcommand{\souffle}{Souffl\'e\xspace}

\newcommand{\egg}{\mbox{\texttt{egg}}\xspace}
\newcommand{\egglog}{\mbox{\texttt{egglog}}\xspace}
\newcommand{\Egglog}{\mbox{\texttt{egglog}}\xspace}

\newcommand{\equivid}{\ensuremath{\equiv_{\sf id}}\xspace}
\newcommand{\find}{\textsf{find}\xspace}
\newcommand{\lookup}{\textsf{lookup}\xspace}

\newcommand{\set}[1]{\{#1\}}                    % Set (as in \set{1,2,3}).
\newcommand{\setof}[2]{\{{#1}\mid{#2}\}}        % Set (as in \setof{x}{x>0}).
\newcommand{\N}{{\mathbb N}}    			% Expectation
\newcommand{\R}{{\mathbb R}}    			% Expectation
\newcommand{\Z}{{\mathbb Z}}    			% Expectation
\newcommand{\defeq}{\stackrel{\text{def}}{=}}
\newcommand{\dom}{\textsf{Dom}}
\newcommand{\arity}{\textit{ar}}

\newcommand{\chase}{\textsc{Ch}}
\newcommand{\sklchase}{\textsc{SklCh}}
\newcommand{\stdchase}{\textsc{StdCh}}
\newcommand{\eqsat}{\textsc{EqSat}}
\newcommand{\congr}{\textsf{CC}}
\newcommand{\varset}{\textit{\textsf{Var}}}
\newcommand{\lhs}{\textit{lhs}}
\newcommand{\rhs}{\textit{rhs}}

\newcommand{\trs}{{\mathcal R}}

\newenvironment{proofsketch}{%
  \renewcommand{\proofname}{Proof sketch}\proof}{\endproof}

\newcommand{\flatt}{\textsf{FL}}

\newcommand{\calA}{\mathcal A}
\newcommand{\calB}{\mathcal B}
\newcommand{\calC}{\mathcal C}
\newcommand{\calH}{\mathcal H}
\newcommand{\calE}{\mathcal E}
\newcommand{\calD}{\mathcal D}
\newcommand{\calF}{\mathcal F}
\newcommand{\calL}{\mathcal L}
\newcommand{\calS}{\mathcal S}
\newcommand{\calT}{\mathcal T}
\newcommand{\calZ}{\mathcal Z}
\newcommand{\calV}{\mathcal V}
\newcommand{\calW}{\mathcal W}
\newcommand{\ico}{\textsf{ICO}}

\newcommand{\mysubparagraph}[1]{\vspace{-0.5em}\subparagraph{#1}}
\newcommand{\supp}{\textit{supp}}

\newcommand{\revinsA}[1]{{#1}}
\newcommand{\revdelA}[1]{}
\newcommand{\revinsB}[1]{{#1}}
\newcommand{\revdelB}[1]{}
\newcommand{\revinsC}[1]{{#1}}
\newcommand{\revdelC}[1]{}
\newcommand{\revinsAll}[1]{{#1}}
\newcommand{\revdelAll}[1]{}

\maketitle

%TODO mandatory: add short abstract of the document
\begin{abstract}
  Equality saturation is an emerging technique for program and query optimization
  developed in the programming language community.
  It performs term rewriting over an E-graph,
  a data structure that compactly represents a program space.
  Despite its popularity,
  the theory of equality saturation lags behind the practice.
  In this paper,
  we define a fixpoint semantics of equality saturation based on tree automata
  and uncover deep connections between equality saturation and the chase.
  We characterize the class of chase sequences that correspond to equality saturation.
  We study the complexities of terminations of equality saturation
  in three cases: single-instance, all-term-instance, and all-E-graph-instance.
  Finally, we define a syntactic criterion based on acyclicity that
  implies equality saturation termination.
\end{abstract}

\section{Introduction}

\label{sec:intro} 

\revdelA{
An E-graph is a data structure introduced in
 the programming language and formal verification community in the 1970s
 for efficiently answering the word problem of ground terms---given 
 a set of identities between ground terms 
 and two ground terms $u$ and $v$,
 an E-graph can efficiently answer whether the given identities
 imply $u\approx v$.
% An E-graph compactly maintains a set of terms and a congruence closure over
%  such terms.
% This makes E-graphs
%  a suitable tool for program optimization
%  which, at a high level,
%  considers a space of equivalent programs and picks the best one.
The idea
 of using E-graphs for program optimization, 
 known as \emph{equality saturation (EqSat)}, is 
%  first explored in 2000s~\cite{denali,eqsat}.
 first explored in 2000s.
More recently,
%  Willsey et al.~\cite{egg} introduced \egg,
 Willsey et al. introduced \egg,
 a high-performance, generic EqSat library.
This project has had great impacts and
 led to a surge of interests in 
 using EqSat for program/query optimization.
Since 2021,
 there have been dozens of projects, both in academia and in industry,
 that use equality saturation.
% These projects cover a wide range of topics in domain-specific program optimization,
%  including 
%  floating-point computation~\cite{herbie}
%  computational fabrication~\cite{szalinski},
%  machine learning systems~\cite{tensat}, 
%  and hardware design \cite{diospyros,Coward2023AutomatingCD}---just to name a few.
These projects cover a wide range of topics in domain-specific program optimization,
 including 
 floating-point computation
 computational fabrication,
 machine learning systems, 
 and hardware design---just to name a few.
% There is also a growing interest in using EqSat
%  for query optimization in data management.
% For example,
%  EqSat is used to optimize queries in OLAP \cite{risinglight},
%  linear algebra \cite{spores},
%  tensor algebra \cite{storel},
%   and Datalog \cite{wang2022optimizing} settings.
There is also a growing interest in using EqSat
 for query optimization in data management.
For example,
 EqSat is used to optimize queries in OLAP,
 linear algebra,
 tensor algebra,
  and Datalog settings.
}

\revdelA{
  In a nutshell, an E-graph is a compact representation of a (possibly
  infinite) set of ground terms.  Equality saturation processes each
  identity $u \approx v$ by matching the term $u$ with the E-graph,
  then adding the term $v$ to the E-graph.  There are striking
  connections between equality saturation and database concepts.
  Zhang et al. observed that {\em matching} is
  the same as conjunctive query evaluation, and described significant
  speedups by using a Worst Case Optimal Join
  algorithm for matching.
  Recently, the \Egglog system unified EqSat and
  Datalog, allowing it to extend capabilities of \egg for program
  optimization and program analysis tasks.
}

\revinsA{Given a set of identities between terms, the word problem
  asks whether the identities imply two ground terms $t_1, t_2$
  are equivalent, i.e. $t_1\approx t_2$.  This fundamental problem 
  has applications including automated theorem proving, program
  verification, and query equivalence checking.  In his Ph.D.\
  thesis, Nelson~\cite{nelson} introduced a data structured called {\em
    E-graph} for efficiently answering the word problem.
  At the core, an E-graph is a compact representation of an
  equivalence relation over a possibly infinite set of ground terms.
  During the 2000s, researchers applied E-graphs to program
  optimization~\cite{denali, eqsat}. The compiler populates an E-graph with many
  equivalent programs, using axiomatic rewrites, then extracts the
  best program from the equivalent ones.  
  In particular, Tate et.al.~\cite{eqsat} coined the term \emph{equality
    saturation (EqSat)} and gave a procedural description of the
  algorithm.  In 2021, Willsey et al.~\cite{egg} proposed \egg, which
  introduced important algorithmic improvements and made EqSat
  practical.
  % The EqSat library in
  % \egg has had impact in the programming language community and led to
  % many projects that use E-graphs for program/query optimization.
  Since 2021, EqSat has been applied to a
  wide range of topics in domain-specific program optimization,
  including floating-point computation~\cite{herbie} computational
  fabrication~\cite{szalinski}, machine learning
  systems~\cite{tensat}, and hardware design
  \cite{diospyros,Coward2023AutomatingCD}.  There is also a growing
  interest in using EqSat for query optimization in data management.
  For example, EqSat is used to optimize queries in
  OLAP~\cite{risinglight}, linear algebra~\cite{spores}, tensor
  algebra~\cite{storel}, and Datalog~\cite{wang2022optimizing}.  }

\revinsA{The equality saturation procedure consists of
  repeatedly selecting an identity $u \approx v$ from the given set,
  matching the term $u$ with the E-graph, then adding the term $v$ to
  the E-graph, if it wasn't already there.  Equality saturation
  terminates when no new terms can be added.  There are striking
  connections between equality saturation and database concepts.
  Zhang et al.~\cite{relational-ematching} observed that the {\em
    matching} step is the same as conjunctive query evaluation, and
  described significant speedups in \egg by using a Worst Case Optimal
  Join algorithm~\cite{DBLP:journals/sigmod/NgoRR13} for matching.  A
  recent system, \Egglog~\cite{egglog}, unified EqSat and Datalog
  to improve \egg's support for program optimization
  and program analysis.  }

  In this paper we study another deep connection between equality
  saturation and the chase procedure for Tuple Generating Dependencies
  (TGDs) and Equality Generating Dependencies
  (EGDs)~\cite{FAGIN200589}.  Our hope is that these results will help
  solve some of the open problems in equality saturation by using
  techniques and results for the chase procedure.  Before describing
  our results we give a gentle introduction to EqSat and
  describe some of its open problems.

% \mysubparagraph{An example of EqSat}
  \revdelA{ Equality saturation starts by constructing a tree
    automaton (known in the EqSat community as \emph{E-graphs})
    accepting exactly the program to be optimized, as shown on the
    left in Figure 1.  Then, it applies identities to the tree
    automaton, expanding it to represent more and more equivalent
    programs.  To apply an identity, we first match the left-hand side
    against states in the tree automaton.  We will define formally
    what it means for a state to match a pattern in~Sec.~3, but for
    our example, we see that the pattern $f(x, y)$ matches the states
    $c_2$, $c_3$, and $c_4$.  For every match, we instantiate the
    pattern variables with the children states of the match, and
    insert the right-hand side of the identity into the tree
    automaton.  The right side of Figure 1 shows the result of
    inserting $g(x, y)$ for every match to the automaton.  At this
    point, we see that the automaton now accepts any term of the form:
    $h(h(h(a, a),h(a, a)), h(h(a, a),h(a, a)))$
    where $h$ is either $f$ or $g$, totaling $2^7$ terms.  We repeat
    matching and applying identities until either reaching a fixpoint
    (thereby ``saturating'' the automaton), or exhausting some given
    time or memory budget.  At the end of this process, we will have
    grown the automaton to represent a large number of equivalent
    programs.
    % The final step is to extract the best expression from the
    % automaton according to some cost model.  Various algorithms are
    % available for this purpose, for example Knuth's algorithm for
    % extracting the minimal term from a weighted context-free
    % grammar~\cite{DBLP:journals/ipl/Knuth77}.  Such algorithms are
    % not the focus of this paper, and we will focus on the process of
    % growing the automaton from now on.
    The final step is to extract the best program from the automaton
    according to some cost model (e.g., using Knuth's algorithm).  }
\begin{exm} \label{ex:intro:f:g}
  \revinsA{Consider a simple language with two binary operators $f,g$ 
    and constant $a$. We want to optimize the following term
    $t$ (the ``8th power'' of $f$ on $a$):
    \begin{align}
      t = & f(f(f(a, a),f(a, a)), f(f(a, a),f(a, a))) \label{eq:intro:t}
    \end{align}
    We are given a single identity, $f(x, x) \approx g(x, x)$, which
    says two terms $f(t_1,t_2)$ and
    $g(t_1,t_2)$ are equivalent, provided that $t_1, t_2$ are
    equivalent.  
    Starting with the initial term $t$, 
    EqSat constructs an E-graph $G$ and grows it
    to represent all terms equivalent to $t$. 
    The literature defines an E-graph as a set of
    {\em E-classes}, where each E-class is a set of {\em E-nodes}, and
    each E-node is labeled with a function symbol and has a number of
    E-class children equal to the arity of the symbol. EqSat
    starts by constructing an E-graph $G$ representing $t$, shown on the left in
    \autoref{fig:egraph}.  There are 4 E-classes, each consisting of
    one single E-node; the E-class $c_4$ represents precisely the term
    $t$.  Next, EqSat repeatedly applies the identity
    $f(x,x) \approx g(x,x)$, by matching the left-hand side $f(x,x)$ to the
    E-graph, then adding the right-hand side $g(x,x)$ to the E-graph: we formalize
    this in~\autoref{sec:defintion}.  The resulting E-graph $H$ is on
    right of Figure~\ref{fig:egraph}.  There are 4 E-classes,
    $c_1, \ldots, c_4$, each consisting of 1 or 2 E-nodes.  For
    example, $c_4$ has two E-nodes, and represents two equivalent
    terms, $f(t_1, t_2)\approx g(t_1,t_2)$, where $t_1, t_2$ are any
    terms represented by $c_3$.  By continuing this reasoning, we
    observe that $c_4$, represents a total of $2^7$ possible terms,
    namely all terms of the form:
    $h(h(h(a, a),h(a, a)), h(h(a, a),h(a, a)))$
    where each $h$ can be either $f$ or $g$.
  }
\end{exm}

\begin{wrapfigure}{r}{0.3\textwidth}
  \vspace{-1cm}
    \centering
    \begin{subfigure}[b]{0.3\linewidth}
      \centerline{$G:$}
      \centerline{\includegraphics[height=4cm]{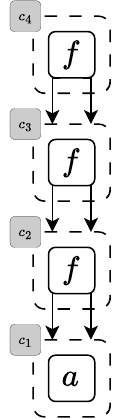}}
    \end{subfigure}
    \hfill
    \begin{subfigure}[b]{0.65\linewidth}
      \centerline{$H:$}
      \centerline{\includegraphics[height=4cm]{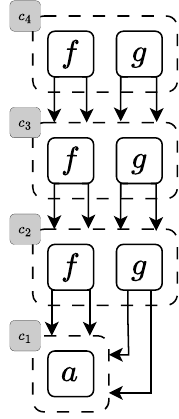}}
    \end{subfigure}
    \caption{\revinsA{Two E-graphs $G, H$, before and after EqSat.}
        \revdelA{Example automata used in equality saturation.
        Each state is shown by a dotted box with label.  The transition
        $f(c_1, c_1) \rightarrow c_2$ is shown by the solid box
        $f(c_1, c_1)$ in the E-class $c_2$.}}
    \label{fig:egraph}
  \vspace{-0.5cm}
\end{wrapfigure}

\mysubparagraph{Open problems about EqSat}
We still understand very little about equality saturation.
Most descriptions of EqSat focus on 
 an imperative understanding\footnote{
  A notable exception is \texttt{egglog}~\cite{egglog},~whose semantics is based on fixpoints
  instead of implementation details.
  Some early works also define E-graphs (under a different name like abstract congruence closure) 
  as tree automata similar to ours~\cite{snyder1993rgrs,abstract-congruence-closure,gulwani2005join}.
 } of equality saturation and E-graphs.
E-graphs are described by their individual components (e.g., a hashconsing data structure,
 a union-find, etc.),
 and EqSat is commonly defined in pseudocode as a sequence of operations.
\revdelA{
 As a result, they cannot talk about the semantics of EqSat in cases EqSat does not terminate,
 and fall short of describing the properties that EqSat enjoys.}
\revinsA{In other words, the semantics of EqSat is the output of the
  specific algorithm, if it terminates; if the algorithm diverges, the
  semantics is undefined.}

We also do not know much about when  EqSat  terminates.
The termination problem  asks, given a set of rewrite rules,
 whether EqSat terminates on  a given input E-graph, or whether it terminates
 on all input E-graphs.
This is a fundamental problem of EqSat and has applications in program/query optimization
 and equivalence checking:
If EqSat terminates on the symmetric closure of a set of (variable-preserving) rewrite rules~$\trs$,
 it decides the word problem of the equational theory defined by $\trs$ (\autoref{lemma:representation}).
With an appropriate cost model,
 EqSat can further pick the optimal program among all programs
 equivalent to the input, e.g. by using Knuth's
 algorithm~\cite{DBLP:journals/ipl/Knuth77}.

 \mysubparagraph{Our contribution}
 \revdelA{We first introduce in Section~3 a new formal definition of
   E-graphs and equality saturation, based on tree automata: }
 \revinsA{After a review of some background material
   in~\autoref{sec:background}, we introduce E-graphs and EqSat
   in~\autoref{sec:defintion}.  Our definition,
   in~\autoref{subsec:egraph}, applies even to the cases when equality
   saturation does not terminate and, for that purpose, we define the
   E-graph to be a reachable, deterministic tree automaton, with
   possibly infinitely many states.  By explicitly allowing infinite
   E-graphs we can define a formal semantics even when EqSat does not
   terminate.  We show that concepts in tree automata are in 1-1
   correspondence with those in E-graphs: the automaton states
   correspond to E-classes, and the transitions correspond to
   E-nodes.}  A term is represented by an E-graph iff it is accepted
 by the E-graph viewed as a standard tree automaton.
%
% Next, we propose to study the relationships between E-graphs based on their homomorphisms:
\revinsA{We  prove that, for any
two E-graphs there exists at most one homomorphism between them, and,
therefore, E-graphs are \emph{rigid} tree automata.}
%%%%%% We haven't discussed cores yet, so the next sentence is unclear.
% , a stronger notion
% than core~\cite{chase-revisited}.
% The E-graphs form a complete
% lattice under the partial order given by the existence of a
% homomorphism. 
Next, in~\autoref{sec:egraph-operations}, we define a few basic
operations on E-graphs, such as E-matching, insertion, congruence
closure, and least upper bounds, by relying on tree-automata concepts.
 %  TODO: if we decide to include intersection in the paper, then keep this. otherwise, remove this.
 % Yihong: I decide not to. Since 
 % the product construction used in the intersection 
 % requires an infinite product of infinite E-graphs,
 % which is not even countable.
 % The product construction works nicely in the finite case, though.
% As a corollary, E-graphs form a complete lattice.
%
\revinsB{Using these operations, we define in~\autoref{subsec:eqsat}}
an \emph{immediate consequence operator (ICO)}, and define EqSat 
formally as the least fixpoint of the ICO.  The least
fixpoint always exists and is unique, \revinsA{even if the fixpoint
  procedure does not terminate, in which case the least fixpoint may
  be infinite.}
% 
% Our definition of EqSat is monotonic with regard to the partial order 
%  established earlier, and we characterize the set of represented terms produced 
%  by EqSat.
Finally, we prove an important lemma, called the \emph{Finite
  Convergence Lemma}, stating that, if the least fixpoint is
finite, then equality saturation procedure converges in
\revdelC{finite}\revinsC{finitely many} steps.  This is not
immediately obvious because, while E-matching and insertion strictly
increase the size of the E-graph, congruence closure may decrease it.
A similar proposition fails for TGDs and EGDs: there exists an infinite
chase where all instances have bounded size, hence its ``limit'' is
finite.
% \rw{Is there a citation for this?}
%
% \dan{Please note that I called the ICO the composition of
%   matching/insertion and congruence closure; previously the ICO was
%   only the matching/insertion step.  This is more natural, since we
%   want to say that the fixpoint is of the ICO.}
% \yz{Great! I changed what used to be called ICO to be called the apply/match operator}

Next, in \autoref{sec:eqsat:and:chase} we describe the connection
between EqSat and the chase.  \revinsA{After a brief review of the
  chase in~\autoref{sec:chase}}, we start by 
presenting a reduction from the Skolem chase to equality saturation,
denoted $\sklchase\Rightarrow \eqsat$ (\autoref{sec:chase-to-eqsat}), then
from equality saturation to the standard chase, denoted $\eqsat \Rightarrow
\stdchase$ (\autoref{sec:eqsat-to-chase}).
For  $\sklchase\Rightarrow \eqsat$,
 given a set of dependencies, 
 we show there exists a set of rewrite rules 
 where EqSat produces an encoded result of the Skolem chase and has the same termination behavior.
For $\eqsat \Rightarrow \stdchase$,
 we show \revinsC{that,} given a set of rewrite rules,
 there exists a set of dependencies 
 where the standard chase produces an encoded result of EqSat (whether it terminates).
Since the standard chase is a non-deterministic process,
 we characterize the type of chase sequences that terminate
 when EqSat terminates (\autoref{thm:eqsat-to-chase}).
We call them \emph{EGD-fair} chase sequences.
Intuitively, a chase sequence is called EGD-fair if it applies EGDs to a fixpoint frequently enough.
We show in \autoref{thm:eqsat-to-chase} that,

\begin{tabular}{rl}
  EqSat terminates &$\Leftrightarrow$ one chase sequence terminates\\
   &$\Leftrightarrow$
 all EGD-fair chase sequences terminate.
\end{tabular}

\noindent
The notion of EGD-fair chase sequences is of independent interest.

Finally, we present our main decidability results for EqSat in
\autoref{sec:terminations}: we show that the single-instance
termination problem of EqSat, denoted as $\mathcal{T}^\eqsat_G$, is
R.E.-complete, and the all-term-instance termination problem of EqSat,
denoted as $\mathcal{T}^\eqsat_{\forall t}$, is $\Pi_2$-complete.  Our
proof is based on a non-trivial reduction from the Turing machine,
first presented in the undecidability proof of the finiteness of
congruence classes defined by string rewriting
systems~\cite{narendran1985complexity}.  While the single-instance
case easily follows from the undecidability of Skolem chase
termination, our approach allows us to also prove the
$\Pi_2$-completeness of the all-term-instance termination case by a
reduction from the universal halting problem.  We also show the
all-E-graph-instance termination problem of EqSat, denoted as
$\mathcal{T}^\eqsat_{\forall G}$, is undecidable, although the exact
upper bound is open.

% \yz{Max said the following paragraph doesn't have a super clear takeaway. What do you think? Should we delete?}
We contrast the termination problems of EqSat with those of the Skolem chase and the standard chase.
The single-instance termination problems are R.E.-complete in all three cases~\cite{marnette2009generalized,chase-revisited},
 and the all-instance termination of the Skolem chase ($\mathcal{T}^\sklchase_\forall$) is R.E.-complete 
 as well~\cite{marnette2009generalized,gogacz2014allinstance}.
This shows that $\mathcal{T}^\sklchase_\forall$ is easier than $\mathcal{T}^\eqsat_{\forall t}$.
The case for the standard chase is more interesting.
% since there can be many different chase sequences,
There are two all-instance termination problems
 of the standard chase: for all database instances, whether all chase sequences terminate
 in \revdelC{finite}\revinsC{finitely many} steps ($\mathcal{T}^\stdchase_{\forall,\forall}$), 
 and whether there exists at least one chase sequence that terminate 
 ($\mathcal{T}^\stdchase_{\forall,\exists}$).
It has been shown
 $\mathcal{T}^\stdchase_{\forall,\exists}$ is $\Pi_2$-complete~\cite{grahne18anatomy},
 but the exact complexity of $\mathcal{T}^\stdchase_{\forall,\forall}$ is open.
Grahne and Onet showed if we allow one \emph{denial constraint},
 $\mathcal{T}^\stdchase_{\forall,\forall}$ is $\Pi_2$-complete~\cite{grahne18anatomy},
 although Gogacz~and~Marcinkowski~\cite{gogacz2014allinstance} conjectured
 that this problem is indeed in R.E.

% \yz{Let's remove this paragraph, all-instance termination of the Skolem chase is 
%  not directly comparable to the all-term-instance termination of EqSat. 
% It is only comparable to the all-E-graph-instance termination, which we don't show the $\Pi_2$ completeness} 
% One may wonder whether there is a straightforward encoding of $\eqsat\Rightarrow\sklchase$
%  or $\stdchase\Rightarrow\eqsat$.
% For the former, an encoding would be impossible, 
%  since the all-instance termination of the Skolem chase
%  is in R.E., while the all-instance termination of equality saturation is $\Pi_2$-complete 
%  (\autoref{thm:all-term-instance}); 
%  for the latter, although both are Turing complete,
%  EqSat is deterministic and the standard chase is non-deterministic,
%  so such an encoding, while possible, requires some non-trivial conversions.

% We adapt our result on chase termination back to the chase and
%  show that the all-instance, all-EGD-fair-path termination 
%  ($\mathcal{T}^\stdchase_{\forall,\forall=}$)
%  is undecidable.

 In \autoref{sec:acyclicity} we propose a sufficient syntactic criterion 
 that guarantees EqSat termination, called \emph{weak term
   acyclicity}, which is based on the classic notion of \emph{weak
   acyclicity}~\cite{FAGIN200589}.  If a set of rewrite rules is
 weakly term acyclic, then EqSat terminates for all input E-graphs.

%%%%% Comment out for now
% To summarize, we make the following contributions:

% \begin{itemize}
%   \item We present a tree automata--theoretic semantics to E-graphs and EqSat 
%   and prove several lemmas using our semantics (\autoref{sec:defintion}).
%   \item We show how to reduce the Skolem chase to EqSat (\autoref{sec:chase-to-eqsat}) and 
%   EqSat to the standard chase (\autoref{sec:eqsat-to-chase}).
%   \item 
%   We characterize the type of chase sequences that terminate
%   when EqSat terminates, which we call \emph{EGD-fair} chase sequences (\autoref{sec:eqsat-to-chase}).
%   The notion of EGD-fair chase sequences is of independent interests.
%   \item In \autoref{sec:terminations}, we show that the single-instance termination problem of EqSat is R.E.-complete,
%   the all-term-instance termination problem of EqSat is $\Pi_2$-complete,
%   and the all-E-graph-instance termination problem of EqSat is undecidable.
%   % \item 
%   % We show that the all-instance, all-EGD-fair-sequences termination problem of
%   % the standard chase
%   % is undecidable ((\autoref{sec:terminations})).
%   \item We define a syntactic condition called \emph{weak term acyclicity} (\autoref{sec:acyclicity})
%   that ensures EqSat termination.
% \end{itemize}

% \dan{If we run out of space, we can remove the list above.  The
%   contributions are already explained clearly.}

\section{Background}
\label{sec:background}

% Yihong: there is a compiler error when using autoref and uline at the same time that I don't know how to fix.
\revdelA{The section on the backgrounds of the chase procedure is moved to Section 4.}

\subsection{Term Rewriting Systems}

\label{eq:trs}

We review briefly the standard definition of a term rewriting system
from~\cite{traat}.  A \emph{signature} is a finite set $\Sigma$ of
function symbols with given arities.  If $V$ is a set of variables,
then $T(\Sigma, V)$ denotes the set of terms constructed inductively
using symbols from $\Sigma$ and variables from $V$.  Members of
$T(\Sigma, V)$ are called \emph{patterns}, and members of
$T(\Sigma) \defeq T(\Sigma, \emptyset)$ are called \emph{ground
  terms}, or simply \emph{terms} thereafter.  A \emph{substitution} is
a function $\sigma : V \rightarrow T(\Sigma)$; if $u$ is a pattern,
then we denote by $u[\sigma]$ the term obtained by applying the
substitution $\sigma$ to $u$.  A \emph{rewrite rule} $r$ has the form
$\lhs\rightarrow\rhs$ where \lhs{} and \rhs{} are patterns and the
variables in \rhs{} are a subset of those \lhs{},
$\varset(\rhs)\subseteq\varset(\lhs)$.  A \emph{term rewriting system}
(TRS), $\trs$, is a set of rewrite rules.  $\trs$ defines a
\emph{rewrite relation} $\rightarrow_{\trs}$ as follows:
$\lhs[\sigma] \rightarrow_{\trs} \rhs[\sigma]$ for any substitution
$\sigma$ and rule $\lhs\rightarrow \rhs$ in $\trs$, and, if
$u \rightarrow_{\trs} v$ then
$f(w_1, \ldots, w_{i-1},u,w_{i+1},\ldots w_k)\rightarrow_{\trs}f(w_1,
\ldots, w_{i-1},v,w_{i+1},\ldots w_k)$ for any function symbol
$f \in \Sigma$ of arity $k$, and any terms $w_j$, $j=1,k; j\neq i$.
%
% $u\rightarrow_{\trs} v$ if and only if there exists a rule $(\lhs\rightarrow \rhs)\in R$ and a substitution $\sigma$ such that a subterm of $u$ at path $p$ is $l\sigma$, and $v$ is $u$ whose subterm at path $p$ is substituted with $r\sigma$.
% \rw{Can you just say $u \rightarrow_{\trs} v$ iff there exists a rule $\lhs \rightarrow \rhs$ and 
% a substitution $\sigma$ such that $u = \sigma(lhs)$ and $v = \sigma(rhs)$? I.e. not mention paths.}
% \yz{But consider rewrite rule $a\rightarrow b$ and term $f(a)$ and its rewritten form $f(b)$, we need to use path to specify over 
%  which subterm the substitution happens. I wonder if we can just handwave this here.}
%
%%%% we actually don't omit it consistently.  Let's just keep it everywhere
% We omit the subscript $\trs$ when it is clear from the context.
Let $\rightarrow_{\trs}^*$ be the reflexive and transitive closure of
$\rightarrow_{\trs}$.  We define
$(\leftarrow_{\trs})\defeq(\rightarrow_{\trs})^{-1}$,
$(\leftrightarrow_{\trs})\defeq(\rightarrow_{\trs})\cup
(\leftarrow_{\trs})$, and
$(\approx_{\trs})\defeq(\leftrightarrow_{\trs}^*)$.  $\approx_{\trs}$
is a congruence relation.  We define the set of reachable terms
$R^*(t)=\{t' \mid t\rightarrow_{\trs}^* t'\}$.  If a term rewriting
system $\trs$ is variable-preserving (i.e.,
$\varset(\lhs)=\varset(\rhs)$ for all rules), we define
$\trs^{-1}=\setof{\rhs\rightarrow\lhs}{(\lhs\rightarrow\rhs)\in
  \trs}$.  It follows that
$(\rightarrow_{\left(\trs^{-1}\right)})=(\leftarrow_{\trs})$.

% We call a term rewriting system left-linear (resp. right-linear) if variables in the left-hand side (resp. right-hand side) pattern of each rewrite rule occur only once.
% For example, $R_1=\{f(x,y)\rightarrow g(x, x)\}$ is left-linear (but not right linear).
% A TRS is linear if it's left-linear and right-linear.

\subsection{Tree automata}
\label{sec:tree-automata}

Let $\Sigma$ be a signature.  A (bottom-up) tree automaton is a tuple
$\mathcal{A}=\langle Q, \Sigma, Q_{\textit{final}}, \Delta\rangle$,
where $Q$ is a (potentially infinite)\footnote{ In this paper, we
  allow tree automata (and thus E-graphs) to have an infinite number
  of states and transitions.  Talking about infinite E-graphs allow us
  to define the semantics of equality saturation even when the
  algorithm does not terminate.  } set of states,
$Q_{\textit{final}}\subseteq Q$ is a set of final states, and
$\Delta$ is a set of transitions of the form
$f(q_1,\ldots, q_n)\rightarrow q$ where $q,q_1,\ldots, q_n\in Q$, and
$f \in \Sigma$.  Denote by $\Sigma \cup Q$ the signature $\Sigma$
extended with $Q$ where each state is viewed as a symbol of arity
0. Then $\Delta$ is a term rewriting system for $\Sigma \cup Q$, and
we will denote by $\rightarrow_{\calA}^*$ (rather than
$\rightarrow_\Delta^*$) the rewrite relation defined by $\Delta$.  A
term $t \revinsC{\in T(\Sigma)}$ is accepted by a state $q$ if
$t\rightarrow_{\calA}^* q$, and we write $\calL(q)$ for the set of
terms accepted by $q$.  The language accepted by $\calA$ is
$\calL(\calA) \defeq \setof{t \revinsC{\in T(\Sigma)}}{\exists
  q_{f}\in Q_{\textit{final}},\,t \rightarrow^*_{\calA} q_{f}}$.  A
tree language $L\subseteq T(\Sigma)$ is called regular if it is
accepted by some \emph{finite} tree automaton.

Fix two tree automata
$\calA = \langle Q, \Sigma, Q_{\textit{final}},\Delta\rangle, \calB=\langle Q', \Sigma,
Q'_{\textit{final}},\Delta'\rangle$.  A \emph{homomorphism},
$h :\calA \rightarrow \calB$, is a function $h:Q \rightarrow Q'$ that
maps final states to final states, and, for every transition
$f(c_1,\ldots,c_k)\rightarrow c$ in $\calA$ there exists a transition
$f(h(c_1),\ldots, h(c_k))\rightarrow h(c)$ in $\calB$.  An
\emph{isomorphism}\footnote{Notice
that a bijective homomorphism is not necessarily an isomorphism.} is a homomorphism $h : \calA \rightarrow \calB$ for
which there exists an inverse homomorphism
$h^{-1}: \calB \rightarrow \calA$ such that  $h^{-1}\circ h = id_{\calA}$
and $h\circ h^{-1} = id_{\calB}$.
 The following holds:

\begin{lem} \label{lemma:homomoprhism:simple} 
  \revdelA{If
  $h : \calA \rightarrow \calB$ is a homomorphism and
  $t \rightarrow^*_{\calA} c$, then $t \rightarrow^*_{\calB} h(c)$.}
  \revinsA{
    Let $h : \calA \rightarrow \calB$ be a homomorphism,
    $t\in T(\Sigma)$, and $c$ be a state of $\calA$.
    If $t \rightarrow^*_{\calA} c$, then $t \rightarrow^*_{\calB} h(c)$.
  }
  In particular, $\calL(\calA) \subseteq \calL(\calB)$.
\end{lem}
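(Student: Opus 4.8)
The statement has two parts: (1) if $t \rightarrow^*_{\calA} c$ then $t \rightarrow^*_{\calB} h(c)$, and (2) consequently $\calL(\calA) \subseteq \calL(\calB)$. The natural approach is structural induction on the term $t \in T(\Sigma)$ (equivalently, induction on the length of the derivation $t \rightarrow^*_{\calA} c$). Let me sketch the first part.

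Write $t = f(t_1, \ldots, t_k)$ with $f \in \Sigma$ of arity $k$ (the case $k = 0$, a constant, is the base case and is handled uniformly by the same argument). Since $\calA$ is a bottom-up tree automaton, any derivation $t \rightarrow^*_{\calA} c$ must first reduce each subterm $t_i$ to some state $c_i$ of $\calA$, i.e. $t_i \rightarrow^*_{\calA} c_i$, and then apply a transition $f(c_1, \ldots, c_k) \rightarrow c$ from $\Delta$. (This decomposition is the standard fact that acceptance in a bottom-up tree automaton proceeds leaves-to-root; one should state it, citing the term-rewriting presentation of $\Delta$ given in Section \ref{sec:tree-automata}.) By the induction hypothesis applied to each $t_i$, we get $t_i \rightarrow^*_{\calB} h(c_i)$. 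By the definition of homomorphism, the transition $f(c_1,\ldots,c_k)\rightarrow c$ in $\calA$ has a counterpart $f(h(c_1),\ldots,h(c_k)) \rightarrow h(c)$ in $\calB$. Chaining these, $t = f(t_1,\ldots,t_k) \rightarrow^*_{\calB} f(h(c_1),\ldots,h(c_k)) \rightarrow_{\calB} h(c)$, which gives the claim.

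For part (2): if $t \in \calL(\calA)$, then $t \rightarrow^*_{\calA} q_f$ for some $q_f \in Q_{\textit{final}}$. By part (1), $t \rightarrow^*_{\calB} h(q_f)$, and since $h$ maps final states to final states, $h(q_f) \in Q'_{\textit{final}}$, so $t \in \calL(\calB)$.

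**Expected main obstacle.** There is no deep obstacle here — the lemma is essentially a bookkeeping exercise. The one point requiring care is the first step of the inductive argument: justifying that every accepting run of a bottom-up tree automaton on $f(t_1,\ldots,t_k)$ factors through states $c_i$ for the immediate subterms before applying an $f$-transition. This is intuitively obvious but needs a clean statement, especially because $\Delta$ is being treated as a term rewriting system on $\Sigma \cup Q$ and one is reasoning about $\rightarrow^*_\calA$ rather than about runs directly; one should either invoke the standard equivalence between the rewriting and run-based formulations of tree-automaton acceptance, or prove the factorization by a short sub-induction on derivation length, noting that reductions inside distinct subterms are independent (confluence of the positions). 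Once that is in hand, the rest is immediate.
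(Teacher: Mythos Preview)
Your proposal is correct and follows essentially the same approach as the paper: structural induction on $t$, factoring the derivation $f(t_1,\ldots,t_k)\rightarrow^*_{\calA} c$ through intermediate states $c_i$ and an $f$-transition, then applying the induction hypothesis and the homomorphism property. You are slightly more explicit than the paper in two places (you spell out the language-inclusion consequence via final states, and you flag the factorization of the run as the one point needing justification), but the argument is the same.
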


\begin{proof}
  We prove the statement by induction on the structure of the term
  $t \in T(\Sigma)$.  Assuming $t = f(t_1, \ldots, t_k)$ for $k\geq 0$\footnote{
    \revinsB{The base case is covered by the case $k=0$.}}
  and $t\rightarrow^*_{\calA} c$, then there exists states
  $c_1, \ldots, c_k$ such that $t_i \rightarrow^*_{\calA} c_i$ and
  $\calA$ contains the transition $f(c_1, \ldots, c_k) \rightarrow c$.
  By induction hypothesis $t_i \rightarrow^*_{\calB} h(c_i)$ for
  $i=1,\ldots, k$, and since $h$ is a homomorphism, there exists a transition
  $f(h(c_1),\ldots,h(c_k)) \rightarrow h(c)$ in $\calB$, proving that
  $t \rightarrow^*_{\calB} h(c)$.
\end{proof}

We write $\calA \sqsubseteq \calB$ whenever there exists a homomorphism
$\calA \rightarrow \calB$.  Observe that $\sqsubseteq$ is a preorder
relation.
\revinsA{In the next section, we show that 
 this preorder relation $\sqsubseteq$ becomes a partial order when restricted to E-graphs (\autoref{lem:egraph-partial-order}).}

We call an automaton $\calA$ \emph{deterministic} if
$t\rightarrow^*_{\calA} q_1$ and $t\rightarrow^*_{\calA} q_2$ implies
$q_1=q_2$ for \revinsA{states $q_1,q_2$}.  We call $\calA$ \emph{reachable} if every state $q$
accepts some ground term: $\exists t \in T(\Sigma)$,
$t\rightarrow^*_{\calA} q$.

\section{E-graphs and Equality Saturation}
\label{sec:defintion}

Most papers discussing E-graphs use an operational definition not 
suitable for a theoretical analysis.  We introduce an
equivalent definition of E-graphs in terms of tree
automata, similar to Kozen's partial tree automata~\cite{kozen1993partial}.
Throughout this section we fix the signature $\Sigma$.

\subsection{E-graphs}

\label{subsec:egraph}

\begin{dfn} \label{def:egraph} An \emph{E-graph} is a deterministic
  and reachable tree automaton $G = \langle Q, \Sigma, \Delta\rangle$
  (without a set of final state $Q_{\textit{final}}$).
\end{dfn}

\revdelA{ The convention in the E-graphs literature is to call a state
  $q \in Q$ an \emph{E-class} and a transition
  $f(q_1,\ldots, q_n)\rightarrow q$ an \emph{E-node}; furthermore, the
  set of E-classes and E-nodes are denoted $C$ and $N$ respectively.
  We use states/E-classes and transitions/E-nodes interchangeably in
  this paper.  For most of our discussion we could ignore the final
  states $Q_{\textit{final}}$ (as done in~\cite{kozen1993partial}); we
  included them in Definition 2 to keep the analogy with tree
  automata.  If a term $t$ is accepted by $c$ (i.e.
  $t \rightarrow^*_G c$) then we say that $t$ is \emph{represented} by
  the E-class $c$.
% 
% To follow the convention of E-graphs literatures, we use $C$ to denote the set of E-classes
%  and $N$ to denote the set of E-nodes,
%  and we say term $t$ is represented by an E-class $c$
%  to mean $t$ is accepted by $c$.
}

\revinsA{ Our definition maps one-to-one to the classical definition
  of E-graphs: An E-class is a state $c \in Q$ of the tree automaton,
  and an E-node is a transition $f(c_1,\ldots,c_k) \rightarrow c$.  A
  term $t$ is \emph{represented} by the E-class $c$ if $t$ is accepted
  by $c$, i.e.  $t \rightarrow^*_G c$.  In the literature, the sets of
  E-classes and E-nodes are denoted $C$ and $N$ respectively.  We will
  use states/E-classes and transitions/E-nodes interchangeably in this
  paper.  E-graphs do not define a set of ``final'' E-classes, and for
  that reason we omit the final states $Q_{\textit{final}}$
  from~\autoref{def:egraph}\footnote{
    Alternatively, consider $Q_{\textit{final}}=Q$.
  }, similarly to~\cite{kozen1993partial}.
  % 
  % One minor caveat of our tree automata based definition is that the
  % final states of E-graphs usually do not play any role, so for most
  % of our discussion we equate the final states as the set of all
  % states (i.e., $Q_{\textit{final}}=Q$); we will drop
  % $Q_{\textit{final}}$ from the definition of an E-graph.
}

\revinsA{
\begin{exm}
  The E-graph $H$ in \autoref{fig:egraph} is the automaton
  $\langle Q, \Sigma, \Delta\rangle$, where
  $\Sigma = \set{a,f(\cdot, \cdot),g(\cdot, \cdot)}$, there are four
  states $Q = \set{c_1,\ldots,c_4}$, and $\Delta$ consists of seven
  transitions:
    \begin{align*}
      a() \rightarrow & c_1 & f(c_1,c_1) \rightarrow & c_2 & g(c_1,c_1) \rightarrow & c_2
&&\ldots &f(c_3,c_3) \rightarrow & c_4 & g(c_3,c_3) \rightarrow & c_4
    \end{align*}
    An example of rewritings is
    $f(a,a) \rightarrow_H f(c_1,a) \rightarrow_H f(c_1,c_1)
    \rightarrow_H c_2$, showing that the term $f(a,a)$ is represented
    by the E-class $c_2$.
\end{exm}
}

\revinsA{It is folklore that E-graphs represent equivalences of
  terms.  We make this observation formal, by defining the semantics
  of an E-graph to be a certain partial congruence.}  A {\em partial
  equivalence relation}, or PER, on a set $A$ is a binary relation
$\approx$ that is symmetric and transitive.  Its {\em support} is the
set $\supp(\approx) \defeq \setof{x}{x\approx x}\subseteq A$.
Equivalently, a PER can be described by its support and an equivalence
relation on the support.
%%% abstract congruence closure actually does full congruence relation.
%%% Kozen's partial tree automata focuses on congruence relations over partial algebras,
%%% which is essentially PERs here.
% PERs were used in abstract
% congruence closure~\cite{abstract-congruence-closure,gulwani2005join};
% they are closed under intersection~\cite{gulwani2005join}.
A PER on the set of terms $T(\Sigma)$ is \emph{congruent} if
$s_i\approx t_i$ for $i=1,\ldots,n$ and
$f(s_1,\ldots, s_n)\in \supp(\approx)$ implies
$f(s_1, \ldots, s_n)\approx f(t_1,\ldots,t_n)$.  A PER is reachable if
$f(s_1, \ldots, s_n) \in \supp(\approx)$ implies
$s_i \in \supp(\approx)$, for $i=1,\ldots,n$.  A {\em Partial
  Congruence Relation (PCR)}\footnote{PCRs are studied in the
  literature as congruences on partial algebras
  (e.g.,~\cite{kozen1993partial}).} on $T(\Sigma)$ is a congruent and
reachable PER.

An E-graph $G$ induces a PCR $\approx_G$ defined as follows:
$t_1 \approx t_2$ if there exists some E-class $c$ in $G$ that accepts
both $t_1$ and $t_2$, i.e.
$t_1\rightarrow^*_G c \tensor*[^*_G]{\leftarrow}{} t_2$.  We check
that $\approx_G$ is a PCR:
% on $T(\Sigma,\emptyset)$:
%
$\approx_G$ is symmetric by definition, and transitivity follows from
determinacy, because
$t_1\rightarrow^*_G c \tensor*[^*_G]{\leftarrow}{} t_2$ and
$t_2\rightarrow^*_G c' \tensor*[^*_G]{\leftarrow}{} t_3$ implies
$t_1\rightarrow^*_G c=c' \tensor*[^*_G]{\leftarrow}{} t_3$.  Suppose
$f(s_1, \ldots, s_n) \rightarrow^*_G c$: then there exists states
$c_i$ s.t.  $s_i \rightarrow^*_G c_i$, and a transition
$f(c_1, \ldots, c_n) \rightarrow c$, proving reachability; if, in
addition, $s_i \approx_G t_i$ for $i=1,\ldots, n$, then
$t_i \rightarrow^*_G c_i$, which implies
$f(t_1,\ldots, t_n)\rightarrow^*_G c$, proving congruence,
$f(s_1, \ldots, s_n)\approx_G f(t_1,\ldots,t_n)$.

\begin{dfn}
  The \emph{semantics} of an E-graph $G$ is the PCR $\approx_G$. 
\end{dfn}

\begin{thm}
  For any PCR $\approx$ \revinsA{over $T(\Sigma)$} there exists a
  unique $G$ such that $(\approx_G) = (\approx)$.
\end{thm}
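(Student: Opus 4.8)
The plan is to prove existence by an explicit quotient construction and uniqueness by exhibiting a pair of mutually inverse homomorphisms.

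\textbf{Existence.} Given a PCR $\approx$ on $T(\Sigma)$, I would let the states of $G$ be the $\approx$-equivalence classes of the support, $Q\defeq\setof{[t]}{t\in\supp(\approx)}$ with $[t]\defeq\setof{t'}{t'\approx t}$ (a genuine partition of $\supp(\approx)$, since $\approx$ restricted to its support is an equivalence relation), and take the transitions to be $\Delta\defeq\setof{f([s_1],\ldots,[s_n])\rightarrow[f(s_1,\ldots,s_n)]}{f(s_1,\ldots,s_n)\in\supp(\approx)}$. The first point to verify is that $\Delta$ is well-defined, i.e. independent of the chosen representatives: this is exactly where \emph{congruence} of $\approx$ enters, since $s_i\approx s_i'$ for all $i$ gives $f(s_1,\ldots,s_n)\approx f(s_1',\ldots,s_n')$ and hence $[f(s_1,\ldots,s_n)]=[f(s_1',\ldots,s_n')]$. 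Note also that \emph{reachability} of $\approx$ guarantees $s_i\in\supp(\approx)$ whenever $f(s_1,\ldots,s_n)\in\supp(\approx)$, so each $[s_i]$ is a legal state.

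The technical core is the following characterization of the behaviour of $G$, which I would prove by structural induction on $t\in T(\Sigma)$:
\[
  t\rightarrow^*_{G} q \quad\Longleftrightarrow\quad t\in\supp(\approx)\ \text{and}\ q=[t].
\]
For the $(\Rightarrow)$ direction, peel off the last transition, apply the induction hypothesis to the children, and use congruence exactly as in the well-definedness argument to conclude $t\in\supp(\approx)$ and $q=[t]$; for $(\Leftarrow)$, use reachability of $\approx$ to invoke the induction hypothesis on the children and then fire the transition $f([t_1],\ldots,[t_n])\rightarrow[t]$. Everything else follows at once: $G$ is \emph{deterministic} because the right-hand side pins down $q$ uniquely; $G$ is \emph{reachable} because $[t]$ is witnessed by $t$ and every state has this form; and $(\approx_G)=(\approx)$ because a single state $c$ accepts both $t_1$ and $t_2$ iff $t_1,t_2\in\supp(\approx)$ and $[t_1]=c=[t_2]$, i.e. iff $t_1\approx t_2$.

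\textbf{Uniqueness} (up to isomorphism). Suppose $G_1$ and $G_2$ both have semantics $\approx$. I would define $h\colon G_1\to G_2$ by choosing, for each state $c$ of $G_1$, a term $t$ with $t\rightarrow^*_{G_1}c$ (possible by reachability) and letting $h(c)$ be the state of $G_2$ accepting $t$ — this state exists because $t\approx_{G_1}t$ forces $t\in\supp(\approx_{G_2})$, and is unique by determinism of $G_2$. Independence of the choice of $t$ holds since any two such terms are $\approx_{G_1}$-, hence $\approx_{G_2}$-, equivalent and so accepted by the same state of $G_2$; in fact \emph{every} term accepted by $c$ in $G_1$ is accepted by $h(c)$ in $G_2$. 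Defining $h'\colon G_2\to G_1$ symmetrically, determinism of $G_1$ and $G_2$ yields $h'\circ h=id$ and $h\circ h'=id$, and each map is a homomorphism: from a transition $f(c_1,\ldots,c_k)\rightarrow c$ of $G_1$, pick witnesses $t_i$ for the $c_i$, note $f(t_1,\ldots,t_k)$ is accepted by $c$ in $G_1$ hence by $h(c)$ in $G_2$, and read off the transition $f(h(c_1),\ldots,h(c_k))\rightarrow h(c)$ of $G_2$ using determinism of $G_2$. Hence $h$ is an isomorphism with inverse $h'$, so $G$ is unique up to isomorphism (consistent with the later rigidity of E-graphs / antisymmetry of $\sqsubseteq$).

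I do not anticipate a serious obstacle — the argument is disciplined bookkeeping — but the two spots needing genuine care are (i) the well-definedness of $\Delta$ and of $h,h'$, which is precisely where congruence (resp.\ determinism) must be invoked and cannot be skipped, and (ii) being explicit that ``unique'' means ``unique up to isomorphism'', since the states of an E-graph are abstract; since a bijective homomorphism need not be an isomorphism, one must genuinely produce the inverse homomorphism $h'$. It is also worth stressing that the \emph{reachability} clause in the definition of a PCR is not cosmetic: it is exactly what makes the inductive step of the key lemma go through and what makes the constructed automaton reachable in the sense of~\autoref{def:egraph}.
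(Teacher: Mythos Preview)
Your proof is correct, and your existence argument follows exactly the paper's quotient construction together with the same inductive characterization $t\rightarrow^*_G q \Leftrightarrow t\in\supp(\approx)\wedge q=[t]$ (the paper phrases it as $t\rightarrow^*_G[s]\Leftrightarrow t\approx s$, which is equivalent). Your uniqueness argument via explicit mutually inverse homomorphisms is more than the paper's sketch actually supplies---the sketch handles only existence and leaves uniqueness implicit---so you have filled in what the paper defers to the later rigidity and antisymmetry lemmas, and your direct argument is correct.
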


\begin{proofsketch}The states of $G$ are the equivalence classes
  of $\approx$, denoted as $[t]$ for $t \in \supp(\approx)$, and the
  transitions are
  $f([t_1],\ldots,[t_n]) \rightarrow [f(t_1,\ldots,t_n)]$ for all
  $t_1, \ldots, t_n, f(t_1,\ldots,t_n)$ in the support of $\approx$.
  One can check by induction on the size of $t$ that
  $t \in \supp(\approx)$ iff $t \in \supp(\approx_G)$, and
  $t \rightarrow^*_G [s]$ iff $t \approx s$, proving that
  $(\approx_G)=(\approx)$.
\end{proofsketch}

% \yz{Where should I put the fact that every PCR is the semantics of some E-graph?
% In fact, if we have this theorem, \autoref{lem:rebuilding-exists} 
% follows immediately by explicitly constructing such a PCR. (Is this right?)}

Thus, the semantics of an E-graph $G$ is a PCR $\approx_G$, which is a
congruence on $\calL(G)\defeq \supp(\approx_G)$.  We say that $G$
\emph{represents} the set of terms $\calL(G)$.

% It will be convenient to view $\approx_G$ as a {\em partial
%   equivalence relation} on $T(\Sigma)$, which is defined to be a
% relation that is transitive and symmetric relation, but not
% necessarily reflexive.\footnote{This concept was used in abstract
%   congruence
%   closure~\cite{abstract-congruence-closure,gulwani2005join}.  Partial
%   equivalence relations are a natural notion, for example they arre
%   closed under intersection~\cite{gulwani2005join}.}  The goal of an
% E-graph is to define the partial equivalence relation $\equiv_G$: this
% also defines the language $\mathcal{L}(G)$ as the set of terms $t$
% s.t. $t\equiv_G t$.  We illustrate with two examples.

% \begin{exm}
% Consider $\Sigma = \set{a,b,f}$
% where $a,b$ are 0-ary and $f$ is unary, and let $G$ be defined by the
% E-nodes
% $\{a()\rightarrow c_1, b()\rightarrow c_2, f(c_1)\rightarrow c_f,
% f(c_2)\rightarrow c_f\}$, where the final state is $c_f$.  Then
% $f(a) \approx_G f(b)$.  Notice that $\approx_G$ differs from the
% congruence relation defined in the Myhill-Nerode theorem for
% trees~\cite{kozen1992myhill}, because $a,b$ are equivalent under that
% relation, but not under $\approx_G$.
% \end{exm}

\begin{exm} \label{ex:intro:revisited}
  \revdelA{
  Consider the E-graph $G$ on the right of Figure 1, over
  signature $\Sigma = \set{a,f(\cdot, \cdot),g(\cdot, \cdot)}$.  $G$ has 4 states $c_i, i=1,\ldots, 4$ and
  seven transitions, }
\revdelA{
  \begin{align*}
      \revdelA{a() \rightarrow} & \revdelA{c_1} & \revdelA{f(c_1,c_1) \rightarrow} & \revdelA{c_2} & \revdelA{g(c_1,c_1) \rightarrow} & \revdelA{c_2}
&&\revdelA{\ldots} &\revdelA{f(c_3,c_3) \rightarrow} & \revdelA{c_4} & \revdelA{g(c_3,c_3) \rightarrow} & \revdelA{c_4}
  \end{align*}
  }
  \revdelA{ An example of rewritings is
    $f(a,a) \rightarrow_G f(c_1,a) \rightarrow_G f(c_1,c_1)
    \rightarrow_G c_2$; notice that for the purpose of rewriting,
    states are considered nullary symbols.  Some examples of the the
    PCR are $a \approx_G a$ (represented by state $c_1$),
    $f(a,a) \approx_G g(a,a)$ (by state $c_2$),
    $f(f(a,a),g(a,a)) \approx_G g(f(a,a),f(a,a))$ (by state $c_3$),
    etc.}
  \revinsA{ Continuing~\autoref{ex:intro:f:g}, the semantics of the
    E-graph $H$ in \autoref{fig:egraph} is the PCR $\approx_H$ that
    equates $a \approx_H a$ (witnessed by state $c_1$),
    $f(a,a) \approx_H g(a,a)$ (by state $c_2$),
    $f(f(a,a),g(a,a)) \approx_H g(f(a,a),f(a,a))$ (by state $c_3$),
    etc.
    % yz: c_4 represents 2^7 states not $G$.
    % $H$ represents $2^7$ terms.
    }
\end{exm}

\begin{exm}
  Let $\Sigma=\set{a,f(\cdot)}$.  Consider the E-graph $G$ with a
  single state $c$ and transitions $a() \rightarrow c$,
  $f(c) \rightarrow c$.  It represents infinitely many terms,
  $f^{(k)}(a)$, for $k\geq 0$, and its semantics is the PCR
  $a \approx_G f(a) \approx_G f(f(a)) \approx_G \cdots$
\end{exm}

\begin{exm}
  Let $\Sigma = \set{a,f(\cdot),g(\cdot)}$ and consider the infinite
  E-graph $G$ with states $c,c_0,c_1,c_2,\ldots$ and transitions
  \begin{align*}
    a \rightarrow & c_0 & 
   f(c_i) \rightarrow & c_{i+1}&  g(c_i) \rightarrow & c && i=0,1,2,\ldots
  \end{align*}
  The PCR consists of
  $g(a)\approx_G g(f(a)) \approx_G g(f(f(a))) \approx_G \ldots$,
  defined by the state $c$.  No other distinct terms are in
  $\approx_G$, for example $f(a)\not\approx_G f(f(a))$ because they
  are represented by the distinct states $c_1$ and $c_2$ respectively.
  Although $G$ represents a regular language,
  $\setof{f^{(k)}(a)}{k \geq 0}\cup\setof{g(f^{(k)}(a))}{k \geq 0}$,
  its semantics $\approx_G$ cannot be captured by a finite E-graph.
  This example shows that $\approx_G$ differs from the Myhill-Nerode
  equivalence relation
  %% Remove the notation for MN equivalence relation, since it's unclear 
  %% what \calL refers to in this example. According to the 
  % $\sim_{\calL}$~\cite{kozen1992myhill}.
  \cite{kozen1992myhill}, under which all terms $f^{(k)}(a)$ would be
  equivalent.  It also illustrates the subtle distinction between tree
  automata and E-graphs.  An optimizer that wants to use the identity
  $g(x)=g(f(x))$, but not $x=f(x)$, needs this E-graph to represent
  all terms equivalent to $g(a)$, and cannot use the finite tree
  automaton accepting the regular language $\calL(c)$ because that
  would incorrectly equate all terms $f^{(k)}(a)$.
\end{exm}

% \footnote{ This
%   definition of $\approx_G$ is different from the congruence relation
%   defined in the Myhill-Nerode theorem for trees
%   \cite{kozen1992myhill}.  For example, consider an E-graph with
%   E-nodes
%   $\{a()\rightarrow c_1, b()\rightarrow c_2, f(c_1)\rightarrow c_f,
%   f(c_2)\rightarrow c_f\}$.  Under the congruence relation defined by
%   the Myhill-Nerode theorem, $a$ and $b$ are equivalent, while in our
%   definition they are not because they are accepted by different
%   states.  \dan{is the difference due to the fact that an E-graph has
%     a sigle accepting state?}  \yz{It's because the congruence
%     relation in Myhill-Nerode is defined ``contextually''.  If state
%     $a$ and state $b$ are not distinguishable, they will be viewed as
%     equivalent.  One way of thinking about this is that the
%     Myhill-Nerode lemma produces an equivalence relation that
%     correspond to the equivalence relation of the E-graph after
%     minimization.}  }

% \mysubparagraph{Properties of E-graphs}

% A \emph{homomorphism} $h :G \rightarrow H$ between two E-graphs
% $G = (Q, \Sigma,\Delta)$ and $H=(Q', \Sigma, \Delta')$ is a function
% $h:Q \rightarrow Q'$ that maps each transition
% $f(c_1,\ldots,c_k)\rightarrow_G c$ to a transition
% $f(h(c_1),\ldots, h(c_k))\rightarrow_H h(c)$ in $H$.  An isomorphism
% is a bijective homomorphism.

Recall the definitions of tree automata homomorphisms in \autoref{sec:tree-automata}.
 When restricted to E-graphs, homomorphisms have some interesting properties:

\begin{lem} \label{lemma:approx:g:h} If $h : G \rightarrow H$ is a
  homomorphism, then
  $\left(\approx_G\right)\subseteq \left(\approx_H\right)$.
\end{lem}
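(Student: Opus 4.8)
The plan is to simply unfold the definition of the induced PCR and transport the witnessing state along $h$ using \autoref{lemma:homomoprhism:simple}. Suppose $t_1 \approx_G t_2$ for ground terms $t_1, t_2 \in T(\Sigma)$. By the definition of $\approx_G$, there is a single E-class (state) $c$ of $G$ that accepts both terms, i.e.\ $t_1 \rightarrow^*_G c$ and $t_2 \rightarrow^*_G c$.

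Next I would apply \autoref{lemma:homomoprhism:simple} twice, once to each reduction: since $h : G \rightarrow H$ is a homomorphism and $t_i \rightarrow^*_G c$, we get $t_i \rightarrow^*_H h(c)$ for $i = 1, 2$. Thus $h(c)$ is a single state of $H$ that accepts both $t_1$ and $t_2$, which by definition of $\approx_H$ means $t_1 \approx_H t_2$. Since $t_1, t_2$ were arbitrary, this shows $(\approx_G) \subseteq (\approx_H)$.

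There is no real obstacle here; the only point worth noting is why the ``same state'' requirement in the definition of $\approx_H$ is met: it is because $h$ is a function, so applying it to the common witness $c$ yields a well-defined single state $h(c)$ of $H$ — no injectivity or surjectivity of $h$ is needed, and determinism of $G$ and $H$ is not used in this direction (it was only needed earlier to know that $\approx_G$ and $\approx_H$ are genuine PCRs). I would keep the argument to a few lines and cite \autoref{lemma:homomoprhism:simple} as the one substantive ingredient.
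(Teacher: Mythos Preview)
Your proposal is correct and is essentially identical to the paper's own proof: both unfold the definition of $\approx_G$ to get a common accepting state $c$, apply \autoref{lemma:homomoprhism:simple} to transport each acceptance along $h$, and conclude $t_1 \approx_H t_2$ via the common state $h(c)$. Your additional remarks about not needing injectivity, surjectivity, or determinism are accurate but go beyond what the paper records.
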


\begin{proof}
%  Let $h:G\rightarrow H$ be a homomorphism.  
%   If a ground term $t$ is represented by an E-class $c$ in $G$ (i.e., $t\rightarrow^*_G c$)
%  then $t$ is represented by $h(c)$ in $H$ (i.e., $t\rightarrow^*_H h(c)$).
%  This can be shown by following transitions of $H$ mapped to by $h$.
%   If $t \in \mathcal{L}(G)$ then $t \rightarrow^*_G c$ for some final
%   state $c$ in $G$.  Using \autoref{lemma:homomoprhism:simple} we
%   derive $t \rightarrow^*_H h(c)$, and $h(c)$ is a final state in $H$
%   (because $h$ is a homomorphism), proving that
%   $t \in \mathcal{L}(H)$.  Next, 
  Assume $t_1\approx_G t_2$. Then there exists some E-class $c$ where
  $t_1\rightarrow^*_G c\tensor*[^*_G]{\leftarrow}{} t_2$.  By
  \autoref{lemma:homomoprhism:simple},
  $t_1\rightarrow^*_H h(c)\tensor*[^*_H]{\leftarrow}{} t_2$, implying
  $t_1\approx_H t_2$.
\end{proof}

\begin{lem}
  \label{lem:unique-morphism}
  There exists at most one homomorphism $h : G \rightarrow H$.
\end{lem}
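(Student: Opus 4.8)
The plan is to exploit the fact that E-graphs are reachable and deterministic. Suppose $h_1, h_2 : G \rightarrow H$ are both homomorphisms, and let $c$ be any state of $G$. Since $G$ is reachable, there is a ground term $t \in T(\Sigma)$ with $t \rightarrow^*_G c$. By \autoref{lemma:homomoprhism:simple} applied to $h_1$, we get $t \rightarrow^*_H h_1(c)$; applying the same lemma to $h_2$ gives $t \rightarrow^*_H h_2(c)$. But $H$ is deterministic, so a term can be accepted by at most one state of $H$; hence $h_1(c) = h_2(c)$. Since $c$ was arbitrary, $h_1 = h_2$.

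I would write this as a short direct argument: fix the two homomorphisms, pick an arbitrary state, invoke reachability to obtain a witnessing ground term, push it through both homomorphisms using \autoref{lemma:homomoprhism:simple}, and conclude with determinacy of the target. No induction is needed beyond what is already packaged in \autoref{lemma:homomoprhism:simple}, and no case analysis on the structure of $G$ is required.

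The only place requiring a moment of care is making sure both hypotheses of \autoref{def:egraph} are used and used correctly: reachability of $G$ is what guarantees every state has a witnessing ground term (so the argument covers all of $Q$, not just the states in the image of some accepting run), and determinacy of $H$ is what forces the two images to coincide. There is no real obstacle here; the lemma is essentially immediate once one recognizes that E-graphs are exactly the setting where "reachable source + deterministic target" pins down a homomorphism. This is also the key fact underlying the remark that $\sqsubseteq$ is a partial order on E-graphs (antisymmetry, via \autoref{lem:egraph-partial-order}) and that E-graphs form rigid tree automata.
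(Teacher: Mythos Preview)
Your argument is correct and in fact slightly cleaner than the paper's. The paper proves the lemma by an explicit induction on the \emph{weight} of a state $c$ (the minimum size of a term accepted by $c$): it picks a minimal accepted term $t=f(t_1,\ldots,t_k)$, obtains states $c_i$ with $t_i\rightarrow^*_G c_i$ and a transition $f(c_1,\ldots,c_k)\rightarrow c$, applies the induction hypothesis to get $h_1(c_i)=h_2(c_i)$, and then uses determinacy of $H$ on the two transitions $f(h_1(c_1),\ldots)\rightarrow h_1(c)$ and $f(h_2(c_1),\ldots)\rightarrow h_2(c)$. Your proof simply invokes \autoref{lemma:homomoprhism:simple} as a black box and lets determinacy of $H$ finish the job. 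The difference is purely one of packaging: the paper essentially re-runs the induction that already proved \autoref{lemma:homomoprhism:simple}, while you reuse that lemma directly. Your route is more modular; the paper's is more self-contained but somewhat redundant given that \autoref{lemma:homomoprhism:simple} is already established.
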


\begin{proof}
  Call the {\em weight} of a state $c$ in $G$ the size of the smallest
  term $t$ such that $t \rightarrow^*_G c$.  Since $G$ is reachable,
  every state has a finite weight.  Given two homomorphisms
  $h_1, h_2: G \rightarrow H$, we prove by induction on the weight of
  $c$ that $h_1(c)=h_2(c)$.  Let $t$ be a term of minimal size such
  that $t \rightarrow^*_G c$, and assume $t = f(t_1, \ldots, t_k)$,
  for $k \geq 0$.  Then there exists states $c_1, \ldots, c_k$ such
  that $t_i \rightarrow^*_G c_i$, $i=1,k$, and a transition
  $f(c_1, \ldots, c_k) \rightarrow c$ in $G$. By induction hypothesis
  $h_1(c_i)=h_2(c_i)$ for $i=1,k$.  By the definition of a
  homomorphism, $H$ contains both transitions
  $f(h_1(c_1),\ldots, h_1(c_k))\rightarrow h_1(c)$ and
  $f(h_2(c_1),\ldots, h_2(c_k))\rightarrow h_2(c)$, and we conclude
  $h_1(c)=h_2(c)$ because $H$ is deterministic.
\end{proof}

We call a tree automaton $\calA$ rigid~\cite{DBLP:journals/ejc/HellN91} if the identity mapping is the
only homomorphism $\calA\rightarrow \calA$.  It follows from
\autoref{lem:unique-morphism} that every E-graph is a rigid tree
automaton.

%%%% we don't need to talk about the core in order to prove the lemma below.
% With \autoref{lemma:core}, $\sqsubseteq$ forms a partial order:

\begin{lem}\label{lem:egraph-partial-order}
  $\sqsubseteq$ over E-graphs forms a partial order up to isomorphism.
\end{lem}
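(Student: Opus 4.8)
The statement to prove is \autoref{lem:egraph-partial-order}: that $\sqsubseteq$ over E-graphs forms a partial order up to isomorphism. We already know from \autoref{sec:tree-automata} that $\sqsubseteq$ is a preorder (reflexivity via the identity homomorphism; transitivity via composition of homomorphisms, which is again a homomorphism). So the plan is to establish antisymmetry up to isomorphism: if $G \sqsubseteq H$ and $H \sqsubseteq G$, then $G$ and $H$ are isomorphic.

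Let me outline the steps. First, from $G \sqsubseteq H$ I get a homomorphism $h : G \rightarrow H$, and from $H \sqsubseteq G$ I get a homomorphism $h' : H \rightarrow G$. Composing, $h' \circ h : G \rightarrow G$ is a homomorphism, and by \autoref{lem:unique-morphism} (or equivalently by rigidity of E-graphs, which was just noted) the only homomorphism $G \rightarrow G$ is $id_G$; hence $h' \circ h = id_G$. Symmetrically, $h \circ h' : H \rightarrow H$ must equal $id_H$. But these two equations are exactly the conditions in the definition of an isomorphism given in \autoref{sec:tree-automata}: $h'$ is an inverse homomorphism to $h$ with $h' \circ h = id_G$ and $h \circ h' = id_H$. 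Therefore $h$ is an isomorphism, and $G \cong H$.

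The only subtlety — and the reason the paper flags in a footnote that ``a bijective homomorphism is not necessarily an isomorphism'' — is that one must not argue merely that $h$ is a bijection; one must exhibit a homomorphism that is a two-sided inverse. But here that is automatic: the candidate inverse $h'$ is itself a homomorphism (it comes from $H \sqsubseteq G$), and the uniqueness lemma does the rest by forcing the composites to be identities. So there is essentially no obstacle; the real content was already packaged into \autoref{lem:unique-morphism}. I would present the argument in three short sentences: invoke the two homomorphisms, apply uniqueness to both composites, and observe that the resulting identities are precisely the isomorphism conditions. I might also remark explicitly that this is where rigidity of E-graphs (unlike general tree automata) is essential — for a non-rigid automaton, $G \sqsubseteq H \sqsubseteq G$ need not imply isomorphism.
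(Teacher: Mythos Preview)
Your proposal is correct and matches the paper's proof essentially verbatim: it too notes reflexivity and transitivity are immediate, then takes homomorphisms $h : G \rightarrow H$ and $h' : H \rightarrow G$, applies \autoref{lem:unique-morphism} to force $h' \circ h = id_G$ and $h \circ h' = id_H$, and concludes that $h$ is an isomorphism.
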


\begin{proof}
  Obviously $\sqsubseteq$ is reflexive and transitive.  To prove
  anti-symmetry, assume two homomorphisms $h : G \rightarrow H$,
  $h' : H \rightarrow G$.  The composition $h' \circ h$ is a
  homomorphism $G \rightarrow G$, and, by uniqueness, it must be the
  identity on $G$; similarly, $h \circ h'$ is the identity on $H$,
  proving that $h$ is an isomorphism, thus $G, H$ are isomorphic.
% 
%   To show
%   $\sqsubseteq$ is anti-symmetric, suppose there are homomorphisms
%   $h: G\rightarrow H$ and $h': H\rightarrow G$.  Consider the composition
%   $h'\circ h: G\rightarrow G$ and $h\circ h': H\rightarrow H$.  As we show in
%   \autoref{lemma:core}, $h'\circ h$ and $h\circ h'$ are identity
%   functions.  Therefore, $h$ is a bijection. Thus, $G$ and $H$ are
%   isomorphic.
\end{proof}

\revinsAll{Next, we define models for term rewriting systems.}

\begin{dfn}
  We say that an E-graph $H=\langle Q,\Sigma,\Delta\rangle$ is a
  \emph{model} of a TRS $\trs$ if, for every rule
  $\lhs{}\rightarrow\rhs{}$ in $\trs$ and any substitution
  $\sigma:\varset(\lhs)\rightarrow Q$, if
  $\lhs[\sigma] \rightarrow^*_G c$ then
  $\rhs[\sigma] \rightarrow^*_G c$.
%
% If $G$ is not a model, then we want to extend it into a model.  
  If $G$ is another E-graph, then we say that $H$ is a {\em model} for
  the pair $\trs, G$ if $G \sqsubseteq H$ and $H$ is a model of
  $\trs$.  $H$ is a {\em universal model} if for any other model $H'$,
  it holds that $H \sqsubseteq H'$.
\end{dfn}

When it exists, the universal model is unique up to isomorphism,
because $H\sqsubseteq H'$ and $H'\sqsubseteq H$ implies $H,H'$ are
isomorphic.  
% Commented this out because in the revision the chase is not discussed until later.
% process, which are quite similar to the chase procedure.

\revinsA{Continuing \autoref{ex:intro:revisited}, let $\trs$
  consists of the rule $f(x,x) \rightarrow g(x,x)$, and let $G, H$ be
  the E-graphs in \autoref{fig:egraph}.  $G$ is not a model of $\trs$,
  because for the substitution $\sigma(x)=c_1$ we have
  $\lhs[\sigma]=f(c_1,c_1)\rightarrow_G c_2$, but
  $\rhs[\sigma]=g(c_1,c_1)\not\rightarrow^*_G c_2$.  On the other
  hand, one can check that $H$ is a model of $\trs$; in fact it is a
  model of $\trs, G$, because $G \sqsubseteq H$.

  Given an E-graph $G$ and a TRS $\trs$, equality saturation
  constructs a universal model $H$ of $\trs, G$, by repeatedly
  applying some simple operations on $G$, which we define next.}

\vspace{-1em}

\subsection{Operations over E-graphs}
\label{sec:egraph-operations}

% We describe now formally the main operations used by E-graph systems.
% Throughout this section we fix an E-graph
% $G=\langle Q,\Sigma,Q_{\textit{final}},\Delta\rangle$ and a TRS $\trs$.

E-matching, Insertion, and Rebuilding are the building blocks of
equality saturation.  They defined in the literature
operationally~\cite{efficient-e-matching}.  We provide here a formal
definition, using tree automata terminology.  \revinsA{Throughout this
  section we fix an E-graph $G = \langle Q,\Sigma, \Delta \rangle$}.

{\bf E-matching} a rule $\lhs{}\rightarrow \rhs{} \in \trs$ in $G$
returns the set of pairs $(\sigma, c)$, where
$\sigma : \varset(\lhs)\rightarrow Q$ is a substitution such that
$\lhs[\sigma] \rightarrow^*_G c$.  For example, considering the
E-graph on the left of \autoref{fig:egraph} and the rule
$f(x,x)\rightarrow g(x,x)$, E-matching returns
\revdelA{\((c_{i+1}, \{x\mapsto c_i,y\mapsto c_i\})\)}
\revinsA{\((\{x\mapsto c_i,y\mapsto c_i\},c_{i+1})\) } for
$i=1,2,3$. E-matching is analogous to computing the triggers for a TGD
or EGD (Sec.~\ref{sec:chase}).
%
%%%%% These definitions need to be short and dry.  The discussion
%%%%% about relational matching should be separate.
% E-matching~\cite{efficient-e-matching} searches for all pairs $(\sigma, c)$ 
%  where $\sigma:\varset(\lhs)\rightarrow C$ is a substitution function from 
%  pattern variables to E-classes in $G$,
%  $c\in C$ is an E-class, and $\sigma(p)\rightarrow^*_G c$\footnote{
%   With a slight abuse of notation,
%   the substitution function is lifted from $\varset(\lhs)\rightarrow C$ to
%   $T(\Sigma, \varset(\lhs))\rightarrow T(\Sigma, C)$.
% }.
% Zhang~et~al.~\cite{relational-ematching} showed E-matching can be reduced to conjunctive queries 
%  over an encoded database.
% We use a similar encoding in \autoref{sec:eqsat-to-chase}.
%
% \yz{I wonder the following paragraph should be moved to a later section since it talks about insertion and EqSat}
% Since $c$ represents $\lhs[\sigma]$ and $\trs$ asserts
% $\lhs \rightarrow \rhs$, we want to extend the E-graph such that $c$
% also represents $t \defeq \rhs[\sigma]$, in other words we want to add
% $t \rightarrow^* c$ to $G$.  This is done by inserting $(t,c)$ into
% $G$.

The {\bf Insertion} of a pair $(t,c)$ into $G$, where
$t \in T(\Sigma\cup Q)$ and $c \in Q$, returns an automaton $\calA$
such that $G\sqsubseteq \calA$ such that $t \rightarrow^*_{\calA} c$.
To define $\calA$, we need the following:

% \yz{Why are leaf nodes special here? I also wonder if we can define insertion as producing 
% a set of states and transitions (instead of an E-graph which also has final states)}
\begin{dfn} Fix a term $t \in T(\Sigma \cup Q)$ and a state $c \in Q$.
  The {\em flattening for $t$ with root $c$}, in notation
  $\flatt(t \rightarrow^* c)$, or just $\flatt$ when $t, c$ are clear
  from the context, is an E-graph that has one distinct state $q_u$
  for each subterm $u$ of $t$, and has a transition
  $f(q_{u_1}, \ldots, q_{u_k}) \rightarrow q_u$, for all subterms $u$
  of the form $u = f(u_1, \ldots, u_k)$ and $f \in \Sigma$.  Moreover,
  it is enforced that the state of the root node is $c$ (i.e.,
  $q_t = c$).  One can check that $t \rightarrow^*_{\flatt} c$, and
  that $\approx_{\flatt}$ is the identity on all subterms of
  $t$. Flattening is also called \emph{normalization}~\cite{tac-termination}.
%   Define
%   $\Delta=\textit{norm}(t\rightarrow c_t)$ as the smallest set of
%   transitions of the form $f(c_1, \ldots, c_k)\rightarrow c$, where
%   $c_1, \ldots, c_k, c\in C$ and $f\in \Sigma$, such that
%   $t\rightarrow^*_\Delta c_t$.
\end{dfn}
% \dan{to remove this example and have single end-to-end example below}
% For example, given the term $t = f(g(a),g(a),g(b))$, $\flatt$ has 5
% states, abbreviated $c_a, c_b, c_{ga}, c_{gb}, c_f$, and transitions:
% %
% \begin{align*}
%   a() \rightarrow & c_a & b() \rightarrow & c_b & g(c_a)\rightarrow & c_{ga} & g(c_b) \rightarrow &c_{gb} & f(c_a,c_a,c_b) \rightarrow & c_f
% \end{align*}
%

\begin{wrapfigure}{r}{0.6\textwidth}
  \vspace{-0.5cm}
  \centering
  \begin{subfigure}[b]{0.3\linewidth}
    \centerline{$G:$}
    \centerline{\includegraphics[height=3.8cm]{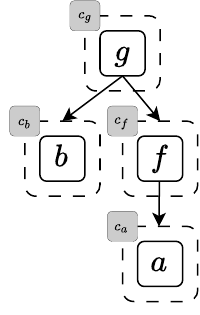}}
  \end{subfigure}
  \begin{subfigure}[b]{0.42\linewidth}
    \centerline{$\calA:$}
    \centerline{\includegraphics[height=3.8cm]{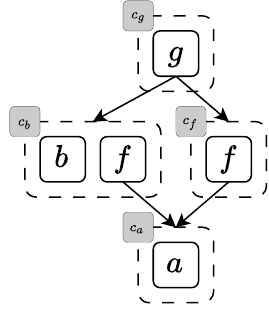}}
  \end{subfigure}
  \hfill
  \begin{subfigure}[b]{0.25\linewidth}
    \centerline{$H:$}
    \centerline{\includegraphics[height=3.8cm]{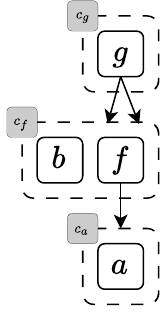}}
  \end{subfigure}
  \caption{Example E-graphs on insertion and rebuilding.}
  \vspace{-0.5cm}
  \label{fig:rebuilding}
\end{wrapfigure}
The result of {\em inserting} $(t,c)$ in $G$ is
$\calA \defeq G \cup \flatt(t\rightarrow^* c)$ (i.e.  we take the
set-union of all states and all transitions).  The result $\calA$ is a
reachable tree automaton, but it is non-deterministic in general, thus
it is not an E-graph; the next operation, rebuilding, converts it back
into an E-graph.  $G \sqsubseteq \calA$ holds, because the inclusion
$G\rightarrow\calA$ is a homomorphism.

As an example, if we insert the pair $(f(c_a), c_b)$ in the E-graph
$G$ in \autoref{fig:rebuilding}, the result is $\calA$ in the center
of the figure;
%%% TODO: Actually if the rule b\rightarrow f(a) is fired, 
%%% the inserted tuple should be (f(a), c_b) instead of (f(c_a), c_b).
%%% And the figure in the middle should have two a's, so we 
%%% cannot say this insertion is caused by firing any rule.
%%% We can probably fix this in camera ready.
% $(f(c_a), c_b)$ to the E-graph, for example in order to fire the rule
% $b \rightarrow f(a)$.
flattening $\flatt(f(c_a)\rightarrow c_b)$ has a single transition
$f(c_a)\rightarrow c_b$.

% \mysubparagraph{Rebuilding}

{\bf Rebuilding} converts $\calA$ into a deterministic automaton.
Formally, let $\calA$ be any reachable tree automaton, and recall that
$\calA$ may be infinite.  The \emph{congruence closure}
$\congr(\calA)$ is an E-graph (i.e. deterministic, reachable
automaton) such that $\calA \sqsubseteq \congr(\calA)$ and, for any
other E-graph $G'$, if $\calA \sqsubseteq G'$, then
$\congr(\calA)\sqsubseteq G'$.  \revinsC{We prove the following in \cref{apdx:rebuilding-exists}:}

\begin{restatable}{lem}{rebuildingexists}
  \label{lem:rebuilding-exists}
  For any reachable tree automaton $\calA$, $\congr(\calA)$
  exists and is unique.
\end{restatable}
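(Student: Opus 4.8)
The plan is to construct $\congr(\calA)$ explicitly as a quotient of $\calA$ and then verify it has the required universal property. First I would define a relation on the states $Q$ of $\calA$: declare $q \sim q'$ whenever there is some term $t \in T(\Sigma)$ with $t \rightarrow^*_{\calA} q$ and $t \rightarrow^*_{\calA} q'$. This $\sim$ is clearly symmetric, and it is an equivalence relation on the reachable states (which is all of them, since $\calA$ is reachable): reflexivity is reachability, and transitivity holds because if $t$ witnesses $q \sim q'$ and $s$ witnesses $q' \sim q''$, one needs a single term reaching both $q$ and $q''$ — here the key move is that $\sim$ must be closed under congruence, so the correct definition is to take $\sim$ to be the \emph{smallest} equivalence relation on $Q$ that (i) identifies $q, q'$ whenever some transition $f(p_1,\dots,p_k)\rightarrow q$ and $f(p_1',\dots,p_k')\rightarrow q'$ exist with $p_i \sim p_i'$ for all $i$, and (ii) is closed under the ``same term'' relation above. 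Equivalently, define $\sim$ inductively/as a least fixpoint: merge states forced equal by determinism, propagate upward through matching transitions, and iterate (transfinitely if $\calA$ is infinite). Then set $\congr(\calA) \defeq \langle Q/\!\!\sim, \Sigma, \Delta/\!\!\sim\rangle$, where $\Delta/\!\!\sim$ contains $f([p_1],\dots,[p_k]) \rightarrow [q]$ for each transition $f(p_1,\dots,p_k)\rightarrow q$ in $\Delta$.

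Next I would verify the three properties. \emph{Reachable:} the quotient of a reachable automaton is reachable, since if $t \rightarrow^*_{\calA} q$ then $t \rightarrow^*_{\congr(\calA)} [q]$ (an easy induction on $t$, using that every transition descends to the quotient). \emph{Deterministic:} suppose $t \rightarrow^*_{\congr(\calA)} [q]$ and $t \rightarrow^*_{\congr(\calA)} [q']$; by induction on $t = f(t_1,\dots,t_k)$, there are states $p_i, p_i'$ with $t_i \rightarrow^*_{\calA} p_i$, $t_i \rightarrow^*_{\calA} p_i'$, so $p_i \sim p_i'$ by the induction hypothesis lifted back, and transitions $f(p_1,\dots,p_k)\rightarrow q$, $f(p_1',\dots,p_k')\rightarrow q'$ in $\Delta$ — then clause (i) of the definition of $\sim$ forces $q \sim q'$, i.e. $[q]=[q']$. \emph{Homomorphism $\calA \sqsubseteq \congr(\calA)$:} the quotient map $q \mapsto [q]$ is a homomorphism by construction. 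For the universal property, let $G'$ be any E-graph with a homomorphism $g : \calA \rightarrow G'$. I claim $g$ factors through the quotient: if $q \sim q'$ then $g(q) = g(q')$. This follows by induction on the construction of $\sim$ — for clause (i), if $f(p_1,\dots,p_k)\rightarrow q$ and $f(p_1',\dots,p_k')\rightarrow q'$ with $g(p_i)=g(p_i')$, then $G'$ contains $f(g(p_1),\dots,g(p_k))\rightarrow g(q)$ and $f(g(p_1),\dots,g(p_k))\rightarrow g(q')$, so $g(q)=g(q')$ by determinism of $G'$; the ``same term'' clause is handled similarly via \autoref{lemma:homomoprhism:simple} and determinism of $G'$. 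Hence $g$ induces a well-defined map $\congr(\calA) \rightarrow G'$, which is easily checked to be a homomorphism, giving $\congr(\calA) \sqsubseteq G'$. Uniqueness up to isomorphism is then immediate from \autoref{lem:egraph-partial-order}: any two objects with this least-upper-bound-style universal property are mutually $\sqsubseteq$-related, hence isomorphic.

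The main obstacle I anticipate is the handling of \textbf{infinite automata} in the fixpoint construction of $\sim$: the naive ``merge and propagate'' procedure need not stabilize after $\omega$ steps when $Q$ is infinite, so one must either argue via transfinite iteration and a cardinality/monotonicity bound that a fixpoint is reached, or — cleaner — define $\sim$ directly as the intersection of all congruence-closed, determinism-closed equivalence relations on $Q$ (this family is nonempty since the total relation qualifies) and show this intersection is itself congruence-closed and determinism-closed. The latter phrasing sidesteps the transfinite bookkeeping but requires care in showing closure is preserved under arbitrary intersection, which is where I would focus the detailed argument. A secondary subtlety is making sure the inductive proofs of determinism and of the factorization of $g$ are phrased so that the induction is on term structure (well-founded) rather than on the potentially transfinite construction of $\sim$; using the intersection definition makes these proofs go through by ordinary structural induction plus a single appeal to "$\sim$ is determinism-closed."
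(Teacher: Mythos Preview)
Your approach is correct and takes a genuinely different route from the paper. The paper constructs $\congr(\calA)$ by quotienting \emph{terms}: it sets $L = \bigcup_c \calL(c)$, defines $\approx_L$ on $L$ as the smallest equivalence relation identifying terms accepted by a common state of $\calA$, and takes the E-classes of $\congr(\calA)$ to be the $\approx_L$-classes. You instead quotient \emph{states} of $\calA$ by the smallest congruence-closed equivalence $\sim$ on $Q$. The two constructions are dual: the paper's makes determinism immediate (distinct $\approx_L$-classes are disjoint sets of terms by construction) but must separately build the homomorphism $\calA \rightarrow \congr(\calA)$; yours makes that homomorphism immediate (the quotient map) and pushes the work into verifying determinism. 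Your intersection-of-all-congruence-closed-equivalences definition of $\sim$ is the right move for the infinite case and avoids the transfinite bookkeeping you were worried about; closure under arbitrary intersection is a one-line check.

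One small gap: your determinism argument tacitly assumes that any run $t \rightarrow^*_{\congr(\calA)} [q]$ in the quotient lifts to a run $t \rightarrow^*_{\calA} q'$ in $\calA$ for some $q' \sim q$. This is false in general. For instance, with transitions $a \rightarrow p_1$, $b \rightarrow p_2$, $f(p_1) \rightarrow q_1$, $f(p_2) \rightarrow q_1$, $f(p_1) \rightarrow q_2$, $g(q_2) \rightarrow r$, one has $q_1 \sim q_2$, and $g(f(b))$ is accepted by $[r]$ in the quotient but by no state of $\calA$. The fix is simpler than what you wrote: prove the quotient is \emph{transition}-deterministic directly from clause~(i). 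If $f([p_1],\ldots,[p_k]) \rightarrow [q]$ and $f([p_1],\ldots,[p_k]) \rightarrow [q']$ are quotient transitions, they arise from $\calA$-transitions $f(p_1,\ldots,p_k)\rightarrow q$ and $f(p_1',\ldots,p_k')\rightarrow q'$ with $p_i \sim p_i'$, whence $q \sim q'$ by clause~(i), so $[q]=[q']$. Transition-determinism then gives term-determinism by the usual induction on $t$.
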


% \yz{There are two usage of congruence closure here, one is EqSat's CC and one is Tarjan's CC, 
% which is confusing.
% Consider rewriting this paragraph.}
% Efficient algorithms for computing the congruence closure (in the
% finite case), also known as \emph{rebuilding} in the, have been studied in
% the equality saturation community~\cite{egg}.
The procedure of computing $\congr(\calA)$, also known as
\emph{rebuilding} in EqSat literature~\cite{egg}, can be done
efficiently in the finite case, for instance with Tarjan's algorithm
\cite{tarjan-congruence}.  The idea is to repeatedly find violating
transitions $f(c_1,\ldots, c_k)\rightarrow c$ and
$f(c_1,\ldots, c_k)\rightarrow c'$ with $c\neq c'$, and replace every
occurrence of $c$ with $c'$, until fixpoint.  \revinsC{This is similar
  to determinizing $\calA$, but instead of constructing powerset
  states like $\set{c_1, c_4, c_7}$, we equate states $c_1=c_4=c_7$;
  thus, $\congr(\calA)$ has at most as many states as $\calA$, and the
  procedure always terminates for finite $\calA$, as merging shrinks
  the number of states.}
% Most systems use for this purpose
% Tarjan's~\emph{congruence closure} algorithm~\cite{tarjan-congruence}.
%%% Removing since this is a restatement of the lemma
% In particular, given a tree automaton $\mathcal{A}$, the congruence
% closure algorithm computes the smallest E-graph $G$ such that
% $\mathcal{A}\sqsubseteq G$.

\begin{exm}
  The tree automaton $\calA$ in \autoref{fig:rebuilding} is
  non-deterministic, because $f(a) \rightarrow^* c_b$ and
  $f(a) \rightarrow^* c_f$.  The congruence closure algorithm merges
  $c_b$ and $c_{f}$, and produces the E-graph $H$ in
  \autoref{fig:rebuilding}.  Notice that $H$ represents strictly more
  terms than $\calA$.  For example, $H$ represents $g(b,b)$, because
  $g(b,b) \rightarrow^*_H c_g$, but $\calA$ does not represent
  $g(b,b)$.
\end{exm}
% TODO: mentions what property are held by rebuilding
% L(G) \subseteq L(CC(G))

\mysubparagraph{Least upper bound of E-graphs}

Let $(G_i)_{i \in I}$ be a (possibly infinite) family of E-graphs.
Recall that their least upper bound $G$ is an E-graph such that $G$ is
an upper bound for every E-graph in the set, $G_i \sqsubseteq G$ for
all $i \in I$, and for any other upper bound $G'$, it holds that
$G \sqsubseteq G'$.  We prove the following in
\cref{apdx:least-upper-bound}:

\begin{restatable}[Least upper bound]{lem}{leastupperbound}
  \label{lemma:union}
  The least upper bound exists and is given by $\congr \left( \mathcal{A}\right)$,
  where $\calA$ is the automaton consisting of the disjoint union of
  the states and the disjoint union of the transitions of all E-graphs
  $G_i$.
\end{restatable}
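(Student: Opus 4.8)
The plan is to verify that $\congr(\calA)$, where $\calA$ is the disjoint union of all the $G_i$, satisfies the two defining properties of a least upper bound: it is an upper bound, and it is below every other upper bound. First I would observe that $\calA$ is a reachable tree automaton: every state of $\calA$ is a state of some $G_i$, hence accepts a ground term in $G_i$, and the same rewriting derivation witnesses acceptance in $\calA$ since $\calA$ contains all the transitions of $G_i$. Therefore $\congr(\calA)$ exists and is an E-graph by \autoref{lem:rebuilding-exists}. For each $i$, the inclusion of states and transitions of $G_i$ into $\calA$ is a homomorphism, so $G_i \sqsubseteq \calA$; composing with $\calA \sqsubseteq \congr(\calA)$ (which holds by definition of congruence closure) gives $G_i \sqsubseteq \congr(\calA)$ for all $i$. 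Hence $\congr(\calA)$ is an upper bound.

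Next I would show minimality. Let $G'$ be any other upper bound, so there is a homomorphism $h_i : G_i \rightarrow G'$ for every $i \in I$. Since the state set of $\calA$ is the disjoint union $\bigsqcup_i Q_i$, these combine into a single well-defined function $h : \calA \rightarrow G'$ by setting $h$ to agree with $h_i$ on the $i$-th copy; this $h$ is a homomorphism because every transition of $\calA$ lies in exactly one $G_i$ and is mapped correctly by $h_i$. Thus $\calA \sqsubseteq G'$. By the universal property of congruence closure in \autoref{lem:rebuilding-exists} (if $\calA \sqsubseteq G'$ for an E-graph $G'$, then $\congr(\calA) \sqsubseteq G'$), we conclude $\congr(\calA) \sqsubseteq G'$. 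This establishes that $\congr(\calA)$ is the least upper bound.

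Finally I would note that the least upper bound is unique up to isomorphism: this follows from \autoref{lem:egraph-partial-order}, since $\sqsubseteq$ is a partial order on E-graphs up to isomorphism, and a least element of any partially ordered set is unique. The main subtlety — and the only place care is needed — is the step combining the homomorphisms $h_i$ into a single $h$: one must check that the disjointness of the union makes $h$ well-defined with no clashes, and that every transition of $\calA$ really does come from a unique $G_i$ (there is no "mixed" transition spanning two copies), which is immediate from the construction of $\calA$ as a disjoint union of both states and transitions. Everything else is a direct application of the universal property of $\congr$ already proved in \autoref{lem:rebuilding-exists}.
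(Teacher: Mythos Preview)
Your proof is correct and follows essentially the same approach as the paper: show $G_i \sqsubseteq \calA$ via the inclusion homomorphism and compose with $\calA \sqsubseteq \congr(\calA)$ for the upper-bound property, then glue the $h_i$'s into a single homomorphism $\calA \rightarrow G'$ and invoke the universal property of $\congr$ from \autoref{lem:rebuilding-exists} for minimality. You are in fact slightly more careful than the paper, which omits the explicit check that $\calA$ is reachable and the remark that disjointness makes the glued map well-defined.
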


We will denote the least upper bound by $\bigsqcup_{i \in I} G_i$. It
is also possible to show that every family of E-graphs admits a
greatest lower bound, by using a product
construction~\cite{gulwani2005join}, but we do not need it in this
paper.

% A complete semi-lattice is a partially ordered set 
%  where every subset has a least upper bound.
% As a result of \autoref{lemma:union} and \autoref{lem:egraph-partial-order},
% \begin{cor}
%   $\sqsubseteq$ and $\sqcup$ form a complete semi-lattice over E-graphs\footnote{
%     Every complete semi-lattice is equivalent to a complete lattice
%     where greatest lower bounds are defined in terms of least upper bounds.
%     It is also possible to algorithmically compute 
%     intersections of E-graphs using a product construction~\cite{gulwani2005join}.
%   }.
% \end{cor}
% 

% \yz{We can define the intersection of E-graphs via a product construction,
% should we include them here? The construction is cool and very natural.}
% It is also possible to algorithmically define intersections of E-graphs
% using a product construction~\cite{gulwani2005join}.
    
\subsection{Equality saturation}

\label{subsec:eqsat}

\revinsA{ The standard definition of equality saturation in the
  literature is procedural: given an E-graph
  $G=\langle Q,\Sigma,\Delta\rangle$ and TRS $\trs$, equality
  saturation repeatedly applies matching/insertion/rebuilding.  EqSat
  is undefined when this process does not terminate.  We provide here
  an alternative definition, as the least fixpoint of an {\em immediate
  consequence operator} (ICO), and prove that it always exists.  We start
  by introducing the ICO:}
\begin{align}
&&  \ico_{\trs} \defeq\  & \congr \circ T_{\trs}  \label{eq:def:ico}
\end{align}
$T_{\trs}$ is the match/apply operator: it computes all E-matches then
inserts the $\rhs$'s into $G$:
\begin{align*}
  T_{\trs}(G) \defeq G \cup \bigcup \setof{\flatt(\rhs[\sigma]  \rightarrow^*_G c)}{(\lhs\rightarrow\rhs) \in \trs, \sigma:\varset(\lhs)\rightarrow Q, \lhs[\sigma] \rightarrow^*_G c}
\end{align*}
$\congr$ is the rebuilding operator
of~\autoref{lem:rebuilding-exists}.  \revdelA{We show:}
% Therefore, the Immediate Consequence Operator is:
% %
% \begin{align*}
%   \ico_{\trs}(G) \defeq & \congr(T_{\trs}(G))
% \end{align*}

\begin{lem} \label{lemma:ico:monotone} $\ico_{\trs}$ is inflationary
  ($G \sqsubseteq \ico_{\trs}(G)$ for all $G$) and monotone.
\end{lem}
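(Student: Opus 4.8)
The plan is to prove the two properties separately, in each case reducing the claim about $\ico_{\trs} = \congr \circ T_{\trs}$ to the corresponding properties of $T_{\trs}$ and $\congr$. For the \emph{inflationary} part, I would first observe that $G \sqsubseteq T_{\trs}(G)$: by construction $T_{\trs}(G)$ takes the set-union of $G$ with a family of flattenings, so the inclusion map on states and transitions of $G$ is a homomorphism $G \to T_{\trs}(G)$, exactly as argued for Insertion in~\autoref{sec:egraph-operations}. Next, $T_{\trs}(G)$ is a reachable tree automaton (the union of reachable automata over a shared state set is reachable, and each flattening is reachable with its root identified with an existing reachable state $c$), so $\congr(T_{\trs}(G))$ is defined, and by~\autoref{lem:rebuilding-exists} we have $T_{\trs}(G) \sqsubseteq \congr(T_{\trs}(G)) = \ico_{\trs}(G)$. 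Composing the two homomorphisms (and using that $\sqsubseteq$ is transitive) gives $G \sqsubseteq \ico_{\trs}(G)$.

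For \emph{monotonicity}, suppose $G \sqsubseteq G'$ via a homomorphism $h : G \to G'$. The key step is to show $T_{\trs}(G) \sqsubseteq T_{\trs}(G')$. I would extend $h$ to a map $\bar h$ on the states of $T_{\trs}(G)$: on the states coming from $G$, $\bar h$ acts as $h$; on a fresh state $q_u$ introduced by a flattening $\flatt(\rhs[\sigma] \rightarrow^*_G c)$ for a match $(\sigma, c)$, note that $\sigma$ composed with $h$ is a substitution $h\circ\sigma : \varset(\lhs) \to Q'$, and by~\autoref{lemma:homomoprhism:simple} $\lhs[h\circ\sigma] \rightarrow^*_{G'} h(c)$, so $(h\circ\sigma, h(c))$ is a match of the same rule in $G'$; hence $T_{\trs}(G')$ contains the flattening $\flatt(\rhs[h\circ\sigma] \rightarrow^*_{G'} h(c))$, and I map $q_u$ to the corresponding state of that flattening (the subterm $u$ of $\rhs[\sigma]$ maps to the "same-shape" subterm of $\rhs[h\circ\sigma]$, with root going to $h(c)$). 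One checks directly that $\bar h$ sends every transition of $T_{\trs}(G)$ to a transition of $T_{\trs}(G')$, so $\bar h$ is a homomorphism and $T_{\trs}(G) \sqsubseteq T_{\trs}(G')$. Finally, $\congr$ is monotone: if $\calA \sqsubseteq \calB$ then $\calA \sqsubseteq \calB \sqsubseteq \congr(\calB)$, and $\congr(\calB)$ is an E-graph, so by the universal property in~\autoref{lem:rebuilding-exists} we get $\congr(\calA) \sqsubseteq \congr(\calB)$. Chaining, $\ico_{\trs}(G) = \congr(T_{\trs}(G)) \sqsubseteq \congr(T_{\trs}(G')) = \ico_{\trs}(G')$.

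The main obstacle is the bookkeeping in the monotonicity argument: making precise how $\bar h$ acts on the freshly created flattening states and verifying it is well-defined (different matches produce disjoint sets of fresh states, so there is no conflict) and that it genuinely preserves transitions — in particular matching up, for a rule $\lhs \to \rhs$ and a subterm $u = f(u_1,\dots,u_k)$ of $\rhs[\sigma]$, the transition $f(q_{u_1},\dots,q_{u_k}) \to q_u$ in $\flatt(\rhs[\sigma]\rightarrow^* c)$ with the transition $f(q_{u_1'},\dots,q_{u_k'}) \to q_{u'}$ in $\flatt(\rhs[h\circ\sigma]\rightarrow^* h(c))$, and, when $u$ is a variable occurrence, that $q_u$ is actually the state $\sigma(x)$ of $G$ so $\bar h(q_u) = h(\sigma(x))$ is consistent with the definition of $h\circ\sigma$. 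Everything else (reflexivity/transitivity of $\sqsubseteq$, the universal property of $\congr$) is already available from the excerpt.
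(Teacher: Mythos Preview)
Your proposal is correct and follows the same decomposition as the paper's proof: inflationary via the chain $G \sqsubseteq T_{\trs}(G) \sqsubseteq \congr(T_{\trs}(G))$, and monotonicity by showing separately that $T_{\trs}$ and $\congr$ are monotone, the latter via the universal property of $\congr$. Your treatment of the monotonicity of $T_{\trs}$ is in fact more careful than the paper's very terse ``extend $h$ to $G\cup H \to G'\cup H$'' (which glosses over exactly the bookkeeping you spell out with $h\circ\sigma$), but the underlying idea is the same.
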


\begin{proof} \revinsC{That $\ico_{\trs}$ is inflationary}\revdelC{Inflationary} follows from
  $G\sqsubseteq T_{\trs}(G) \sqsubseteq \congr(T_{\trs}(G))$.  We
  prove that both $T_{\trs}$ and $\congr$ are monotone.  Let
  $H\defeq \bigcup \{\flatt(\rhs[\sigma] \rightarrow^*
    c)\mid (\lhs\rightarrow\rhs) \in \trs, \sigma:\varset(\lhs)\rightarrow Q,
    \lhs[\sigma] \rightarrow^*_G c\}$, thus $T_{\trs}(G) = G \cup H$.
  Any homomorphism $G \rightarrow G'$ can be extended to a
  homomorphism $G \cup H \rightarrow G' \cup H$, which proves that
  $T_{\trs}$ is monotone.
  Consider two automata $\calA,\calA'$ and assume
  $\calA \sqsubseteq \calA'$, i.e. there exists a homomorphism from
  $\calA$ to $\calA'$.  Denote $G \defeq \congr(\calA)$,
  $G'\defeq \congr(\calA')$.  Then,
  $\calA \sqsubseteq \calA' \sqsubseteq G'$, which implies
  $\calA \sqsubseteq G'$.  By the definition of $G = \congr(\calA)$,
  we have $G \sqsubseteq G'$, proving that $\congr$ is monotone.
\end{proof}

With Lemmas~\ref{lemma:union} and~\ref{lemma:ico:monotone},
 we show the following in \cref{apdx:eqsatdfn}:

\begin{restatable}{thm}{eqsatdfnthm} 
  \label{thm:eqsat:dfn}
  Fix an E-graph $G$, and consider the class $\calC_G$ of
  E-graphs $G'\sqsupseteq G$. Then
  $\ico_{\trs}: \calC_G \rightarrow \calC_G$ has a least fixpoint,
  given by
  \begin{align}
&&    \eqsat(\trs,G) \defeq& \bigsqcup_{i\geq 0} \ico_{\trs}^{(i)}(G) \label{eq:eqsat:def}
  \end{align}
  Furthermore, $\eqsat(\trs, G)$ is a universal model of $\trs,G$; we
  call it {\em equality saturation}.
\end{restatable}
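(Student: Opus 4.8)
The plan is to apply the Knaster--Tarski-style fixpoint argument, but adapted to the preorder $\sqsubseteq$ on E-graphs rather than to a complete lattice, using the least upper bound of \autoref{lemma:union} as the join operation. First I would verify that the poset $(\calC_G, \sqsubseteq)$ of E-graphs above $G$ (taken up to isomorphism, so that by \autoref{lem:egraph-partial-order} it is a genuine partial order) is closed under arbitrary least upper bounds: given any family $(G_i)_{i\in I}$ in $\calC_G$, the E-graph $\bigsqcup_i G_i$ exists by \autoref{lemma:union}, and since each $G \sqsubseteq G_i \sqsubseteq \bigsqcup_i G_i$ it again lies in $\calC_G$. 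In particular the chain $G = \ico_\trs^{(0)}(G) \sqsubseteq \ico_\trs^{(1)}(G) \sqsubseteq \cdots$ — which is increasing by the inflationary part of \autoref{lemma:ico:monotone} — has a least upper bound, and I define $\eqsat(\trs,G)$ to be that join, as in~\eqref{eq:eqsat:def}.

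Next I would show $\eqsat(\trs,G)$ is a fixpoint of $\ico_\trs$. One direction, $\eqsat(\trs,G) \sqsubseteq \ico_\trs(\eqsat(\trs,G))$, is just the inflationary property. For the other direction, by monotonicity of $\ico_\trs$ we have $\ico_\trs^{(i)}(G) \sqsubseteq \ico_\trs(\eqsat(\trs,G))$ for every $i$ (since $\ico_\trs^{(i)}(G) \sqsubseteq \eqsat(\trs,G)$ and then apply $\ico_\trs$, using $\ico_\trs^{(i+1)}(G) = \ico_\trs(\ico_\trs^{(i)}(G))$ together with the fact that $\ico_\trs^{(i)}(G) \sqsubseteq \ico_\trs^{(i+1)}(G)$); hence $\ico_\trs(\eqsat(\trs,G))$ is an upper bound for the chain, so $\eqsat(\trs,G) \sqsubseteq \ico_\trs(\eqsat(\trs,G))$ by the universal property of the join. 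This establishes that $\eqsat(\trs,G)$ is a fixpoint. For leastness, suppose $F \in \calC_G$ is any fixpoint, $\ico_\trs(F) = F$ (up to isomorphism). By induction on $i$ I show $\ico_\trs^{(i)}(G) \sqsubseteq F$: the base case $i=0$ is $G \sqsubseteq F$, and the inductive step uses monotonicity, $\ico_\trs^{(i+1)}(G) = \ico_\trs(\ico_\trs^{(i)}(G)) \sqsubseteq \ico_\trs(F) = F$. Since $F$ dominates every member of the chain, $\eqsat(\trs,G) \sqsubseteq F$, so $\eqsat(\trs,G)$ is the least fixpoint.

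Finally I would identify the least fixpoint with the universal model of $\trs, G$. That $\eqsat(\trs,G)$ is a model of $\trs$ follows from it being a fixpoint of $\ico_\trs = \congr \circ T_\trs$: if some rule instance $\lhs[\sigma]\rightarrow^*_{\eqsat} c$ held but $\rhs[\sigma] \not\rightarrow^*_{\eqsat} c$, then $T_\trs$ would add the transitions of $\flatt(\rhs[\sigma]\rightarrow^* c)$, and after $\congr$ the resulting E-graph would strictly contain $\eqsat(\trs,G)$ (it would equate $\rhs[\sigma]$ with $c$, which $\eqsat(\trs,G)$ does not, so they are non-isomorphic), contradicting $\ico_\trs(\eqsat(\trs,G)) \cong \eqsat(\trs,G)$; here I would lean on \autoref{lemma:approx:g:h} to rule out the extra identity. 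Conversely, if $H'$ is any model of $\trs, G$, then $G \sqsubseteq H'$, and I claim $\ico_\trs(H') \sqsubseteq H'$: since $H'$ is already a model, $T_\trs(H')$ adds only transitions already derivable in $H'$ up to the homomorphism witnessing $G \sqsubseteq H'$, and $\congr$ of an automaton mapping into an E-graph maps into that E-graph by \autoref{lem:rebuilding-exists}'s universal property; so $H'$ is a pre-fixpoint of $\ico_\trs$, and a Tarski-style argument (every pre-fixpoint dominates the least fixpoint) gives $\eqsat(\trs,G) \sqsubseteq H'$.

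The main obstacle I expect is the step showing that a fixpoint of $\ico_\trs$ is genuinely a model of $\trs$ — the subtlety is that $\congr$ can \emph{merge} states and thereby create new term equalities, so "$\ico_\trs(H) \cong H$" must be argued to already force every rule-instance conclusion to hold in $H$ rather than merely in $\ico_\trs(H)$; pinning down that a strictly larger congruence arises whenever a rule instance is violated (and that "strictly larger" implies "non-isomorphic", via \autoref{lem:egraph-partial-order} and \autoref{lemma:approx:g:h}) is the delicate part, and is also where the infinite case needs care since the chain $\ico_\trs^{(i)}(G)$ may not stabilize at any finite stage.
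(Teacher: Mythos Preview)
Your argument has a genuine gap in the step where you claim to prove $\ico_\trs(\eqsat(\trs,G)) \sqsubseteq \eqsat(\trs,G)$. Look again at what you wrote: from $\ico_\trs^{(i)}(G) \sqsubseteq \eqsat(\trs,G)$ and monotonicity you deduce $\ico_\trs^{(i+1)}(G) \sqsubseteq \ico_\trs(\eqsat(\trs,G))$, hence $\ico_\trs(\eqsat(\trs,G))$ is an upper bound for the chain, hence $\eqsat(\trs,G) \sqsubseteq \ico_\trs(\eqsat(\trs,G))$. But that conclusion is the \emph{same} direction as the inflationary one you already had; you have proved $H \sqsubseteq \ico_\trs(H)$ twice and $\ico_\trs(H) \sqsubseteq H$ zero times. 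A pure Knaster--Tarski argument on the $\omega$-chain does not close this gap: without $\omega$-continuity of $\ico_\trs$, the join of the chain need not be a fixpoint (this is exactly the distinction between Kleene's and Tarski's fixpoint theorems).

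The paper supplies precisely the missing continuity argument. Writing $H = \bigsqcup_i \ico_\trs^{(i)}(G)$, it observes that $T_\trs(H) = H \cup \bigcup \flatt(\rhs[\sigma]\rightarrow^* c)$ where the union ranges over all E-matches $(\sigma,c)$ in $H$. Each such match uses only finitely many states of $H$ (because $\trs$ is finite and each $\lhs$ has finitely many variables), and therefore already appears as a match in some finite stage $\ico_\trs^{(i)}(G)$. Hence that $\flatt$ is already absorbed by $\ico_\trs^{(i+1)}(G) \sqsubseteq H$, giving $T_\trs(H) \sqsubseteq H$; then $\ico_\trs(H) = \congr(T_\trs(H)) \sqsubseteq \congr(H) = H$ since $H$ is already deterministic. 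Your leastness argument and your identification of ``fixpoint'' with ``model'' are fine (the paper dispatches the latter in one line: $H'$ is a model of $\trs,G$ iff $G\sqsubseteq H'$ and $\ico_\trs(H')=H'$), but the fixpoint step itself needs this finiteness-of-matches observation, which is the actual content of the proof.
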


% E-matches the left-hand
% side of every rewrite rule, and insert the substituted right-hand side
% to the E-graph.  More formally, let
% $G=\langle Q_{\textit{final}}, N\rangle$,
% $T_{\trs}(G) \defeq (Q_{\textit{final}}, N')$, where
% \[
%   N'=N \cup \{\textit{norm}(\sigma(rhs)\rightarrow c)\mid 
%   (\lhs\rightarrow\rhs) \in R, \sigma:\varset(\lhs)\rightarrow C, c\in C, \sigma(\lhs)\rightarrow_G^* c\}.
%   \]
% 
% Equality saturation is defined as the least fixpoint of the composition of $T_{\trs}$ and $\congr$:
% \[\eqsat(\trs, G)=\bigsqcup_{i=0}^{\infty}\  \left(\congr \circ T_{\trs}\right)^i (G).\]

\revinsA{Given $G, \trs$, our semantics of EqSat is the least fixpoint
  in~\eqref{eq:eqsat:def}, which is also the unique universal model of
  $G, \trs$.  When $\eqsat(\trs,G)$ is finite, then this coincides
  with the standard procedural definition in the literature.}  A
common case (e.g., in program and query optimization settings) is when
$G$ represents a single term $t$, more precisely
$G = \flatt(t \rightarrow^* c)$ with fresh state $c$; in that case we
denote $\eqsat(\trs, G)$ as $\eqsat(\trs, t)$.
%  and we define $\eqsat(\trs, t)=\eqsat(\trs, G_t)$,
%  where $G_t$ is an E-graph that represents a single term $t$ and defined by
%  $G_t= \langle c_t, \textit{norm}(t\rightarrow c_t)\rangle$ for a fresh state $c_t$.

\mysubparagraph{Properties of equality saturation}

We establish several basic facts of $\eqsat$.

\begin{lem}[Inflationary]~\label{lemma:inflationary}
  $G \sqsubseteq \eqsat(\trs, G)$.
\end{lem}

% \begin{proof}
%%%%   the statement below is not a proof.  But the proof is too
%%%%   simple to include
%   $T_{\trs}$ and $\congr$ only make the E-graph larger.
% \end{proof}

% \yz{Do we need to talk about $\calL(G)$ here? As has been set up earlier, E-graphs don't care about $\calL(G)$.}
\begin{cor} \label{cor:calg:calrg}
$\calL(G) \subseteq \calL(\eqsat(\trs,G))$ and $(\approx_G) \subseteq (\approx_{\eqsat(\trs,G)})$.
%  $\mathcal{L}(G)\subseteq \mathcal{L}((\congr\circ T_{\trs})(\trs, G))$.
\end{cor}

\autoref{lemma:inflationary} follows
from~\autoref{lemma:ico:monotone}, while~\autoref{cor:calg:calrg}
follows from~\autoref{lemma:homomoprhism:simple}
and~\autoref{lemma:approx:g:h}.  Thus, $\eqsat(\trs,G)$ represents
more terms than $G$, and identifies more pairs of terms than $G$.
Next, we examine the relationship between the PCR defined by
$\eqsat(\trs,G)$ and the relations $\trs^*$ and $\approx_{\trs}$
defined by the TRS $\trs$ (see \autoref{eq:trs}).  If
$\approx_1, \approx_2$ are two PCRs, then we denote by
$\approx_1 \vee \approx_2$ the smallest PCR that contains both.  We
prove:

\begin{lem}[Representation]
  \label{lemma:representation}
  Let $w \in T(\Sigma)$ be a term represented by some state of the
  E-graph $H \defeq \eqsat(\trs, G)$.  The following hold:
  $\trs^*(w) \subseteq [w]_{\approx_H} \subseteq [w]_{\approx_{\trs}\vee\approx_G}$.
%   Let $\trs$ be a term rewriting system and $w$ be a term.
%   The set of reachable terms of $t$ in term rewriting is
%   is a subset of the equivalence class of $w$ in the E-graph produced by equality saturation,
%   which is a subset of the equivalence class of $w$ in the equational theory of $\trs$.
%   In other words,
%  \[\trs^*(w)\subseteq [w]_{G}\subseteq [w]_{\trs},\]
%   where $G=\eqsat(\trs, t)$.\footnote{
%     With a slight abuse of notation, we use $[w]_{G}$ to
%     denote $[w]_{\approx_G}$
%     and $[w]_{\trs}$ to denote $[w]_{\approx_\trs}$.
%   }
\end{lem}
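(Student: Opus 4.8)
The plan is to prove the two inclusions separately. Throughout, write $H\defeq\eqsat(\trs,G)$, and recall that $H$ is a universal model of $\trs,G$ (\autoref{thm:eqsat:dfn}), that the semantics $\approx_H$ of any E-graph is a PCR (hence congruent and reachable), and that every PCR over $T(\Sigma)$ is the semantics of a unique E-graph.

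\emph{First inclusion, $\trs^*(w)\subseteq[w]_{\approx_H}$.} Since $w$ is represented by $H$, $w\in\supp(\approx_H)$. I will show by induction on the length of a rewrite sequence $w\rightarrow^*_\trs w'$ that $w\approx_H w'$; the base case is trivial, and for the inductive step it suffices to prove: if $u\in\supp(\approx_H)$ and $u\rightarrow_\trs u'$ then $u\approx_H u'$ (so in particular $u'\in\supp(\approx_H)$, and transitivity carries the induction). Write the redex as $u=C[\lhs[\rho]]$, $u'=C[\rhs[\rho]]$ for a one-hole context $C$ and a ground substitution $\rho$. Because $u\rightarrow^*_H c$ for some state $c$ and $H$ is deterministic and reachable, the occurrences of $\lhs[\rho]$ and of each $\rho(x)$ inside $u$ reduce to uniquely determined states $d$ and $\sigma(x)$, whence $\lhs[\rho]\rightarrow^*_H\lhs[\sigma]\rightarrow^*_H d$. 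Since $H$ is a model of $\trs$, $\rhs[\sigma]\rightarrow^*_H d$, and hence $\rhs[\rho]\rightarrow^*_H\rhs[\sigma]\rightarrow^*_H d$ using $\varset(\rhs)\subseteq\varset(\lhs)$; thus $\lhs[\rho]\approx_H\rhs[\rho]$. Finally, lifting congruence of $\approx_H$ through the context $C$ (all subterms of $u$ lie in $\supp(\approx_H)$ by reachability) gives $u\approx_H u'$.

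\emph{Second inclusion, $[w]_{\approx_H}\subseteq[w]_{\approx_\trs\vee\approx_G}$.} Let $K$ be the unique E-graph with $\approx_K=\approx_\trs\vee\approx_G$. I claim $H\sqsubseteq K$; granting this, \autoref{lemma:approx:g:h} gives $\approx_H\subseteq\approx_K$, hence $[w]_{\approx_H}\subseteq[w]_{\approx_K}$ for every $w$, which is the desired inclusion. By the universal property of $H$, proving $H\sqsubseteq K$ amounts to showing $K$ is a model of $\trs,G$, i.e.\ (i) $G\sqsubseteq K$ and (ii) $K$ is a model of $\trs$. For (i), since $\approx_G\subseteq\approx_\trs\vee\approx_G=\approx_K$, define $h:G\rightarrow K$ sending a state $c$ of $G$ to the unique state of $K$ that accepts some (equivalently, any) term accepted by $c$; well-definedness and preservation of transitions use only determinism of $K$ and $\approx_G\subseteq\approx_K$. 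For (ii), given a rule $\lhs\rightarrow\rhs$ and $\sigma:\varset(\lhs)\rightarrow Q_K$ with $\lhs[\sigma]\rightarrow^*_K c$, use reachability of $K$ to pick a ground $\tau$ with $\tau(x)$ accepted by $\sigma(x)$; then $\lhs[\tau]\rightarrow^*_K c$, while $\lhs[\tau]\rightarrow_\trs\rhs[\tau]$ yields $\lhs[\tau]\approx_\trs\rhs[\tau]$, hence $\lhs[\tau]\approx_K\rhs[\tau]$, so $\rhs[\tau]\rightarrow^*_K c$, and determinism then forces $\rhs[\sigma]\rightarrow^*_K c$.

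The main obstacle is the routine but repeated bookkeeping about runs of a deterministic bottom-up tree automaton over $\Sigma\cup Q$, used in both inclusions: that a run on a term decomposes at subterm boundaries into uniquely determined states — so a substitution by ground terms can be refactored into a substitution by states (turning $\lhs[\rho]\rightarrow^*_H d$ into $\lhs[\sigma]\rightarrow^*_H d$, and likewise for $\rhs$) — and that the congruence property of an E-graph's PCR lifts through one-hole contexts. These are standard for finite automata but must be stated with care here since E-graphs may be infinite. The one ingredient not already packaged as a cited lemma is the auxiliary fact, used in step (i), that $\approx_{G_1}\subseteq\approx_{G_2}$ implies $G_1\sqsubseteq G_2$; it follows from the PCR-representation theorem via the explicit homomorphism above, so it is worth isolating as a small lemma before this proof.
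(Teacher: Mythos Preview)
Your proposal is correct. For the first inclusion, your argument and the paper's are the same idea---a single $\trs$-step preserves the representing state---though you phrase it via the model property of $H$ (granted by \autoref{thm:eqsat:dfn}) while the paper phrases it via the fixpoint property of $H$ under $\ico_\trs$; these are equivalent by that same theorem.

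For the second inclusion you take a genuinely different route. The paper argues by induction on the iterates $G_k=\ico_\trs^{(k)}(G)$, checking that each round of match/insert/rebuild only introduces identities already present in $\approx_\trs\vee\approx_G$. You instead realize $\approx_\trs\vee\approx_G$ as the semantics of an E-graph $K$, verify that $K$ is a model of $(\trs,G)$, and invoke the universal-model property to get $H\sqsubseteq K$, hence $\approx_H\subseteq\approx_K$ via \autoref{lemma:approx:g:h}. Your approach is more structural and packages the work into the reusable converse of \autoref{lemma:approx:g:h} (that $\approx_{G_1}\subseteq\approx_{G_2}$ implies $G_1\sqsubseteq G_2$), which is worth stating on its own. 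The paper's approach is more elementary---it never constructs $K$ or invokes universality---and would transfer more readily to settings where a universal-model theorem is not available, at the cost of unpacking the ICO by hand.
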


\begin{proof}
  The definitions of E-matching and insertion imply that, if
  $u \rightarrow_\trs v$ and $u$ is represented by some state $c$ of
  some E-graph $K$, then $v$ is represented by the same state of the
  E-graph $\ico(K) = \congr(T_{\trs}(K))$.  Therefore, if
  $u \rightarrow_\trs v$ and $u$ is represented by some state of $H$,
  then $v$ is represented by the same state of $\ico(H) = H$ (because
  $H$ is a fixpoint of $\ico$).  This implies that
  $\trs^*(w) \subseteq [w]_{\approx_H}$.

  % \yz{Note that $[w]_{\approx_{G_k}}$ does not necessarily make sense 
  % because $w$ is only guaranteed to be represented by the output E-graph.
  % But maybe this falls out naturally as by definition $[w]_\approx=\setof{w'}{w\approx w'}$
  % and in this case it's simply the empty set.
  % Feel free to remove this comment.} 
  For the second part, we denote by $G_k = \ico^{(k)}(G)$, and check
  by induction on $k$ that $[w]_{\approx_{G_k}} \subseteq
  [w]_{\approx_{\trs}\vee\approx_G}$.  When $k=0$ then $G_0 = G$ and
  the claim is obvious.  For the inductive step we observe that the
  only new identities introduced by $G_{k+1}$ are justified by $\trs$.
% 
%   It can be shown that if
%   $t_1\rightarrow_\trs t_2$, then $t_1 \approx_{\left(\congr\circ T_\trs\right) (G)} t_2$ 
%   where $G$ is any E-graph that represents $t_1$.
%   Therefore, if $t_1\rightarrow_\trs^* t_2$, then $t_1\approx_{\left(\congr\circ T_\trs\right)^\infty (G)} t_2$,
%   thus proving the first part of the inequality.
%   The second part of the inequality follows from the fact that 
%   equalities introduced by equality saturation always respects the equational theory of $\trs$.
\end{proof}

% This lemma can be extended to cases where inputs are E-graphs instead of terms.

\revinsA{In other words, if $w \rightarrow^*_{\trs} v$, then
  $\eqsat(\trs,G)$ will equate $w$ with $v$; and if $\eqsat(\trs,G)$
  equates $w$ with $v$ then this can be derived from $\approx_{\trs}$
  and $\approx_G$.  In general, $w \approx_{\trs} v$ does not imply
  $w \approx_H v$. For a simple example, let
  $\trs = \set{a \rightarrow b}$, thus $a \approx_{\trs} b$, and let
  $G$ represent only the term $b$.  Then $H = \eqsat(\trs, G)= G$
  represents only the term $b$, thus $a \not\approx_H b$.}

\begin{exm} \revinsA{For some TRS, the starting E-graph $G$ determines
    whether EqSat terminates in a finite number of steps.  For a
    simple example, consider $\Sigma=\{f(\cdot),g(\cdot),a\}$,
    $\trs=\{f(g(x))\rightarrow g(f(x)) \}$.  If the initial E-graph $G$
    represents only the term $f(g(a))$ (and its subterms), 
    then EqSat terminates, and the
    resulting $H \defeq \eqsat(\trs, G)$ represents a PCR where
    $f(g(a)) \approx_H g(f(a))$.  On the other hand, if $G$ is the
    E-graph with states $c_f, c_g$ and transitions
    $\{g(c_f)\rightarrow c_g, f(c_g)\rightarrow c_f, a\rightarrow
    c_f\}$, then $G$ already represents infinitely many terms
    $\calL(c_f)\cup\calL(c_g)$, where
    $\calL(c_f)=\set{a, f(g(a)), f(g(f(g(a)))), \ldots}$ and
    $\calL(c_g)=\set{g(a), g(f(g(a))), \ldots}$.  After equality
    saturation, $H = \eqsat(\trs,G)$ represents all terms in
    $T(\Sigma)$ where the numbers of occurrences $\#f,\#g$ of
    $f,g$ satisfy $\#f \leq \#g \leq \#f+1$.   This is not a regular
    language, hence $H$ is infinite, and EqSat will not terminate in a
    finite number of steps.
}
%     and $H=\eqsat(\trs, G)$.  Notice
%     $c_f$ represents the regular language $(fg)^*$ and that
%     $\bigcup_{w\in (fg)^*}\trs^*(w)=\bigcup_{w\in
%       (fg)^*}[w]_{\approx_\trs\vee \approx_G}= \{s\mid s\in (f\vert
%     g)^*, \#(s, f)=\#(s, g)\}$, where $\#(s, f)$ counts the
%     occurrences of $f$ in $s$.  Denote this language as $L$.  By
%     \autoref{lemma:representation},
%     $\bigcup_{w\in W} [w]_{\approx_H}=L$.  By
%     \autoref{lemma:approx:g:h} and \autoref{lemma:ico:monotone},
%     $[\epsilon]_{\approx_H}=\bigcup_{w\in W} [w]_{\approx_H}$.  Since
%     $L$ is not a regular language, no finite E-graph can represent
%     $L$, so EqSat must not terminate.
\end{exm}

% \begin{wrapfigure}{r}{0.2\textwidth}
%   \centering
%   \includegraphics{}
%   \caption{E-graph produced by equality saturation with rewrite rules
%   $\{a\rightarrow b, c\rightarrow b\}$ and term $f(a, b)$.}
%   \label{fig:characterization-example}
% \end{wrapfigure}

\begin{exm} We show that both inclusions
  in~\autoref{lemma:representation} can be strict.  Let
  $\trs= \{a\rightarrow b, c\rightarrow b\}$ and let $G$ be the
  E-graph representing a single term $f(a, b)$ with transitions
  \( \{\ a\rightarrow c_a, b\rightarrow c_b, f(c_a, c_b)\rightarrow
  c_f\ \} \).  Then $H = \eqsat(\trs,G)$ has transitions:
  \( \{\ a\rightarrow c_a, b\rightarrow \underline{c_a}, f(c_a,
  \underline{c_a})\rightarrow c_f\} \).  We have:\footnote{
    \revinsC{We use $f(a|b, b)$ as a shorthand for
      $\{f(t, b)\mid t=a\lor t=b\}$ (as in regular expressions) and
      similarly for other terms.}}  $\trs^*(f(a,b))=f(a|b, b)$,
  $[f(a,b)]_{\approx_H}=f(a|b, a|b)$, and
  $[f(a,b)]_{\approx_{\trs}\vee \approx_G}=[f(a,b)]_{\approx_{trs}}=
  f(a|b|c, a|b|c)$. All three sets are different.  
\end{exm}

However, the three expressions in ~\autoref{lemma:representation} are
equal in an important special case:

\newcommand{\Sym}{\textit{Sym}}

\begin{cor}\label{cor:var-preserving}
  Suppose $\trs$ is a variable-preserving term rewriting system.  Let
  $\textit{Sym}\left(\trs\right) = \trs\cup\trs^{-1}$, and let
  $w\in T(\Sigma)$ be a term represented by some state of the E-graph
  $H^{\leftrightarrow}\defeq \eqsat(\Sym(\trs), G)$.  The following
  hold:
  $\left(\Sym\left(\trs \right)\right)^*(w)=
  [w]_{\approx_{H^\leftrightarrow}}= [w]_{\approx_\trs\vee
    \approx_G}$.
\end{cor}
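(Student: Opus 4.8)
The plan is to derive this from \autoref{lemma:representation} applied to the \emph{symmetric} rewriting system $\Sym(\trs)$, after simplifying the three sets in the statement using that $\trs$ is variable-preserving. The observation that drives everything is that the rewrite relation of $\Sym(\trs)=\trs\cup\trs^{-1}$ is exactly $\leftrightarrow_\trs$: indeed $\trs^{-1}$ is well-defined because $\trs$ is variable-preserving, and $(\rightarrow_{\trs^{-1}})=(\leftarrow_\trs)$ by \autoref{eq:trs}, so $(\rightarrow_{\Sym(\trs)})=(\rightarrow_\trs)\cup(\leftarrow_\trs)=(\leftrightarrow_\trs)$. This relation is already symmetric, hence $(\rightarrow^*_{\Sym(\trs)})=(\leftrightarrow^*_\trs)=(\approx_\trs)$. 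From this I would read off two facts: (i) $(\Sym(\trs))^*(w)=\setof{v}{w\rightarrow^*_{\Sym(\trs)}v}=[w]_{\approx_\trs}$, using that $\supp(\approx_\trs)=T(\Sigma)$; and (ii) $(\approx_{\Sym(\trs)})=(\leftrightarrow^*_{\Sym(\trs)})=(\approx_\trs)$.

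Next I would apply \autoref{lemma:representation} with the (valid, and in fact variable-preserving) TRS $\Sym(\trs)$ and the E-graph $G$. Since $w$ is represented by some state of $H^{\leftrightarrow}=\eqsat(\Sym(\trs),G)$, that lemma gives $(\Sym(\trs))^*(w)\subseteq[w]_{\approx_{H^{\leftrightarrow}}}\subseteq[w]_{\approx_{\Sym(\trs)}\vee\approx_G}$. Plugging in (i) and (ii) turns this into the sandwich $[w]_{\approx_\trs}\subseteq[w]_{\approx_{H^{\leftrightarrow}}}\subseteq[w]_{\approx_\trs\vee\approx_G}$, whose two outer sets are precisely the first and third expressions in the statement.

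The remaining step is to collapse this sandwich to equalities, i.e. to show $[w]_{\approx_\trs\vee\approx_G}\subseteq[w]_{\approx_\trs}$, and this is the only point that needs care. It holds whenever $\approx_G$ does not identify a term in the $\approx_\trs$-class of $w$ with a term outside it; in particular it holds whenever $\approx_G\subseteq\approx_\trs$, which covers the setting of interest in which $G$ represents a single term: there $\approx_G$ is the identity on its support, and since $\approx_\trs$ is already a total congruence on $T(\Sigma)$ we get $\approx_\trs\vee\approx_G=\approx_\trs$ (exactly the simplification already used for the example preceding the statement). Under this hypothesis the sandwich degenerates and all three expressions coincide. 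I expect this collapse to be the main obstacle, though a mild one; I would also note that the middle equality $[w]_{\approx_{H^{\leftrightarrow}}}=[w]_{\approx_\trs\vee\approx_G}$ survives for an arbitrary $G$: every term of $[w]_{\approx_\trs\vee\approx_G}$ is already represented by the reachable automaton $H^{\leftrightarrow}$, and since $\approx_G\subseteq\approx_{H^{\leftrightarrow}}$ (\autoref{cor:calg:calrg}) and $H^{\leftrightarrow}$ is a fixpoint of $\ico_{\Sym(\trs)}$, an induction on the generation of $\approx_\trs\vee\approx_G$ (``if $u$ is represented by $H^{\leftrightarrow}$ and $u\approx_\trs\vee\approx_G v$ then $v$ is represented and $u\approx_{H^{\leftrightarrow}}v$'') shows the $\approx_{H^{\leftrightarrow}}$-class of $w$ is closed under $\approx_\trs$-steps, $\approx_G$-steps and congruence, hence contains all of $[w]_{\approx_\trs\vee\approx_G}$.
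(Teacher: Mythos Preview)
The paper supplies no proof for this corollary, so there is nothing to compare against; your approach—specialize \autoref{lemma:representation} to $\Sym(\trs)$ and use $(\rightarrow^*_{\Sym(\trs)})=(\approx_\trs)$—is exactly the intended one, and it yields the sandwich $[w]_{\approx_\trs}\subseteq[w]_{\approx_{H^{\leftrightarrow}}}\subseteq[w]_{\approx_\trs\vee\approx_G}$ as you say.

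You have in fact put your finger on a real defect in the statement: the leftmost equality $(\Sym(\trs))^*(w)=[w]_{\approx_{H^{\leftrightarrow}}}$ can fail for an arbitrary initial $G$. A minimal witness is $\Sigma=\{a,b\}$, $\trs=\emptyset$, and $G$ a single state accepting both $a$ and $b$; then $H^{\leftrightarrow}=G$, $(\Sym(\trs))^*(a)=\{a\}$, yet $[a]_{\approx_{H^{\leftrightarrow}}}=\{a,b\}=[a]_{\approx_\trs\vee\approx_G}$. Your remedy—assume $\approx_G\subseteq\approx_\trs$, in particular $G=\flatt(t\rightarrow^*c)$ for a single term $t$—is precisely the hypothesis in force at the paper's only use of the corollary (the proof of \autoref{thm:termination}), and under it your sandwich collapses as claimed.

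Your sketch for the unconditional second equality is correct; to make it watertight, note that $\approx_\trs\vee\approx_G$ is generated over $T(\Sigma)$ by chains whose elementary steps are either $u\approx_\trs u'$ or $C[s]\mapsto C[t]$ with $s\approx_G t$ (the first already closed under contexts). If $u\in\supp(\approx_{H^{\leftrightarrow}})$, the first kind of step stays inside $[u]_{\approx_{H^{\leftrightarrow}}}$ by the first inclusion of \autoref{lemma:representation}; for the second, reachability of $\approx_{H^{\leftrightarrow}}$ gives $s\in\supp(\approx_{H^{\leftrightarrow}})$, then $s\approx_{H^{\leftrightarrow}}t$ since $\approx_G\subseteq\approx_{H^{\leftrightarrow}}$, and congruence of $\approx_{H^{\leftrightarrow}}$ yields $C[s]\approx_{H^{\leftrightarrow}}C[t]$. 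Induction along the chain gives $[w]_{\approx_\trs\vee\approx_G}\subseteq[w]_{\approx_{H^{\leftrightarrow}}}$.
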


\revinsA{Let $G$ be a finite E-graph. If $\eqsat(\trs,G)$ is infinite, then 
   EqSat does not terminate in a finite number of
   steps.  Somewhat surprisingly, the converse does hold: if
   $\eqsat(\trs,G)$ is finite, then EqSat terminates in
   a finite number of steps.  This follows from the next lemma, whose
   proof is deferred to~\cref{apdx:convergence}.}

\begin{restatable}[Finite convergence]{lem}{convergence}
  \label{lemma:convergence}
  Let $\mathcal{G}:G_1\sqsubset G_2\sqsubset \ldots$ be an ascending sequence of finite E-graphs.
  If $G_{\infty}= \bigsqcup_i G_i$ is finite,
  then the sequence $\mathcal{G}$ is finite.
\end{restatable}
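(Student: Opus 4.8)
The plan is to argue by contraposition: assume the sequence $\mathcal{G}$ is infinite, and show that $G_\infty = \bigsqcup_i G_i$ must be infinite. The subtlety the authors flag is that while E-matching and insertion grow the automaton, congruence closure can merge states and shrink it, so we cannot simply say "the number of states is strictly increasing, hence unbounded." Instead, I would track a quantity that genuinely grows along the sequence, namely the set of \emph{terms represented} (equivalently, $\calL(G_i) = \supp(\approx_{G_i})$) together with the induced equivalence. By \autoref{lemma:approx:g:h} and \autoref{lemma:homomoprhism:simple}, since $G_i \sqsubset G_{i+1}$ there is a homomorphism $G_i \to G_{i+1}$, so $\calL(G_i) \subseteq \calL(G_{i+1})$ and $(\approx_{G_i}) \subseteq (\approx_{G_{i+1}})$. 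Moreover the inclusion is \emph{strict} at every step: if $G_i \sqsubset G_{i+1}$ but $\calL(G_i) = \calL(G_{i+1})$ and $\approx_{G_i} = \approx_{G_{i+1}}$, then the two E-graphs have the same semantics (the same PCR), and by the uniqueness theorem (for any PCR there is a unique E-graph, \autoref{thm:eqsat:dfn}'s predecessor) they would be isomorphic, contradicting $G_i \sqsubset G_{i+1}$ being strict in the partial order $\sqsubseteq$ (\autoref{lem:egraph-partial-order}).

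So the key intermediate claim is: if $\mathcal{G}$ is an infinite strictly ascending chain of \emph{finite} E-graphs, then either $\calL(G_\infty)$ is infinite, or $\approx_{G_\infty}$ properly contains $\approx_{G_i}$ for infinitely many $i$ in a way that forces infinitely many states. To nail this down I would use the fact that $G_\infty = \bigsqcup_i G_i$ satisfies $(\approx_{G_\infty}) = \bigcup_i (\approx_{G_i})$ — this follows from the construction of the least upper bound in \autoref{lemma:union} as $\congr$ of the disjoint union, together with the fact that any term accepted by $G_\infty$ is accepted using finitely many transitions, all of which already appear in some $G_i$; hence any two terms equated in $G_\infty$ are equated in some $G_i$. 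Now suppose for contradiction $G_\infty$ is finite, say with $m$ states. Then $\approx_{G_\infty}$ has support partitioned into at most $m$ equivalence classes, and since each $G_i$ embeds into $G_\infty$ by a (unique) homomorphism $h_i : G_i \to G_\infty$, each $G_i$ has at most $m$ states too. Because the $h_i$ are injective on states — here I invoke that a homomorphism out of a \emph{reachable} E-graph into a \emph{deterministic} one, whose image generates everything, is essentially an embedding; more carefully: distinct states $c \ne c'$ of $G_i$ accept disjoint nonempty term-sets, and $h_i(c), h_i(c')$ accept supersets of those, and by determinacy of $G_\infty$ if $h_i(c) = h_i(c')$ then some term would be accepted by two states of... — actually the cleanest route is: $|{\approx_{G_i}}$-classes$| \le |{\approx_{G_\infty}}$-classes$| = m$, and the number of states of an E-graph equals the number of classes of its PCR. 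So every $G_i$ has exactly (number of its PCR classes) states, bounded by $m$.

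With the number of states of each $G_i$ bounded by $m$ and the number of function symbols $|\Sigma|$ finite with bounded arity, the number of possible transitions of any $G_i$ is bounded by a fixed constant $N(m, \Sigma)$; hence each $G_i$ has at most $N$ transitions. Now I would like to conclude that there are only finitely many isomorphism types of E-graphs with $\le m$ states and $\le N$ transitions over the fixed finite signature — which is true, as it is a finite combinatorial object up to renaming states. Since $\mathcal{G}$ is an infinite \emph{strictly} ascending chain in a poset (\autoref{lem:egraph-partial-order}) in which only finitely many isomorphism classes appear, by pigeonhole two distinct indices $i < j$ give $G_i \cong G_j$; but then $G_i \sqsubseteq G_j \sqsubseteq G_i$ forces $G_i \cong G_j$ with no strict step in between — contradicting strictness of $G_i \sqsubset \cdots \sqsubset G_j$. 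This yields the contrapositive, i.e. if $G_\infty$ is finite then $\mathcal{G}$ is finite.

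The main obstacle I expect is making the bound "each $G_i$ has at most $m$ states" fully rigorous, which rests on two facts that need careful statements: (1) the number of states of a (reachable, deterministic) E-graph equals the number of equivalence classes of its PCR, and (2) $(\approx_{G_i}) \subseteq (\approx_{G_\infty})$ implies the class-count of the former is at most that of the latter — the second is immediate once the first is in hand, but the first requires knowing that each state is reachable (some term maps to it) and that determinacy makes the map "state $\mapsto$ its $\approx$-class of represented terms" injective. Both are essentially the content of the uniqueness theorem for PCRs, so I would cite that and the partial-order lemma, then the rest is the pigeonhole/finite-type counting, which is routine. An alternative to the isomorphism-counting finish, if one prefers, is to observe directly that an infinite strictly ascending chain whose union has $\le m$ states must stabilize: the state count is a bounded non-decreasing integer (non-decreasing because $\calL$ grows, so the class count grows), so it is eventually constant, say equal to $m'$ from some point on; once the state count stabilizes, the transition set is non-decreasing and bounded, so it stabilizes too, and stabilization of both states and transitions means the chain is eventually constant — again contradicting strictness.
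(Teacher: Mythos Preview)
Your argument has a genuine gap at the central step: the claim that each $G_i$ has at most $m$ states, where $m$ is the number of states of $G_\infty$. The inclusion $(\approx_{G_i}) \subseteq (\approx_{G_\infty})$ says $\approx_{G_i}$ is a \emph{refinement} of $\approx_{G_\infty}$ on $\supp(\approx_{G_i})$, so the class-count inequality goes the wrong way. Concretely, over $\Sigma=\{a,b\}$ take $G_1$ with two states ($a\rightarrow c_a$, $b\rightarrow c_b$) and $G_2$ with one state ($a\rightarrow c$, $b\rightarrow c$); then $G_1\sqsubset G_2=G_\infty$, yet $G_1$ has \emph{more} states than $G_\infty$. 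The homomorphism $h_i:G_i\rightarrow G_\infty$ need not be injective, which is why your aborted injectivity argument could not be completed. The same error reappears in your alternative finish: the state count along the chain is not non-decreasing, so it cannot be used as a monovariant. Your pigeonhole on isomorphism types therefore has no bound to stand on.

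This is exactly the subtlety the paper flags (congruence closure can shrink the automaton), and the paper's proof takes a genuinely different route to get around it. It introduces the \emph{rank} of an E-class (the minimum depth of a term it accepts) and lets $N$ be the rank of $G_\infty$. Two observations drive the proof: (i) after some index $M$, no homomorphism $h_{i,j}$ with $M\le i\le j$ can merge two E-classes that both have rank $\le N$, because there are only finitely many terms of depth $\le N$ to witness such merges; (ii) since E-graphs of bounded rank form a finite set, infinitely many $G_{a_k}$ must contain an E-class of rank exactly $N{+}1$. For each such $G_{a_k}$ one selects an E-node $f_k(\vec c_k)\rightarrow c_k$ with all children of rank $\le N$; eventually its image drops to rank $\le N$ in the chain. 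Using (i), one shows these E-nodes have pairwise distinct images in $G_\infty$, contradicting finiteness. The point your attempt misses is that there is no uniform bound on $|G_i|$; instead, one must manufacture infinitely many witnesses that survive distinctly into $G_\infty$.
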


\begin{restatable}[Finite convergence of EqSat]{cor}{cor-convergence}
  \label{cor:convergence}
  Let $\trs$ be a term rewriting system and $G$ be a finite E-graph.
  If $\eqsat(\trs, G)$ is finite,
  EqSat converges in a finite number of steps.
\end{restatable}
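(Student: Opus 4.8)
\emph{Proof plan.} \autoref{cor:convergence} reduces to \autoref{lemma:convergence}, so I sketch a proof of the latter; the reduction is recorded at the end.

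The plan is to argue entirely in terms of the partial congruence relations $\approx_{G_i}$, using that $\sqsubseteq$ is refinement of PCRs. The first thing I would prove is the converse of \autoref{lemma:approx:g:h}: \emph{if $\left(\approx_G\right)\subseteq\left(\approx_H\right)$ then $G\sqsubseteq H$.} The inclusion gives $\calL(G)=\supp(\approx_G)\subseteq\supp(\approx_H)=\calL(H)$, so one may define $h:G\to H$ by sending a state $c$ to the state of $H$ that accepts some term represented by $c$; this is well defined because $\approx_G\subseteq\approx_H$ and $H$ is deterministic, and a short check using determinacy of $H$ shows $h$ respects transitions. With \autoref{lemma:approx:g:h} this yields $G\sqsubseteq H\iff\left(\approx_G\right)\subseteq\left(\approx_H\right)$. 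Now $\approx^{\cup}:=\bigcup_i\approx_{G_i}$ is again a PCR (symmetry, transitivity, congruence and reachability all pass to ascending unions), and by the equivalence just proven its E-graph is the least upper bound of the chain, so $\approx_{G_\infty}=\bigcup_i\approx_{G_i}$; in particular $\calL(G_\infty)=\bigcup_i\calL(G_i)$.

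Because $G_\infty$ is finite it has finitely many states --- fix a representative term $v_d$ with $v_d\rightarrow^*_{G_\infty}d$ for each state $d$ --- and, since $\Sigma$ is finite, finitely many transitions. The heart of the argument is a ``finite presentation'' of being above $G_\infty$: for every E-graph $H$, $G_\infty\sqsubseteq H$ iff (i) $v_d\in\calL(H)$ for all $d$, and (ii) $f(v_{d_1},\dots,v_{d_k})\approx_H v_d$ for every transition $f(d_1,\dots,d_k)\to d$ of $G_\infty$. ``Only if'' is immediate from \autoref{lemma:homomoprhism:simple} and \autoref{lemma:approx:g:h}; ``if'' follows by mapping $d\mapsto(\text{state of }H\text{ accepting }v_d)$ and checking transition-by-transition that this is a homomorphism, exactly as in the converse of \autoref{lemma:approx:g:h} above. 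Apply this with $H=G_i$: since $\calL(G_\infty)=\bigcup_i\calL(G_i)$ and there are finitely many $v_d$, condition (i) holds for all large $i$; since $f(v_{d_1},\dots,v_{d_k})\approx_{G_\infty}v_d$ for each of the finitely many transitions and $\approx_{G_\infty}=\bigcup_i\approx_{G_i}$, condition (ii) also holds for all large $i$. Hence $G_\infty\sqsubseteq G_i$ for all large $i$; together with $G_i\sqsubseteq G_\infty$ and anti-symmetry (\autoref{lem:egraph-partial-order}) this forces $G_i\cong G_\infty$ for all large $i$, which is impossible for an infinite strictly ascending chain --- so $\mathcal{G}$ is finite.

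The main work is in the two ``rigidity'' facts --- the converse of \autoref{lemma:approx:g:h} and the finite characterization of $G_\infty\sqsubseteq H$ --- not in the counting at the end; both rest on the single observation that a homomorphism into a deterministic tree automaton is pinned down by the images of representative terms, so once the target is finite it is determined by finitely many checks. (This is precisely what fails for the chase, where many homomorphisms between instances may exist, and bounded-size infinite chase sequences can occur.) Finally, for \autoref{cor:convergence}: by \autoref{lemma:ico:monotone} the iterates $\ico_{\trs}^{(i)}(G)$ form an ascending chain of E-graphs, each finite because $\ico_{\trs}$ maps finite E-graphs to finite E-graphs, with least upper bound $\eqsat(\trs,G)$ by \autoref{thm:eqsat:dfn}; if $\eqsat(\trs,G)$ is finite then, applying \autoref{lemma:convergence} to any strictly ascending subsequence, the iterates take only finitely many values, and being ascending they stabilize --- i.e.\ reach a fixpoint of $\ico_{\trs}$, which is EqSat converging in finitely many steps.
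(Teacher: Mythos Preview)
Your proposal is correct and takes a genuinely different route from the paper's proof of \autoref{lemma:convergence}. The paper argues combinatorially via a \emph{rank} function on E-classes (the minimum depth of a term they accept): assuming the chain is infinite, it finds an index $M$ after which no two E-classes of rank $\leq N=\textit{rank}(G_\infty)$ ever merge, then manufactures infinitely many E-nodes of rank $N{+}1$ whose images in $G_\infty$ are pairwise distinct, contradicting finiteness. Your argument is instead order-theoretic and semantic: you first upgrade \autoref{lemma:approx:g:h} to an equivalence $G\sqsubseteq H\iff(\approx_G)\subseteq(\approx_H)$, conclude that $\approx_{G_\infty}=\bigcup_i\approx_{G_i}$, and then observe that ``$G_\infty\sqsubseteq H$'' is expressible by finitely many membership and equivalence conditions on fixed representative terms; a compactness step then gives $G_\infty\sqsubseteq G_i$ for large $i$. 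This is cleaner and more transparent about \emph{why} the lemma holds---it is exactly the rigidity of E-graphs (at most one homomorphism, \autoref{lem:unique-morphism}) that lets finitely many term-level facts pin down $G_\infty\sqsubseteq G_i$---whereas the paper's rank argument is more self-contained at the automaton level and does not rely on the PCR correspondence. Your reduction of \autoref{cor:convergence} to \autoref{lemma:convergence} in the final paragraph matches the paper's intent and is fine; the only point worth making explicit is that $\ico_{\trs}$ preserves finiteness because a finite $G$ admits only finitely many E-matches and each $\flatt$ is finite, and $\congr$ can only shrink the state set.
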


\section{Equality saturation and the chase}

\label{sec:eqsat:and:chase}
\revinsA{
In this section, we briefly review necessary background on databases and the chase.
Then, we will show the fundamental connections between equality saturation and the chase.

\subsection{The Chase Procedure}\label{sec:chase}

\mysubparagraph{Databases and conjunctive queries}

A relational database \emph{schema} is a tuple of relation names
$\mathcal{S}=(R_1,\ldots, R_m)$ with associated arities $\arity(R_i)$.
A database instance is a tuple of relation instances
$I=(R_1^I,\ldots, R_m^I)$, where $R_i^I\subseteq \dom^{\arity(R_i)}$
for some domain $\dom$.  We allow an instance to be
infinite.  We often view a tuple $\vec{a}$ in $R_i^I$ as an atom
$R_i(\vec{a})$, and view the instance $I$ as a set of atoms.  The
domain $\dom$ is the disjoint union of set of \emph{constants} and a
set of \emph{marked nulls}.

A \emph{conjunctive query} $\lambda(\vec{x})$ is a formula with free
variables $\vec{x}$ of the form
$R_1(\vec{x_1})\land\ldots\land R_k(\vec{x_k})$, where each
$\vec{x}_i$ is a tuple of variables from $\vec{x}$.  The
\emph{canonical database} of a conjunctive query consists of all the
tuples $R_i(\vec{x_i})$, where the variables $\vec{x}$ are considered
marked nulls.

Let $I$, $J$ be two database instances. A \emph{homomorphism} from $I$
to $J$, in notation $h:I \rightarrow J$, is a a function
$h:\dom(I)\rightarrow\dom(J)$ that is the identity on the set of
constants, and maps each atom $R(\vec{a})\in I$ to an atom
$R(h(\vec{a}))\in J$. The notion of homomorphism immediately
extends to conjunctive queries and/or database instances.  The output
of a conjunctive query $\lambda(\vec{x})$ on a database $I$ is defined
as the set of homomorphisms from $\lambda(\vec{x})$ to $I$.  We say
that a database instance $I$ \emph{satisfies} a conjunctive query
$\lambda(\vec{x})$, denoted by
$I\models \exists \vec{x}\lambda(\vec{x})$, if there exists a
homomorphism $\lambda(\vec{x}) \rightarrow I$.

\mysubparagraph{Dependencies}

TGDs and EGDs describe semantic constraints between relations.
A TGD is a first-order formula of the form \(\lambda(\vec{x},\vec{y})\rightarrow \exists \vec{z}. \rho(\vec{x},\vec{z})\)
where $\lambda(\vec{x}, \vec{z})$ and $\rho(\vec{x}, \vec{y})$ are conjunctive queries with free variables
in $\vec{x}\cup\vec{y}$ and $\vec{x}\cup\vec{z}$.
An EGD is a first-order formula of the form \(\lambda(\vec{x})\rightarrow x_i=x_j\)
where $\lambda(\vec{x})$ is a conjunctive query with free variables in $\vec{x}$ and $\{x_i, x_j\}\subseteq \vec{x}$.

% \mysubparagraph{Solutions to dependencies}

Fix a set of TGDs and EGDs $\Gamma$.  If $I$ is a database instance
and $d \in \Gamma$, then a \emph{trigger} for $d$ in $I$ is a
homomorphism from $\lambda(\vec{x}, \vec{y})$ (resp.\
$\lambda(\vec x)$) to $I$.  An \emph{active trigger} is a trigger $h$
such that, if $d$ is a TGD, then no extension $h$ to a homomorphism
$h' :\rho(\vec{x},\vec{z}) \rightarrow I$ exists, and, if $d$ is an
EGD, then $h(x_i)\neq h(x_j)$.  We say that $I$ is model for $\Gamma$,
and write $I\models \Gamma$, if it has no active triggers.

Given $\Gamma$ and $I$ we say that some database instance $J$ is a
{\em model} for $\Gamma, I$, if $J\models \Gamma$ and there exists a
homomorphism $I \rightarrow J$.  $J$ is called \emph{universal model}
if there is a homomorphism from $J$ to every model of $\Gamma$ and
$I$.  Universal models are unique up to homomorphisms.  

\mysubparagraph{The chase}

The chase is a fixpoint algorithm for computing universal models. 
% It
% has been shown useful for a number of data management tasks, including
% data exchange, query containment checking, and semantic query
% optimization.  
We consider two variants of the chase here: the
\emph{standard chase} and \emph{Skolem chase}.
Both the standard chase and the Skolem chase produce a universal model
of $\Gamma, I$~\cite{FAGIN200589,bench-chase,marnette2009generalized}.
The standard chase computes answers by deriving a sequence of
\emph{chase steps} until all dependencies are satisfied.  A chase
step, denoted as $I\xrightarrow[]{d, h} J$, takes as inputs an
instance $I$, a homomorphism $h$, and a dependency $d$, where $h$
is an active trigger of $d$ in $I$, and produces an output instance
$J$ by adding some tuples (for TGDs) or collapsing some elements (for
EGDs).  Specifically, if $d$ is a TGD, the chase step extends $I$ with
the tuple $h'(\rho(\vec{x},\vec{z}))$, where $h'$ is an extension of
$h$ that maps the variables $\vec{z}$ on which $h$ is undefined to
fresh marked nulls.  If $d$ is an EGD, if $h(x_i)$ (or $h(x_j)$) is a
marked null, a chase step replaces in $I$ every occurrence of $h(x_i)$
with $h(x_j)$ (or $h(x_j)$ with $h(x_i)$).  If neither $h(x_i)$ nor
$h(x_j)$ is a marked null and $h(x_i)\neq h(x_j)$, then the chase
fails.  

% For both TGDs and EGDs, $h$ is no longer an active trigger of
% $d$ after the chase step.  \rw{Not obvious to me that triggers won't
%   reactivate. Maybe add either a short proof or a citation? Or just
%   remove the claim if it's not important.}

A standard chase sequence starting at $I_0$ is a sequence of
successful chase steps
$I_0\xrightarrow{d_1,h_1}I_1 \xrightarrow{d_2,h_2}\cdots$ that is
\emph{fair}: for all $i\geq 0$, for each dependency $d$ and active
trigger $h$ of $d$ in $I_i$, some $j\geq i$ must exist such that $h$
is no longer an active trigger of $d$ in $I_j$.  The \emph{result} of
a (possibly infinite) chase sequence is
$\bigcup_{i\geq 0} \bigcap_{j\geq i} I_j$~\cite{bench-chase}.  
% \rw{How do you intersect instances? 
% After applying a EGD, we'll replace some marked null with another 
%  in certain tuples. Simply taking the intersection would drop these tuples.}
A chase sequence is
\emph{terminating} if it ends with $I_n$ and $I_n\models \Gamma$, in
which case $I_n$ is the result of the chase sequence.  The standard
chase is non-deterministic: depending on the order of firing, the
chase sequence can be different.  Different chase sequences can even
differ on whether they terminate.

The Skolem chase, discussed in~\cite{marnette2009generalized}, differs
from the standard chase in several ways.  It first \emph{skolemizes}
each TGD
$d: \lambda(\vec{x},\vec{y})\rightarrow \exists
\vec{z}. \rho(\vec{x},z_1,\ldots, z_k)$ to
\( \lambda(\vec{x},\vec{y})\rightarrow
\rho(\vec{x},f^d_{z_1}(\vec{x}), \ldots, f^d_{z_k}(\vec{x})), \) where
each $f^d_{z_j}$ is an uninterpreted function from $\dom^{|\vec{x}|}$
to $\dom$.  The result of the Skolem chase, denoted as $\sklchase(\Gamma, I)$, is the least fixpoint of
the immediate consequence operator (ICO) of the Skolemized TGDs.  
Note that the Skolem chase does not directly handle EGDs but uses a technique
 called \emph{singularization} \cite{marnette2009generalized}
 to simulate EGDs with TGDs.
% To handle equality constraints, the Skolem chase applies
% \emph{singularization} \cite{marnette2009generalized} to dependencies,
% which replaces equality symbols in a rule with an \textit{Eq}
% relation, joins reused variables with \textit{Eq}, and axiomizes
% \textit{Eq} with rules for transitivity, reflexivity, and
% symmetricity.
}

\subsection{Reducing the Skolem chase to equality saturation}
\label{sec:chase-to-eqsat}

In this section, we show how to reduce
 the Skolem chase to EqSat.
We only consider TGDs,
 since in the Skolem chase,
 EGDs are modeled as TGDs
 using singularization \cite{marnette2009generalized}.

We show an encoding 
 where
 there exists a simple mapping
 from E-graphs to database instances, defined by
\[\revinsA{\xi(G) = \{R(c_1,\ldots, c_k)\in\mathcal{L}(G)\mid R \text{ is a relation symbol in } \mathcal{S}\}}\]
 such that,
 given a set of dependencies $\Gamma$, 
 running EqSat on an encoded term rewriting system from $\Gamma$
 corresponds to running the Skolem chase on the set of dependencies via $\xi$.
\revinsA{Intuitively, given an E-grpah, 
$\xi$ collects every term that corresponds to a tuple
from the language of $G$.}
An illustration of $\xi$ is shown in \autoref{fig:skolem-to-eqsat}.

\begin{restatable}{thm}{skolemtoeqsatthm}
    \label{thm:skolem-to-eqsat}
Given a database schema $\mathcal{S}=(R_1, \ldots, R_m)$, a set of TGDs $\Gamma$,
 and an initial database $I$,
 it is possible to define 
 a signature $\Sigma$, a term rewriting system $\trs$ over $\Sigma$,
 and an initial term $t$ such that
\[ \xi(\eqsat(\trs, t)) = \sklchase(\Gamma, I).\]
Moreover, the Skolem chase terminates 
 if and only if equality saturation terminates.
\end{restatable}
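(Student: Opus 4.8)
The plan is to give an explicit encoding of the Skolem chase as a term rewriting problem, and then argue step-by-step that EqSat on this encoding tracks the Skolem chase under $\xi$. First I would fix the signature $\Sigma$: for each relation symbol $R_i$ of arity $k_i$ introduce a function symbol $R_i$ of arity $k_i$ (whose output will live in a dedicated ``truth'' E-class), for each Skolem function $f^d_{z_j}$ introduced by skolemizing a TGD $d\in\Gamma$ introduce a function symbol of matching arity, and for each constant of $I$ introduce a nullary symbol. The initial term $t$ is built so that $\flatt(t\rightarrow^* c)$ represents exactly the atoms of $I$: concretely, pair off the atoms of $I$ with a fixed nullary ``true'' constant, e.g. let $t$ be a term encoding the tuple of all atoms of $I$ equated to $\mathsf{true}$ (one can use an auxiliary symbol to bundle finitely many atoms into a single ground term, or since $I$ is finite, start from a small E-graph directly rather than a single term and invoke $\eqsat(\trs,G)$). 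The rewrite rules $\trs$ are obtained from the skolemized TGDs: a skolemized dependency $\lambda(\vec x,\vec y)\rightarrow \rho(\vec x, f^d_{z_1}(\vec x),\ldots)$ becomes a rule whose left-hand side is a pattern matching the body atoms (each body atom $R_j(\vec w)$ matched via the condition $R_j(\vec w)\approx \mathsf{true}$) and whose right-hand side asserts the head atoms equal $\mathsf{true}$; here the key point is that E-matching of such a left-hand side against $G$ returns exactly the active-trigger homomorphisms of $d$, because a substitution $\sigma:\varset(\lhs)\rightarrow Q$ with $\lhs[\sigma]\rightarrow^*_G c$ corresponds precisely to a way of mapping the body variables to E-classes so that every body atom is represented in the ``true'' class.

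Next I would set up the core simulation invariant. Let $J_0=I, J_1, J_2,\ldots$ be the stages of the Skolem chase (the ICO iterates of the skolemized $\Gamma$), and let $G_0, G_1, G_2,\ldots$ be the EqSat stages $\ico_\trs^{(i)}(G)$. I would prove by induction on $i$ that $\xi(G_i)=J_i$ up to a canonical renaming of nulls; more precisely, there is a bijection between the marked nulls of $J_i$ and the E-classes of $G_i$ (other than $\mathsf{true}$ and the constant classes) under which an atom $R(\vec a)$ is in $J_i$ iff the corresponding term $R(\vec c)$ is represented in the $\mathsf{true}$ class of $G_i$. The base case is the construction of $t$. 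For the inductive step: applying $T_\trs$ inserts, for each E-match $(\sigma,c)$ of a rule's left-hand side, a flattening of the right-hand side $\rho[\sigma]$ with $c$ the $\mathsf{true}$ class; this creates fresh E-classes for each distinct Skolem-term $f^d_{z_j}(\sigma(\vec x))$ not already present, exactly as the Skolem-chase ICO invents fresh nulls for each active trigger — but note a subtlety: the skolemized dependency produces the \emph{same} Skolem value for the same $\vec x$ no matter how many times fired, and this determinism is reproduced on the EqSat side by the subsequent $\congr$ step, which merges any two E-classes that are the value of the same Skolem term $f(\vec c)$ applied to the same children. So $\ico_\trs=\congr\circ T_\trs$ matches ``fire all triggers, then functionally collapse Skolem values'', which is one round of the Skolem-chase ICO. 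The congruence closure also propagates equalities among nulls that arise from singularized EGDs, matching the EGD-as-TGD simulation.

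Given the invariant $\xi(G_i)=J_i$, the first displayed equation $\xi(\eqsat(\trs,t))=\sklchase(\Gamma,I)$ follows by taking limits: $\eqsat(\trs,t)=\bigsqcup_i G_i$ by \autoref{thm:eqsat:dfn}, $\sklchase(\Gamma,I)=\bigcup_i J_i$ (the Skolem chase is a plain least fixpoint of a monotone ICO, so its stages form an increasing chain with union the fixpoint), and $\xi$ commutes with these increasing unions because a term $R(\vec c)$ is represented in $\bigsqcup_i G_i$ iff it is represented in some $G_i$ (an accepting run of $\bigsqcup_i G_i$ is finite and already lives in some finite stage). For the termination equivalence: if the Skolem chase terminates, the stages stabilize, so $\sklchase(\Gamma,I)$ is finite, hence $\xi(\eqsat(\trs,t))$ is finite, hence (since the encoding is injective on atoms/E-classes up to the canonical renaming and $\mathsf{true}$ and constants are finite) $\eqsat(\trs,t)$ is finite, and then \autoref{cor:convergence} gives that EqSat converges in finitely many steps. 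Conversely, if EqSat terminates then $\eqsat(\trs,t)$ is finite, so $\xi$ of it is finite, so the Skolem chase has a finite universal model and, being a monotone fixpoint over finitely many reachable atoms, terminates.

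\textbf{Main obstacle.} The delicate part is getting the encoding to make E-matching and the fresh-E-class creation in $\ico_\trs$ correspond \emph{exactly}, not just up to semantics, to active triggers and fresh-null creation in the Skolem chase: on the chase side the Skolem function is baked in as syntax so equal $\vec x$ forces equal values immediately, whereas on the EqSat side $T_\trs$ blindly inserts a flattening with a \emph{fresh} state for each occurrence and relies on the following $\congr$ to merge duplicates, so I must argue carefully that doing $T_\trs$ then $\congr$ per round keeps the bijection between nulls and E-classes (it does not drift by creating spurious extra classes, and it does not over-merge). Handling singularized EGDs uniformly within this ``one ICO round = one chase round'' picture, and checking that the limit operation and $\xi$ genuinely commute (no term ``appearing only in the limit''), are the remaining points that need care but are routine given \autoref{lemma:convergence} and the definitions.
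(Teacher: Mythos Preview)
Your overall strategy---encode relation symbols, Skolem functions, and constants as function symbols, route all atoms into a designated ``true'' E-class, translate each skolemized TGD into a rewrite rule, and finish the termination equivalence via \autoref{cor:convergence}---is the paper's strategy. Two points deserve comment.

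\textbf{A genuine gap: encoding multi-atom bodies.} A rewrite rule has a \emph{single term} as its left-hand side, but a TGD body is a conjunction $R_1(\vec w_1)\land\cdots\land R_n(\vec w_n)$. You write that the left-hand side is ``a pattern matching the body atoms (each body atom $R_j(\vec w)$ matched via the condition $R_j(\vec w)\approx\mathsf{true}$)'', but you never say what that single pattern is or why E-matching it recovers exactly the body homomorphisms. The paper's device is to add a binary symbol $\boldland$ and a constant $\boldtop$, add the auxiliary rules $\boldtop\rightarrow\boldland(\boldtop,\boldtop)$ and $R(x_1,\ldots,x_n)\rightarrow\boldtop$ for each relation $R$, and then encode the body as the nested term $\boldland(R_1(\vec w_1),\boldland(\ldots,\boldland(R_n(\vec w_n),\boldtop)))$. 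The rule $\boldtop\rightarrow\boldland(\boldtop,\boldtop)$ is precisely what makes any nested $\boldland$ of terms already in $c_{\boldtop}$ reduce to $c_{\boldtop}$, so that E-matching the nested body succeeds iff each atom is already there. Your sentence about ``an auxiliary symbol to bundle finitely many atoms'' hints at $\boldland$ but never supplies this closure rule; without it the encoding does not work.

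\textbf{Harder than needed: the stage-by-stage invariant.} You propose proving $\xi(G_i)=J_i$ for every $i$. With the paper's encoding this fails verbatim: after a TGD rule fires, the head atoms $R'_j(\ldots)$ are inserted wrapped inside a $\boldland$-term, and only land in $c_{\boldtop}$ after the $r_{R'_j}$ rules fire in the \emph{next} iteration (the paper notes this explicitly). So the EqSat and Skolem-chase stage indices drift. The paper avoids this bookkeeping by proving the two inclusions directly on the \emph{fixpoints}: a tuple lies in $\sklchase(\Gamma,I)$ iff the corresponding $R$-term is represented in $\eqsat(\trs,t)$, each direction by induction on the iteration at which the tuple (respectively, the $R$-term) first appears. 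This is strictly easier than maintaining a round-by-round bijection and is all that the displayed equality and the termination equivalence require. Your stage-aligned argument could be repaired with an offset, but there is no need.
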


The intuition for the construction is that we can uniformly 
 treat relational atoms as E-nodes contained in a special E-class,
 and Skolem functions naturally correspond to terms in EqSat.
More specifically,

\newcommand{\boldland}{\boldsymbol{\land}}
\newcommand{\boldtop}{\boldsymbol{\top}}

\begin{itemize}
    \item Add symbols $\{\boldtop, \boldland(\cdot, \cdot)\}$ to the signature $\Sigma$.
    Add rewrite $r_{\boldtop}:\boldtop \rightarrow \boldland(\boldtop, \boldtop)$ to $\trs$.
    Let the initial term $t$ be $\boldtop$.
    \item Add every Skolem function symbol to $\Sigma$, and for every $n$-ary relational symbol $R\in \mathcal{S}$, add a $n$-ary function symbol 
    to $\Sigma$, and add rewrite rule $r_R:R(x_1,\ldots, x_n)\rightarrow \boldtop$.
    % The intuition for this rule is that, 
    % every encoded tuple that exists in the E-graph is mapped to the special constant $\boldtop$.
%     \item For every TGD
%    \[d:\ R_1\left(\vec{x_1},\vec{y_1}\right)\land\ldots \land R_n\left(\vec{x_n}, \vec{y_n}\right)
%     \rightarrow 
%     \exists \vec{z}, R'_1\left(\vec{x'_1}, \vec{z_1}\right)\land \ldots\land R'_m\left(\vec{x'_m}, \vec{z_m}\right)\]
%     in $\Gamma$,
%     Skolemize the right-hand side:
%     \[
%         d^{\textsc{Skl}}: \ R_1\left(\vec{x_1},\vec{y_1}\right)\land\ldots \land R_n\left(\vec{x_n}, \vec{y_n}\right)
%         \rightarrow 
%         R'_1\left(\vec{x'_1}, \vec{f^d_z}_1\right)\land \ldots\land R'_m\left(\vec{x'_m}, \vec{f^d_z}_m\right)
%     \]
%     where each $\vec{f^d_z}_i$ is a list of Skolem terms $f^d_{z_{i_1}}(\vec{x}), \ldots, f^d_{z_{i_k}}(\vec{x})$,
%     where $\vec x$ are variables shared in both head and body.

    \item For every Skolemized TGD
    \[
        d: \ R_1\left(\vec{x_1},\vec{y_1}\right)\land\ldots \land R_n\left(\vec{x_n}, \vec{y_n}\right)
        \rightarrow 
        R'_1\left(\vec{x'_1}, \vec{f^d_z}_1\right)\land \ldots\land R'_m\left(\vec{x'_m}, \vec{f^d_z}_m\right),
    \]
    replace the conjunctions in the head and body with nested applications of
    $\boldland$ and $\boldtop$:
   \[r_d:\ \boldland\left(R_1\left(\vec{x_1},\vec{y_1}\right),
    \boldland\left(\ldots 
    \boldland\left( R_n\left(\vec{x_n}, \vec{y_n}\right), \boldtop\right)\right)
    \right)
    \rightarrow 
    \boldland\left(
    R'_1\left(\vec{x'_1}, \vec{f^d_z}_1\right),
    \boldland\left(\ldots
    \boldland\left(
     R'_m\left(\vec{x'_m}, \vec{f^d_z}_m\right)
     , \boldtop\right)\right)\right)\]
    and add $r_d$ to $\trs$.
    
    \item For each constant $c$ in the input database $I$,
    add a nullary function symbol $c$ to $\Sigma$.
    For each tuple $t=R(c_1, \ldots, c_n)$ in the input database $I$,
    add rewrite  $r_t:\boldtop \rightarrow R(c_1, \ldots, c_n)$.
\end{itemize}

\begin{figure}
    \centering
    \includegraphics[width=0.8\textwidth]{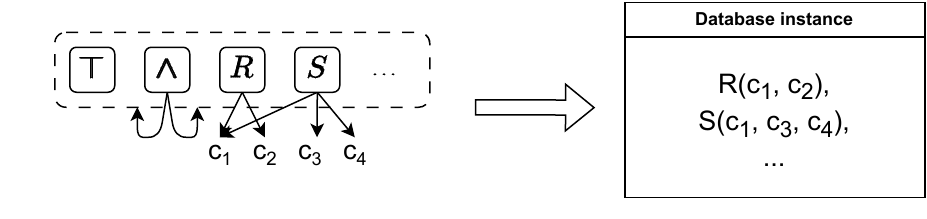}
    \caption{Mapping results of encoded EqSat back to database instances.}
    \label{fig:skolem-to-eqsat}
\end{figure}

\begin{proof}[Proof of \autoref{thm:skolem-to-eqsat}]
    See \cref{apdx:skolem-to-eqsat}.
\end{proof}

% It is tempting to think that equality saturation can also be reduced to the Skolem chase.
% However, this is not the case.
% The all-instance termination problem of the Skolem chase is recursively enumerable,
%  while, as we will see in \autoref{sec:terminations},
%  the all-instance termination problem of the Skolem chase is $\Pi_2$-complete,
%  which is strictly harder than recursively enumerable.

\subsection{Reducing equality saturation to the standard chase}
\label{sec:eqsat-to-chase}

We show how to reduce
 equality saturation to the standard chase.
The encoding itself is straightforward.
However, the standard chase is non-deterministic and can have different chase sequences, 
 so a natural question is what kind of the chase sequence will converge finitely,
 given that EqSat terminates, and vice versa.
We show that as long as the chase sequence applies EGDs frequent enough,
 the chase sequence will always converge.
We capture this notion as EGD-fairness.

\begin{dfn}\label{dfn:egd-fairness}
    Given a database schema $\mathcal{S}$, a set of dependencies $\Gamma$ over $\mathcal{S}$,
    and an initial database $I_0$. 
    We call a chase sequence $I_0,I_1,\ldots$ of $\Gamma$ and $I$ EGD-fair if
    for every $i$, either $I_i$ is a model of $\Gamma$ and the chase terminates,
    or there exists some $j>i$ such that $I_j$ is a model of the EGD subset of $\Gamma$.
\end{dfn}
Given \revinsA{that} EqSat terminates,
 what can we say about chase sequences that are not EGD-fair?
 In fact, such chase sequences may not terminate.
%  \yz{here is an example}
 Despite this,
 it can be shown that the result of such chase sequences, terminating or not,
 is isomorphic to the result of equality saturation (when encoded as a database).
On the other hand,
 to show \revinsC{that }equality saturation terminates,
 it is sufficient to show an arbitrary chase sequence terminates.

The following theorem shows the connection between EqSat and the standard chase.

\begin{restatable}{thm}{eqsattochasethm}
    \label{thm:eqsat-to-chase}
    Given signature $\Sigma$, a set of rewrite rules $\trs$ over $\Sigma$,
    and an initial E-graph $G$,
    it is possible to define 
    a relational schema $\mathcal{S}$, a set of dependencies $\Gamma$ over $\mathcal{S}$,
    and an initial database $I$ over $\mathcal{S}$.
    The following three statements are equivalent:
    \begin{enumerate}
        \item Equality saturation terminates for $\trs$ and $t$.
        \item There exists a terminating chase sequence of the standard chase for $\Gamma$ and $I$.
        \item All EGD-fair chase sequences of the standard chase terminate for $\Gamma$ and $I$.
    \end{enumerate}
    Moreover, denote the result of an arbitrary chase sequence as $I_\infty$.
    If equality saturation terminates,
    $I_{\infty}$ is isomorphic to the database encoding the resulting E-graph of $\eqsat(\trs, G)$.
\end{restatable}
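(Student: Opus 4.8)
The plan is to give a two-way syntactic translation between E-graphs over $\Sigma$ and database instances, and to match chase steps with the operations $T_{\trs}$ and $\congr$ of~\autoref{sec:egraph-operations}. Let $\mathcal S$ have one relation $R_f$ of arity $\arity(f)+1$ per $f\in\Sigma$, and for an E-graph $K$ let its \emph{canonical database} $\mathsf{db}(K)$ contain the atom $R_f(c_1,\ldots,c_k,c)$ for each transition $f(c_1,\ldots,c_k)\to c$ of $K$, all E-classes being marked nulls. Conversely, any instance that satisfies the functionality EGDs $R_f(\vec x,y)\land R_f(\vec x,y')\to y=y'$ and in which every null is reachable (some ground term rewrites to it) equals $\mathsf{db}(K)$ for a unique E-graph $K$, and a database homomorphism between two such instances is exactly an E-graph homomorphism, since E-graphs carry no final states. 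Take $\Gamma$ to be these EGDs together with, for each rule $\lhs\to\rhs\in\trs$, the TGD whose body is the flattening of $\lhs$ (one variable per subterm, $R_g$-atoms linking them, the root variable left free) and whose head is the flattening of $\rhs$, sharing the pattern variables and the root variable with the body and using a fresh existential for each internal subterm of $\rhs$; set $I\defeq\mathsf{db}(G)$. Then triggers of the TGD for $\lhs\to\rhs$ in $\mathsf{db}(K)$ are exactly the E-matches $(\sigma,c)$ with $\lhs[\sigma]\to_K^* c$, firing that TGD is one insertion of $\flatt(\rhs[\sigma]\to^* c)$, firing an EGD is one congruence-closure merge, and no step ever fails because every element is a null.

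The central simulation fact I would establish is that, along any chase sequence, the instances observed at \emph{EGD-fixpoints} (instants where all EGDs hold) are $\mathsf{db}(G_k)$ for an ascending chain of E-graphs $G=G_0\sqsubseteq G_1\sqsubseteq G_2\sqsubseteq\cdots$: a TGD step adds atoms and nulls and an EGD step is a surjective renaming, both inducing E-graph homomorphisms, and a maximal block of EGD steps computes $\congr$ of the current instance (finite because every instance reached is finite). To prove $(1)\Rightarrow(3)$, assume $H\defeq\eqsat(\trs,G)$ is finite and fix an EGD-fair chase sequence with EGD-fixpoint chain $(G_k)$. Soundness of the chase gives homomorphisms $\mathsf{db}(G_k)\to\mathsf{db}(H)$ (as $\mathsf{db}(H)\models\Gamma$ and $I\to\mathsf{db}(H)$), i.e.\ $G_k\sqsubseteq H$; conversely, using fairness and $H=\bigsqcup_i\ico_{\trs}^{(i)}(G)$ from~\autoref{thm:eqsat:dfn}, one shows by induction on $i$ that $\ico_{\trs}^{(i)}(G)\sqsubseteq G_k$ for some $k$ (each of the finitely many triggers of $T_{\trs}(\ico_{\trs}^{(i)}(G))$ is either already satisfied in $G_k$ or eventually fired, and a later EGD-fixpoint absorbs the congruence closure). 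Hence $\bigsqcup_k G_k=H$ is finite, so Lemma~\ref{lemma:convergence} forces the chain to stabilize at $H$; then some $I_{j_k}=\mathsf{db}(H)$, and since $H$ is a deterministic model of $\trs$ we get $I_{j_k}\models\Gamma$, so the sequence terminates. For $(3)\Rightarrow(2)$, an EGD-fair chase sequence exists — alternate one ``fair-next'' TGD firing with a complete (finite) run of EGD steps; it is fair and never fails, so by $(3)$ it terminates. For $(2)\Rightarrow(1)$, if a chase terminates at $I_n\models\Gamma$ then $K\defeq\mathsf{db}^{-1}(I_n)$ is a finite E-graph that is a model of $\trs$ with $G\sqsubseteq K$, whence $H\sqsubseteq K$; soundness and universality of the terminating chase also give $I_n\to\mathsf{db}(H)$, i.e.\ $K\sqsubseteq H$, so $H\cong K$ by antisymmetry of $\sqsubseteq$ (Lemma~\ref{lem:egraph-partial-order}), $H$ is finite, and Corollary~\ref{cor:convergence} gives termination of EqSat.

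For the final claim, let $I_\infty=\bigcup_i\bigcap_{j\ge i}I_j$ be the result of an arbitrary (hence fair) chase sequence. Fairness implies $I_\infty$ satisfies all EGDs (a stable offending pair would remain an active trigger forever), and every null in $I_\infty$ is reachable by construction, so $I_\infty=\mathsf{db}(K_\infty)$ for an E-graph $K_\infty$; the standard fact that a fair chase computes a universal model~\cite{bench-chase} shows $K_\infty$ is a universal model of $\trs,G$. When $\eqsat(\trs,G)=H$, the E-graph $H$ is itself a universal model of $\trs,G$, so there are homomorphisms $H\sqsubseteq K_\infty$ and $K_\infty\sqsubseteq H$; rigidity of E-graphs (Lemmas~\ref{lem:unique-morphism} and~\ref{lem:egraph-partial-order}) upgrades this to $K_\infty\cong H$, i.e.\ $I_\infty$ is isomorphic to $\mathsf{db}(\eqsat(\trs,G))$.

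I expect the main obstacle to be the $(1)\Rightarrow(3)$ direction, where all the pieces must be synchronized: the simulation of $T_{\trs}$ by fair TGD firings and of $\congr$ by maximal EGD blocks, the fact that the EGD-fixpoint instances form an ascending chain with least upper bound exactly $\eqsat(\trs,G)$, and the appeal to the Finite Convergence Lemma. This is also where EGD-fairness, rather than ordinary fairness, is essential: without it, insertions can keep producing states that merge only much later, so the sequence may never reach an EGD-fixpoint and the ascending-chain argument breaks. The remaining ingredients — that chase homomorphisms restrict to E-graph homomorphisms, that rigidity turns ``homomorphisms both ways'' into an isomorphism, and that all intermediate instances stay finite so $\congr$-blocks terminate — are routine.
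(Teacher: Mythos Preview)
Your proposal is correct and essentially parallels the paper's proof, with the same encoding, the same key observation that EGD-closed instances correspond to E-graphs (and are therefore rigid), and the same reliance on universality and the Finite Convergence Lemma. The paper phrases the rigidity observation as ``EGD-closed instances are \emph{cores}, and homomorphically equivalent cores are isomorphic,'' whereas you phrase it directly via Lemmas~\ref{lem:unique-morphism} and~\ref{lem:egraph-partial-order}; these are the same fact under the encoding.

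Two small differences are worth noting. First, your encoding omits the corner cases the paper spells out: when $\lhs$ is a single variable (which must be grounded by one TGD per function symbol) and when $\rhs$ is a single variable (which must become an EGD, not a TGD). These do not affect the proof's logic but are needed for the translation to be correct. Second, for $(1)\Rightarrow(3)$ the paper runs the induction along the \emph{EqSat-encoding chase sequence} $\mathcal I_{\eqsat}$ (showing each $I_i$ maps into some EGD-closed $J_j$) and then concludes by the core argument that $J_m\cong J_\infty\cong I_n$, hence termination at $m$; you instead induct on the ICO iterates, show the EGD-fixpoint E-graphs $(G_k)$ have least upper bound exactly $H$, and invoke Lemma~\ref{lemma:convergence} to force stabilization at $H$. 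Both routes work; yours makes the role of Finite Convergence in $(1)\Rightarrow(3)$ more explicit, while the paper's makes the chase-theoretic universality argument more uniform across all three implications.
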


The encoding consists of two steps. 
First, we can encode an E-graph as a database.
Second, we encode the match/apply operator and congruence closure operator 
 as a set of TGDs and EGDs.
To encode an E-graph as a database:
 \begin{itemize}
    \item Take the domain $\dom$ to be the set of all E-classes, which are treated as marked nulls.
    \item For every function symbol $f$ of arity $n$,
    add relation symbol $R_f$ of arity $n+1$ to $\mathcal{S}$.
    \item For every E-node $f(c_1, \ldots, c_n)\rightarrow c$,
    add a tuple $R_f(c_1, \ldots, c_n, c)$ to the database $I$.
 \end{itemize}
Under this encoding, each E-class is treated as a marked null, and
 each E-node is treated as a tuple.

The encoding of the match/apply operator and congruence closure operator is plain:
\begin{itemize}
    \item For every function symbol $f$ of arity $n$,
    add a functional dependency 
    $R_f(x_1,\ldots, x_n, x)\land R_f(x_1, \ldots, x_n, x')\rightarrow x=x'$ 
    to $\Gamma$.
    \item For every rewrite rule $\lhs\rightarrow \rhs$ in $\trs$,
    % let $\vec{x}$ be the set of variables used in both $\lhs$ and $\rhs$,
    flatten the left- and right-hand side into conjunctions of relational atoms,
    unify the variable denoting the root node of $\lhs$ with that of $\rhs$,
    and add existential quantifiers to the head accordingly.
    For example, rule $f(f(x, y), z)\rightarrow f(x, f(y, z))$ is 
    flattened into $R_f(x, y, w_1)\land R_f(w_1, z, r)\rightarrow \exists w_2, R_f(x, y, w_2)\land R_f(w_2, v, r)$.

    There are two corner cases to the above translations. 
    First, if $\lhs$ is a single variable $x$,
    we need to introduce $n$ rules of the form $R_f(y_1, \ldots, y_k, x)\rightarrow \ldots$, 
    one for each function symbol, 
    to ``ground'' $x$.
    For instance, suppose $\Sigma=\{f(\cdot,\cdot), g(\cdot)\}$,
    rewrite rule $x\rightarrow g(x)$ is flattened into
    two dependencies: $R_f(y_1, y_2, x) \rightarrow R_g(x, x)$ 
    and $R_g(y_1, x)\rightarrow R_g(x, x)$.
    Second, in the case that the right-hand side is a single variable $x$,
    we need to add an EGD instead of a TGD.
    For example, 
    rule $f(x, y)\rightarrow x$ is encoded as an EGD $R_f(x, y, r)\rightarrow x=r$.
\end{itemize}

\begin{proof}[Proof of \autoref{thm:eqsat-to-chase}]
    See \cref{apdx:eqsat-to-chase}.
\end{proof}

%

% An EGD-closed relational database $I_j$ of 
%  $\Gamma$ from the encoding in this section
%  can always be translated back to a deterministic tree automaton.

\def\configlang{\emph{CONFIG}}

\section{The termination theorems of equality saturation}
\label{sec:terminations}

Finally, we present our main results here.
% Theorem 4 of https://effect.systems/blog/trs-regularity.pdf 
%  might have already been proved before, see https://www.sciencedirect.com/science/article/pii/S2352220815000504
%  and https://core.ac.uk/download/pdf/82585486.pdf (citation 4 and 12)
%  https://www.sciencedirect.com/science/article/pii/S0890540185710541
%  Regular tree languages and rewrite systems
%  https://link.springer.com/chapter/10.1007/BFb0020795
%  https://www.sciencedirect.com/science/article/pii/S1567832612000823

\begin{restatable}[Single-instance termination]{thm}{terminationthm}
    \label{thm:termination}
    The following problem is R.E.-complete:
    \begin{itemize}
        \item Instance: A term rewriting system \(R\), a term \(t\).
        \item Question: Does EqSat terminate with \(R\) and \(t\)?
    \end{itemize}
\end{restatable}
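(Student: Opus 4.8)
The plan is to prove membership in R.E. and R.E.-hardness separately. For membership in R.E., observe that by \autoref{cor:convergence}, if $\eqsat(\trs,t)$ is finite then the fixpoint iteration $G_0 = \flatt(t\rightarrow^* c) \sqsubset G_1 \sqsubset \cdots$ stabilizes after finitely many steps; conversely, if it stabilizes after finitely many steps then $\eqsat(\trs,t)$ is finite, so EqSat terminates. This gives a semi-decision procedure: compute $G_0, G_1 = \ico_\trs(G_0), G_2 = \ico_\trs(G_1), \ldots$, each of which is a finite E-graph (since $\ico_\trs$ applied to a finite E-graph yields a finite E-graph --- E-matching over a finite E-graph produces finitely many matches, flattening each $\rhs$ adds finitely many states, and $\congr$ does not increase the state count), and check at each step whether $G_{i+1} \cong G_i$ (equivalently whether the inclusion homomorphism $G_i \to G_{i+1}$ is surjective, which by rigidity and \autoref{lem:egraph-partial-order} is decidable on finite E-graphs). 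The procedure halts and accepts exactly when EqSat terminates; hence the problem is in R.E.

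For R.E.-hardness, I would reduce from a known R.E.-complete problem. The cleanest route, already advertised in the introduction, is to reduce from Skolem chase termination, which is R.E.-complete \cite{marnette2009generalized,chase-revisited}. Given a set of TGDs $\Gamma$ (using singularization to handle any EGDs, as in the Skolem chase) and an initial database $I$, apply the construction of \autoref{thm:skolem-to-eqsat} to produce a signature $\Sigma$, a term rewriting system $\trs$, and an initial term $t$ with $\xi(\eqsat(\trs,t)) = \sklchase(\Gamma,I)$ and, crucially, with the property that the Skolem chase terminates if and only if EqSat terminates on $\trs,t$. Since the construction in \autoref{thm:skolem-to-eqsat} is visibly computable (it is a syntactic transformation: add $\boldtop, \boldland$ with the unfolding rule, add one function symbol per relation symbol and per Skolem function, encode each Skolemized TGD as a rewrite between $\boldland$-nested conjunctions, and seed $I$ via rules $\boldtop \to R(c_1,\ldots,c_n)$), this is a many-one reduction from Skolem chase termination to single-instance EqSat termination, establishing R.E.-hardness.

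The main obstacle is not the reduction itself but making sure the equivalence "Skolem chase terminates $\iff$ EqSat terminates" in \autoref{thm:skolem-to-eqsat} is genuinely an *iff* at the level of termination, not merely at the level of the final result: one must check that the extra machinery EqSat introduces — the congruence-closure steps, the auxiliary $\boldtop$/$\boldland$ states, and the fact that EqSat works up to an equivalence on E-classes rather than on a growing set of ground atoms — cannot cause EqSat to diverge when the Skolem chase converges, nor to converge when it diverges. Since the $\boldtop \to \boldland(\boldtop,\boldtop)$ rule alone would already make EqSat diverge, one relies on the fact that $\congr$ collapses all the $\boldland$-terms built from $\boldtop$ into a single E-class, so the $\boldtop$/$\boldland$ fragment contributes only boundedly many E-classes; the growth of the E-graph is then controlled precisely by the growth of $\sklchase(\Gamma,I)$, and the finiteness of one is equivalent to the finiteness of the other. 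Verifying this correspondence carefully is exactly the content deferred to \cref{apdx:skolem-to-eqsat}, and I would lean on it as a black box here.
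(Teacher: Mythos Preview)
Your proposal is correct but takes a different route from the paper for the hardness direction. The paper acknowledges (in \autoref{sec:terminations}) that your approach works --- reducing from single-instance Skolem chase termination via \autoref{thm:skolem-to-eqsat} --- but deliberately avoids it. Instead, the paper gives a direct reduction from the Turing machine halting problem, following Narendran et al.~\cite{narendran1985complexity}: it encodes a TM as a string rewriting system whose congruence classes correspond to execution traces, augments the machine with a second tape that enumerates primes so that an infinite congruence class is provably non-regular, and then runs EqSat on $\trs' \cup \trs'^{-1}$. Your argument is shorter and cleaner for \emph{this} theorem, treating \autoref{thm:skolem-to-eqsat} as a black box; the paper's more laborious encoding pays off because the same construction is reused to prove \autoref{thm:all-term-instance} ($\Pi_2$-completeness of all-term-instance termination), which your reduction does not yield since all-instance Skolem chase termination is only R.E.-complete, not $\Pi_2$-complete. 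Your membership argument matches the paper's.
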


\begin{restatable}[All-term-instance termination]{thm}{allinstancethm}
  \label{thm:all-term-instance}
    The following problem is $\Pi_2$-complete:
    \begin{itemize}
        \item Instance: A term rewriting system \(R\).
        \item Question: Does EqSat terminate with \(R\) and \(t\) for all terms \(t\)?
    \end{itemize}
\end{restatable}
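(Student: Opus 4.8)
The plan: the upper bound is immediate from \autoref{thm:termination}, so the content is the $\Pi_2$-hardness, which I would obtain by reducing from the universal halting problem, adapting to E-graphs the Turing-machine encoding used in the undecidability proof for finiteness of congruence classes of string rewriting systems~\cite{narendran1985complexity}. For the upper bound: by \autoref{thm:termination} the predicate ``EqSat terminates on $\trs$ and $t$'' is R.E., i.e.\ of the form $\exists n\, Q(\trs,t,n)$ with $Q$ decidable (iterate $\ico_{\trs}$ for $n$ rounds and test whether the E-graph has stabilised, which is effective since every finite-stage E-graph is finite), so the all-term-instance question $\forall t\,\exists n\, Q(\trs,t,n)$ lies in $\Pi_2$. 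For the lower bound, since $\{M : M\text{ halts on every input}\}$ is $\Pi_2$-complete, it suffices to compute from a Turing machine $M$ a signature $\Sigma$ and a variable-preserving TRS $\trs_M$ over $\Sigma$ such that $M$ halts on every input iff $\eqsat(\trs_M,t)$ terminates for every $t\in T(\Sigma)$.

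For the reduction, I would let $\Sigma$ consist of a constant $\bot$ and unary symbols for the tape alphabet, the states, the transitions, a step counter, and a few begin/end markers, so that a ground term over $\Sigma$ is essentially a string. A configuration of $M$ after $k$ steps is encoded as a delimited string $|^k\,\tau_1\cdots\tau_k\,(\text{tape}_k)$ carrying a unary counter of length $k$, the full transition history $\tau_1\cdots\tau_k$, and the current tape with head and state; each transition of $M$ becomes a pair of rules $\ell(x)\rightarrow r(x)$ and $r(x)\rightarrow\ell(x)$ --- so $\trs_M$ is variable-preserving and $\approx_{\trs_M}$ is the induced Thue congruence --- together with ``replay'' rules that normalise any string whose history is inconsistent with its tape down to a short dead term. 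Writing $t_w$ for the term encoding the start configuration of $M$ on $w$ and $H\defeq\eqsat(\trs_M,t_w)$: by \autoref{lemma:representation} and \autoref{cor:var-preserving}, $[t_w]_{\approx_H}$ equals the $\approx_{\trs_M}$-class of $t_w$, which (since $t_w$ has no predecessor and the rules act deterministically at the head) is exactly the run $o_0,o_1,o_2,\ldots$ of $M$ on $w$. If $M$ halts on $w$ this set is finite, and because the begin marker blocks the rules inside proper subterms their classes are finite too, so $H$ is finite and EqSat terminates by \autoref{cor:convergence}. If $M$ does not halt on $w$, then $k$ ranges over all natural numbers and $\{|^k\tau_1\cdots\tau_k(\text{tape}_k) : k\geq 0\}$ is non-regular --- pumping inside the homogeneous $|^k$ prefix changes the counter length without changing the history length $k$, which every string in the set must respect --- so $[t_w]_{\approx_H}$, which is the language accepted by $H$ at the state representing $t_w$, is non-regular; hence $H$ is not a finite tree automaton, hence EqSat diverges. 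Thus $\eqsat(\trs_M,t_w)$ terminates iff $M$ halts on $w$ (taking $w$ empty already gives the hardness half of \autoref{thm:termination}).

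It remains to handle \emph{all} terms, not just the start-configuration terms $t_w$. The markers and replay rules are chosen so that every ground term is either a well-formed configuration string whose history is consistent with its tape --- then it encodes a genuine configuration of $M$ on a recoverable input $w'$, and EqSat on it terminates iff $M$ halts on $w'$ --- or a malformed or history-inconsistent term, which reduces to a dead term and hence has a finite $\approx_{\trs_M}$-class, so EqSat on it terminates outright. Therefore $\eqsat(\trs_M,t)$ terminates for all $t$ iff $M$ halts on all inputs, completing the reduction. \emph{The main obstacle} is exactly this global requirement: because the all-term-instance problem quantifies over arbitrary terms, the encoding must simultaneously ensure (a) that a non-halting computation forces a non-regular orbit and hence an infinite E-graph --- here the separate unary counter is essential, since a bounded-space loop with only a history block would yield a regular orbit and a finite E-graph even though $M$ does not halt, the same ``bounded but non-terminating'' pathology that breaks the chase analogue of the Finite Convergence Lemma (\autoref{lemma:convergence}) --- and (b) that every malformed or unreachable configuration-looking term has a finite congruence class, for which the history-replay rules are essential. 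Designing rules that meet (a) and (b) while still faithfully simulating $M$ is where the construction of~\cite{narendran1985complexity} does the real work.
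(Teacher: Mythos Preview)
Your upper bound and overall plan (reduce from a $\Pi_2$-complete Turing-machine problem via a string-rewriting encoding, then argue that non-halting forces a non-regular congruence class and hence an infinite E-graph) match the paper's. The technical realisation, however, diverges from the paper in two places and the first of these is a genuine gap.

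\textbf{Non-regularity.} You propose to ensure non-regularity by placing a unary counter $|^k$ as a homogeneous prefix and pumping it against the length-$k$ history. But a string-rewriting rule $\ell(x)\rightarrow r(x)$ acts locally: it can only touch a bounded window around the redex. A single transition of $M$ must simultaneously append one $|$ to the outermost prefix, append one $\tau$ in the middle block, and rewrite the tape near the head deep inside; no rule of the form $\ell(x)\rightarrow r(x)$ can do all three. If instead you insert the new $|$ and $\tau$ locally at the head (which is the only option), the $|$'s get scattered through the string and there is no homogeneous $|^k$ prefix to pump, so your non-regularity argument collapses. The paper avoids this entirely: it does \emph{not} alter the Narendran encoding to force non-regularity; rather it alters the \emph{machine}, replacing $M$ by a two-tape $M'$ that alternates one step of $M$ with computing the next prime on a second tape. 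Then the ordinary Narendran encoding of $M'$ already has the property that an infinite congruence class contains all primes written in unary and is therefore non-regular; the rewrite rules remain local and the orbit structure is exactly Narendran's.

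\textbf{Arbitrary terms.} You want ``replay rules'' that detect inconsistency between the stored history and the current tape and send every malformed term to a dead term. You do not say how such detection is done by finitely many local rules, and it is not clear this can be arranged without re-introducing the locality problem above. The paper does not normalise malformed terms at all. Instead it proves a structural lemma (your reference~\cite{narendran1985complexity}): because every rule has exactly one state symbol on each side, any string decomposes into segments around its state symbols, and if some arbitrary string has an infinite congruence class then one of its segments, padded with end markers, is a bona fide \emph{configuration} with an infinite class. Consequently the paper reduces from mortality of \emph{all configurations} of $M$ (also $\Pi_2$-complete), not from halting on all inputs; this fits the segment lemma directly and removes any need to recover an input $w'$ or to verify history consistency.

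In short: keep your upper bound and the reduction skeleton, but (i) obtain non-regularity by modifying the machine (e.g.\ the prime-counting trick) rather than by adding a separate counter block to the encoding, and (ii) replace the replay-rule idea by the segment-decomposition lemma, reducing from configuration mortality rather than from halting on inputs.
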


While \autoref{thm:termination} follows immediately from the fact the the Skolem chase is undecidable,
 our proof in Appendix~\ref{apdx:termination} 
 is based on Narendran et al.~\cite{narendran1985complexity},
 which allows us to also show \autoref{thm:all-term-instance}.
We encode a Turing machine as a term rewriting system with the property that
 the congruence classes of initial configurations corresponds to traces of running such configurations,
 and that EqSat terminates if and only if congruence class is finite.
For the all-term-instance case, 
 we then show that the congruence class of an arbitrary term is infinite 
  if and only if the congruence class of an initial configuration is.
The actual proof is slightly more involved so we refer the reader to \cref{apdx:termination} 
for more details.

The technique above does not apply to the all-E-graph-instance case,
 however. 
 The all-E-graph-instance termination can be thought of as 
 having inputs both a term and a set of ground identities,
 and we have no control over the latter.
Still, we are able to prove that this problem is undecidable
 by a reduction from the Post correspondence problem (\cref{apdx:all-egraph-instance}),
 while the exact upper bound is unknown.

\begin{restatable}[All-E-graph-instance termination]{thm}{allegraphinstancethm}
    \label{thm:all-egraph-instance}
    The following problem is undecidable:
    \begin{itemize}
        \item Instance: A term rewriting system \(R\).
        \item Question: Does EqSat terminate with \(R\) and \(G\) for all E-graphs \(G\)?
    \end{itemize}
\end{restatable}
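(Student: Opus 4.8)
The plan is to reduce the Post correspondence problem (PCP) to the \emph{complement} of the all-E-graph-instance termination problem (equivalently, to reduce $\mathrm{co}$-PCP to the problem itself). Given a PCP instance $P=\{(u_1,v_1),\dots,(u_n,v_n)\}$ over $\{0,1\}$, I would build a signature $\Sigma_P$ and a TRS $R_P$ so that $P$ has a solution if and only if there is a finite E-graph $G$ with $\eqsat(R_P,G)$ infinite --- i.e.\ EqSat does not converge, by \autoref{cor:convergence} --- which is the same as: $P$ has \emph{no} solution iff $\eqsat(R_P,G)$ is finite for \emph{every} finite $G$. Since PCP is undecidable, so is the all-E-graph-instance termination problem. (The problem clearly sits in $\Pi_2$: for each fixed finite $G$ the predicate ``$\eqsat(R_P,G)$ is finite'' is $\Sigma_1$ by \autoref{cor:convergence}, and quantifying over all $G$ gives $\Pi_2$; the PCP reduction only yields hardness strictly below $\Pi_2$, which is why the exact level is left open.)

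For the encoding, I would represent the two letters by unary symbols $\sigma_0,\sigma_1$ and the empty string by a constant $e$, so strings become terms, and design $R_P$ to simulate a ``certify-then-iterate'' process with four ingredients. (i) Rules that read a candidate index sequence --- which the initial E-graph is expected to present --- and incrementally produce and compare the two concatenations $U=u_{i_1}\cdots u_{i_k}$ and $V=v_{i_1}\cdots v_{i_k}$. (ii) A consistency monitor built from destructor rules such as $\textsf{head}(\sigma_0(x))\to\mathbf{0}$, $\textsf{head}(\sigma_1(x))\to\mathbf{1}$, $\textsf{head}(e)\to\bot$, $\textsf{tail}(\sigma_0(x))\to x$, $\textsf{tail}(\sigma_1(x))\to x$, plus rules that apply $\textsf{head},\textsf{tail}$ under any asserted string equation; together these propagate a \emph{false} equation between two strings, character by character, until it forces one of the forbidden merges $\mathbf{0}\approx\mathbf{1}$, $\mathbf{0}\approx\bot$, $\mathbf{1}\approx\bot$. (iii) A divergence engine, a rule of the shape $\textsf{iter}(\textsf{cert}(c),y)\to\textsf{iter}(\textsf{cert}(c),\textsf{prepend}(c,y))$, which, once a candidate $c$ has been \emph{genuinely} certified as a solution, produces the pairwise-inequivalent E-classes $c^0,c^1,c^2,\dots$ forever. (iv) A collapse rule that, as soon as a forbidden merge is present, forces the $\textsf{iter}$-tower (and any other growth) to a finite fixpoint. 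All rules but the engine produce only bounded structure; the engine can be started only from a certificate, and a spurious certificate necessarily asserts a false equation and thus trips the monitor.

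Soundness is the easy direction: if $P$ has a solution $S$, take $G$ presenting $S$; then $U=V$ genuinely, certification succeeds, the monitor stays silent, and the engine makes $\eqsat(R_P,G)$ infinite. Completeness --- if $P$ has no solution, $\eqsat(R_P,G)$ is finite for every finite $G$ --- is the crux, and it is the reason the proof goes through PCP rather than directly through undecidability of Skolem-chase termination: one must show that \emph{no} adversarial initial E-graph over $\Sigma_P$, however it plants partial, malformed, cyclic, or malicious search state, can sustain the engine. Either the ``read/compare'' machinery does only bounded work and EqSat converges by \autoref{cor:convergence}; or a (necessarily false, since $P$ is unsolvable) string equation is eventually asserted --- including when $G$ plants an infinite, eventually-periodic candidate, where unsolvability forces a mismatch after finitely many steps --- which trips the monitor and fires the collapse; or a forbidden merge or a bogus certificate is planted outright, which again fires the collapse. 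The delicate part is designing ingredients (i)--(ii) so that every such run reaches a collapse or a fixpoint in finitely many $\ico_{R_P}$ steps (in particular, so the incremental comparison cannot itself spawn unbounded structure before a mismatch surfaces) and designing (iv) so that the collapse provably caps the E-graph once any forbidden merge exists. I expect this robustness argument --- a ``no escape'' invariant on the E-graphs reachable under $\ico_{R_P}$, tracking exactly which E-classes can ever feed the engine and showing that an inconsistency provably drains them --- to be the main obstacle.
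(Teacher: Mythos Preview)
Your proposal and the paper both reduce from PCP, but in \emph{opposite directions}. You argue that $P$ has a solution iff some $G$ makes EqSat diverge (equivalently: $P$ has no solution iff all-$G$ termination holds). The paper argues that $P$ has a solution iff all-$G$ termination holds. In the paper's construction the PCP \emph{search itself} is the divergence engine, via rules $k(x,y)\to k(i(x),\alpha_i(y))$; reaching a solution populates a constant $\goalsym$, and rules $\goalsym\to f(\goalsym,\ldots,\goalsym)$ for every symbol $f$ collapse the whole E-graph to a single class. The ``for all $G$'' quantifier is then discharged by two normalization rules $k(x,y)\to k(\epsilon,\epsilon)$ and $r(x,y)\to k(\epsilon,\epsilon)$: any $G$ either contains no $k$- or $r$-term (nothing fires, trivial termination) or it does (it is forced to contain the canonical start $k(\epsilon,\epsilon)$, and behaves exactly like that single instance).

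This reversal is precisely what makes the paper's proof short and yours long. In your direction the adversary chooses $G$, so you must prove that \emph{no} E-graph over your entire signature $\Sigma_P$ can fake a certificate and feed the engine. The concern you flag as the ``main obstacle'' is real and not handled by your sketch: $G$ may plant $\textsf{iter}(\textsf{cert}(c),e)$ where the E-class $c$ has E-nodes headed only by symbols like $\textsf{head}$, $\textsf{iter}$, $\textsf{prepend}$, $\mathbf{0}$, or $\bot$ --- none of which your read/compare rules recognize as an index sequence --- so $\textsf{buildU}(c)$ and $\textsf{buildV}(c)$ never fire, no false string equation is ever asserted, the monitor stays silent, and the engine runs forever. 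Patching this would require, at minimum, a catch-all rule $\textsf{cert}(f(\ldots))\to\textsf{BAD}$ for every non-index symbol $f\in\Sigma_P$ (including all the auxiliary machinery you introduced), a separate treatment of the empty sequence, and an argument that once $\textsf{BAD}$ appears the collapse absorbs the engine's output within the same $\ico$ fixpoint. All of this is plausible, but it is exactly the delicate case analysis the paper sidesteps entirely by letting the search be the divergence and the solution be the collapse trigger, so that a two-line normalization handles every $G$ at once.
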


% Comment this since this is implied by Alin Deutsch et al.'s result
% \begin{restatable}{cor}{egd-fair-egd-thm}
%   \label{cor:all-EGD-fair}
%     The following problem is undecidable:
%     \begin{itemize}
%         \item Instance: A set of TGDs and EGDs \(\Sigma\).
%         \item Question: Are all EGD-fair chase sequences starting with $I$ finite for all databases $I$?
%     \end{itemize}
% \end{restatable}

\vspace{-0.5em}
\section{Weak term acyclity for equality saturation termination}

\label{sec:acyclicity}

\begin{wrapfigure}{r}{0.4\textwidth}
    \vspace{-1.5em}
    \centering
    \begin{subfigure}{0.4\textwidth}
        \centering
        \includegraphics[height=2cm]{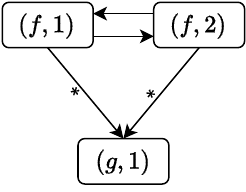}
        \caption{\autoref{ex:acyclicity-ex1}}
        \label{fig:acyclicity-ex1}
    \end{subfigure}
    \begin{subfigure}{0.4\textwidth}
        \centering
        \includegraphics[height=2cm]{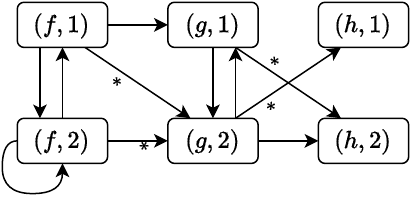}
        \caption{\autoref{ex:acyclicity-ex2}}
        \label{fig:acyclicity-ex2}
    \end{subfigure}
    \caption{
        Example weak term dependency graphs.
    Special edges are marked with $\ast$.}
    \label{fig:acyclicity-examples}
    \vspace{-2.5em}
 \end{wrapfigure}
We can adapt the classic weak acyclicity criterion \cite{FAGIN200589},
 which is used to show the termination of the chase algorithm,
 to equality saturation.
The adapted criterion, which we call \emph{weak term acyclicity},
 is more powerful
 than simply translating EqSat rules to 
 TGDs/EGDs and applying weak acyclicity.
We demonstrate weak term acyclicity with two examples,
 and the full definition can be found at \autoref{apdx:acyclicity}.

\begin{exm}\label{ex:acyclicity-ex1}
   Consider $\trs=\{f(f(x,y), z)\rightarrow g(f(z, x))\}$.
   This ruleset is weakly term acyclic, with the weak term dependency graph shown in \autoref{fig:acyclicity-ex1}.
   Note however if we derive the dependencies $\Gamma$ using the method in \autoref{sec:eqsat-to-chase} from $\trs$,
    $\Gamma$ is not weakly acyclic.
%    Notice that the dependencies $\Gamma$ derived using the method in \autoref{sec:eqsat-to-chase} is not weakly acyclic.
%    \begin{align*}
%        R_f(x_1, y_1, w_1)\land R_f(w_1, z_1, r_1)&\rightarrow \exists u_1. R_f(z_1, x_1, u_1)\land R_g(w_2, r_1)\label{ex:acyclicity-ex1:rule1}\\
%        R_f(x_2, y_2, w_2)\land R_f(x_2, y_2, w'_2)&\rightarrow w_2=w'_2\\
%        R_g(x_3, w_3)\land R_g(x_3, w'_3)&\rightarrow w_3=w'_3
%    \end{align*}
\end{exm}

\begin{exm}\label{ex:acyclicity-ex2}
   % Consider $\trs$
   % \begin{align}
   %     g(f(x_1, y_1), f(z_1, x_1))&\rightarrow g(z_1, f(y_1, x_1))\\
   %     g(x_2, y_2)&\rightarrow h(y_2, g(y_2, x_2))
   % \end{align}
   Consider \(\trs=\{
       g(f(x_1, y_1), f(z_1, x_1))\allowbreak\rightarrow g(z_1, f(y_1, x_1)),
       g(x_2, y_2)\rightarrow h(y_2, g(y_2, x_2))
   \}\).
   This ruleset is weakly term acyclic. Its weak term dependency graph is shown in \autoref{fig:acyclicity-ex2}.
\end{exm}

\vspace{-1em}
\revinsAll{\section{Conclusion}}
\revinsAll{
We have presented a semantic foundation for E-graphs and EqSat:
We identified E-graphs as reachable and deterministic tree automata 
and defined the result of EqSat as the least fixpoint according to E-graph homomoprhisms.
We defined the universal model of E-graphs and showed 
the fixpoint EqSat produces is the universal model (\autoref{thm:eqsat:dfn}).
We showed several basic properties about E-graphs, 
including a finite convergence lemma (\autoref{lemma:convergence}).
We then established connections between EqSat and the chase in both directions (\autoref{sec:eqsat:and:chase})
and characterize chase sequences that correspond to EqSat with EGD-fairness (\autoref{dfn:egd-fairness}).
Our main results are on the terminations of EqSat in three cases: single-instance, all-term-instance, and all-E-graph-instance.
Finally, adapting ideas from weak acyclicity for the chase,
we defined weak term acyclicity which implies EqSat termination.

The correspondence 
between EqSat and the chase established in this paper may help
further port the rich results of database theory to EqSat,
as the current paper only scratches the surface of the deep literatures of the chase.
Another direction is to use 
our better understanding of EqSat to design more efficient and expressive EqSat tools
and better support downstream applications 
of EqSat.
Finally, many problems about EqSat are still open.
For example, the exact upper bound of the all-E-graph-instance termination is not known.
Other problems include 
rule scheduling, evaluation algorithm, and E-graph extraction.
}

\bibliography{icdt-submission}

\appendix

\section{Proof basic properties of E-graphs and EqSat }

\subsection{Proof for \autoref{lem:rebuilding-exists}}
\label{apdx:rebuilding-exists}

\rebuildingexists*

\begin{proofsketch}
    Suppose $\calA = \langle Q, \Sigma, Q_\textit{final}, \Delta \rangle$ and let $L=\bigcup \{\mathcal{L}(c) \mid c\in \mathcal{A}\}$, the set of terms represented by any state of $\calA$. 
    Define $(\approx_L)\subseteq L\times L$
    as the smallest equivalence relation satisfying
    $t_1\approx_L t_2$ if some state $c$ accepts both $t_1$ and $t_2$.
    Let $C$ be the set of equivalence classes of $\approx_L$.
    Let 
    \[N=\setof{f([t_1]_{\approx_L}, \ldots, [t_k]_{\approx_L}\rightarrow [f(t_1, \ldots, t_k)]_{\approx_L})}{f(t_1,\ldots, t_k)\in L}.\]
  
    We claim that $G=\langle C, \Sigma, N\rangle$ is such an E-graph.
    $G$ is an E-graph as we can show $G$ is deterministic and reachable.
    We can also show that
    if two terms $t_1$ and $t_2$ are accepted by the same state of $\mathcal{A}$,
    it holds that $t_1\approx_L t_2$,
    and that if two terms $t_1\approx_L t_2$,
    $t_1$ and $t_2$ are accepted by the same E-class of $G$.
    Therefore, 
    terms accepted by the same state of $\mathcal{A}$ are accepted by the same E-class of $G$.
    Denote this mapping $h$.
    It can be shown that $h$ is a homomorphism from $\mathcal{A}$ and $G$.
  
    Let $G'$ be another E-graph and $u$ be a homomorphism from $\mathcal{A}$ to $G'$.
    We need to show there is a homomorphism from $G$ to $G'$.
    This can be done by noticing that 
    if $t_1\approx_L t_2$, $t_1$ and $t_2$ are accepted by the same E-class of $G'$.
    We can then define $h'([t]_{\approx_L})=c$ where $t$ is accepted by $c$ in $G'$
    and show $h'$ is a homomorphism from $G$ to $G'$.
    
    The uniqueness of $G$ follows from the anti-symmetry of $\sqsubseteq$.
  \end{proofsketch}

\subsection{Proof for \autoref{lemma:union}}
\label{apdx:least-upper-bound}

\leastupperbound*

\begin{proof}
    $G_i\sqsubseteq G$ since there is an identity homomorphism from $G_i$ to $\mathcal{A}_{\sqcup}$ 
    and a homomorphism, denoted as $h_\congr$, from $\mathcal{A}_{\sqcup}$ to $G$ by \autoref{lem:rebuilding-exists}.
    % and a homomorphism from $\mathcal{A}_{\sqcup}$ to $G$ by \autoref{lem:rebuilding-exists}.
  
    Next, we show if $G'$ is an E-graph such that $\forall i. G_i\sqsubseteq G'$,
    $G\sqsubseteq G'$.
    Denote the homomorphism from $G_i$ to $G'$ as $h_i$.
    There is a homomorphism from $\mathcal{A}_{\sqcup}$ to $G'$, 
    defined by $h(c) = h_i(c)$ for $c$ is an E-class of $G_i$.
    By \autoref{lem:rebuilding-exists},
    there is a homomorphism $h'$ from $G$ to $G'$.
      % https://q.uiver.app/#q=WzAsNCxbMCwwLCJHX2kiXSxbMSwwLCJcXG1hdGhjYWx7QX1fXFxzcWN1cD1cXGxhbmdsZVxcYmlnY3VwX2kgUV9pLCBcXGJpZ2N1cF9pIFxcRGVsdGFfaVxccmFuZ2xlIl0sWzIsMCwiRyJdLFsyLDEsIkcnIl0sWzAsMSwiXFx0ZXh0aXR7aWR9Il0sWzEsMiwiaF9cXHRleHRpdHtDQ30iXSxbMiwzLCJoIiwwLHsic3R5bGUiOnsiYm9keSI6eyJuYW1lIjoiZGFzaGVkIn19fV0sWzAsMywiaF9pIiwyXSxbMSwzLCJoJyIsMSx7InN0eWxlIjp7ImJvZHkiOnsibmFtZSI6ImRhc2hlZCJ9fX1dXQ==
      \[\begin{tikzcd}
        {G_i} & {\mathcal{A}_\sqcup=\langle\bigcup_i Q_i, \bigcup_i \Delta_i\rangle} & G \\
        && {G'}
        \arrow["{\textit{id}}", from=1-1, to=1-2]
        \arrow["{h_\congr}", from=1-2, to=1-3]
        \arrow["{h'}", dashed, from=1-3, to=2-3]
        \arrow["{h_i}"', from=1-1, to=2-3]
        \arrow["{h}"{description}, dashed, from=1-2, to=2-3]
      \end{tikzcd}\]
  \end{proof}

\subsection{Proof for \autoref{thm:eqsat:dfn}}
\label{apdx:eqsatdfn}
\eqsatdfnthm*

\begin{proofsketch}
  By definition, $H$ is a model of $\trs,G$ iff
  $G \sqsubseteq H$ and $H$ is a fixpoint of $\ico_{\trs}$.
  Therefore, it suffices to prove that
  $H \defeq \bigsqcup_{i\geq 0} \ico_{\trs}^{(i)}(G)$ is the least
  fixpoint.  We first prove that $H$ is a fixpoint.
  $H \sqsubseteq \ico_{\trs}(H)$ since $\ico_{\trs}$ is inflationary,
  so we prove the opposite.  Apply definition of $T_{\trs}$,
  $T_{\trs}(H) = H \cup (\bigcup \flatt)$, where $\bigcup \flatt$
  abbreviates the union of E-matches into $H$. Since $\trs$ is finite
  and every E-match $\sigma:\varset(\lhs) \rightarrow Q$ uses a finite
  number of states, there exists an $i$ such that all E-matches use
  only the states of $\ico_{\trs}^{(i)}(G)$.  This implies
  $(\bigcup \flatt)\sqsubseteq \ico_{\trs}^{(i)}(G) \sqsubseteq H$,
  proving that $T_{\trs}(H) \sqsubseteq H$.  It follows that
  $\ico_{\trs}(H) = \congr(T_{\trs}(H)) \sqsubseteq \congr(H) = H$
  because $H$ is deterministic.  Thus, $H$ is a fixpoint.  We prove
  that it is the least: let $H'$ be another fixpoint.  We use
  induction on $i$ to prove $\ico_{\trs}^{(i)}(G) \sqsubseteq H'$:
  assuming this holds for $i$, we derive
  $\ico_{\trs}(\ico_{\trs}^{(i)}(G)) \sqsubseteq \ico_{\trs}(H') =H'$
  thus it holds for $i+1$.  Therefore
  $H = \bigsqcup_{i \geq 0}\ico_{\trs}^{(i)}(G) \sqsubseteq H'$,
  completing the proof.
\end{proofsketch}

\section{Proof for the finite convergence lemma}
\label{apdx:convergence}

\convergence*

\begin{proof}
  For the sake of contradiction we assume $\mathcal{G}$ is infinite.
  Let us first the define the \textit{rank} of an E-class as follows.
    \begin{align*}
  \textit{rank}_G(c)& = \min_{\left(f\left(c_1,\ldots, c_n\right)\rightarrow c\right)\in N} \textit{rank}_G(f(c_1, \ldots, c_n))\\
  \textit{rank}_G(f(c_1,\ldots, c_n)) &= 1 + \max\{0, \textit{rank}_G(c_1), \ldots \textit{rank}_G(c_n)\}
  \end{align*}
  Intuitively, the rank of an E-class is the smallest depth of the terms represented by this E-class.
  By E-graph's reachability, every E-class has a finite rank.
  We define the rank of an E-graph as the greatest rank of its E-classes.
  Some observations about E-graph ranks:

  \begin{fac}
    Every finite E-graph has a finite rank.
  \end{fac}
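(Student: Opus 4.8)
The plan is to reduce the statement to the finiteness of $\textit{rank}_G(c)$ for each individual E-class $c$: since a finite E-graph $G$ has only finitely many E-classes, and $\textit{rank}(G)$ is defined as the maximum of $\textit{rank}_G(c)$ over all E-classes $c$, a finite bound on each E-class rank immediately yields a finite bound on $\textit{rank}(G)$.

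Before doing that, I would make the mutually recursive definition of $\textit{rank}_G$ precise: read the two displayed equations as characterizing the \emph{least} function $\textit{rank}_G$ valued in $\N \cup \set{\infty}$ that satisfies them (it exists because both right-hand sides are monotone in the unknown function, ordering pointwise with $n < \infty$ for all $n \in \N$). Under this reading, to show $\textit{rank}_G(c) < \infty$ it suffices to exhibit, for each $c$, some finite value satisfying the defining inequalities at $c$. The witness is supplied by reachability (\autoref{def:egraph}): every E-class $c$ accepts some ground term $t \in T(\Sigma)$, i.e.\ $t \rightarrow^*_G c$. Writing $\textit{depth}(f(t_1,\ldots,t_n)) \defeq 1 + \max\set{0,\textit{depth}(t_1),\ldots,\textit{depth}(t_n)}$, I would prove by structural induction on $t$ that $\textit{rank}_G(c) \le \textit{depth}(t)$: if $t = f(t_1,\ldots,t_n)$ and $t \rightarrow^*_G c$, there are E-classes $c_i$ with $t_i \rightarrow^*_G c_i$ and a transition $f(c_1,\ldots,c_n) \rightarrow c$; the induction hypothesis gives $\textit{rank}_G(c_i) \le \textit{depth}(t_i)$, hence $\textit{rank}_G(f(c_1,\ldots,c_n)) \le 1 + \max\set{0,\textit{depth}(t_1),\ldots,\textit{depth}(t_n)} = \textit{depth}(t)$, and taking the minimum over transitions into $c$ yields $\textit{rank}_G(c) \le \textit{depth}(t) < \infty$. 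Taking the maximum over the finitely many E-classes of $G$ finishes the proof.

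The only subtle point is the preliminary step — confirming that the mutual recursion for $\textit{rank}_G$ genuinely determines a well-defined value in $\N \cup \set{\infty}$; once it is cast as the least fixpoint of a monotone operator, the remainder is a routine structural induction that essentially just recovers the informal reading ``$\textit{rank}_G(c)$ is the minimal depth of a term accepted by $c$''. This Fact is then used, together with the other stated observations about E-graph ranks, in the proof of the Finite Convergence Lemma.
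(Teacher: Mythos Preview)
Your proposal is correct and follows the same idea the paper uses: the paper states this Fact without a formal proof, relying on the immediately preceding remark that ``the rank of an E-class is the smallest depth of the terms represented by this E-class'' together with reachability, which is exactly what your structural induction establishes. Your added care in pinning down the mutual recursion as a least fixpoint in $\N\cup\set{\infty}$ is more rigorous than the paper's treatment, but the underlying argument is the same.
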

  \begin{fac}\label{fact:finite-bounded-rank}
    The set of E-graphs with a bounded rank is finite.\footnote{To see this, E-graph rank bounds the number of E-classes an E-graph can have.}
  \end{fac}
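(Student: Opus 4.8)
The plan is to prove the stronger claim hinted at in the footnote---that a bound on the rank bounds the number of E-classes---and then to observe that, over the fixed finite signature $\Sigma$, only finitely many E-graphs can be assembled (up to isomorphism) from a bounded set of E-classes. Fix a bound $B$ and let $G = \langle Q, \Sigma, \Delta\rangle$ be any E-graph with $\textit{rank}_G(G) \le B$, so that $\textit{rank}_G(c) \le B$ for every E-class $c \in Q$.

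First I would extract, for every E-class, a shallow witnessing term, by induction on $\textit{rank}_G(c)$. If $\textit{rank}_G(c) = r$, the definition of rank provides a transition $f(c_1,\ldots,c_n) \rightarrow c$ with $1 + \max\{0, \textit{rank}_G(c_1),\ldots,\textit{rank}_G(c_n)\} = r$, hence $\textit{rank}_G(c_i) \le r-1$ for each $i$. By the induction hypothesis each $c_i$ represents a ground term $t_i$ of depth at most $r-1$, so $t_c \defeq f(t_1,\ldots,t_n)$ is a ground term of depth at most $r \le B$ with $t_c \rightarrow^*_G c$ (the case $n=0$ of a constant is the base case). Reachability guarantees every rank is finite, so the induction is well-founded. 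Because $G$ is deterministic, a ground term is accepted by at most one E-class, so the assignment $c \mapsto t_c$ is injective. Consequently $|Q| \le N_B$, where $N_B$ is the number of ground terms in $T(\Sigma)$ of depth at most $B$; since $\Sigma$ is finite, $N_B$ is finite, which establishes the footnote's claim.

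It then remains to pass from a bounded number of E-classes to finitely many E-graphs up to isomorphism. Up to isomorphism I may rename states so that $Q \subseteq \{1, 2, \ldots, N_B\}$, of which there are only finitely many choices. Given $Q$, every transition has the form $f(q_1,\ldots,q_n)\rightarrow q$ with $f \in \Sigma$ and $q_1,\ldots,q_n,q \in Q$, so the set of all conceivable transitions is finite (bounded by $\sum_{f \in \Sigma} |Q|^{\arity(f)+1}$). The transition relation $\Delta$ is a subset of this finite set, hence there are finitely many choices for $\Delta$. Ranging over the finitely many pairs $(Q,\Delta)$ yields a finite collection that contains, up to isomorphism, every E-graph of rank at most $B$; we may harmlessly overcount, since some pairs fail to be deterministic or reachable. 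Therefore the set of E-graphs of rank at most $B$ is finite.

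I expect the only genuine subtlety to be the reading of ``finite'' as ``finite up to isomorphism,'' which is the notion made available by \autoref{lem:egraph-partial-order} and is exactly what the surrounding convergence argument needs; everything else is a routine counting argument. The single place requiring care is the witnessing-term extraction, where both reachability (to guarantee that ranks, and hence the $t_c$, exist) and determinism (to guarantee injectivity of $c \mapsto t_c$) are essential.
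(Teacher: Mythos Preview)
Your proof is correct and follows exactly the approach the paper indicates: the paper offers only the footnote hint that the rank bounds the number of E-classes, and you have cleanly fleshed this out by extracting a depth-$\le B$ witnessing term for each E-class (using reachability for existence and determinism for injectivity), then counting E-graphs on a bounded state set up to isomorphism. Your explicit handling of the ``up to isomorphism'' reading is the right call and matches how the fact is used in the surrounding convergence proof.
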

  \begin{fac}\label{fact:rank-decreases}
    Ranks only decrease:
    Let $G$ and $H$ be two E-graphs and $h:G\rightarrow H$ a homomorphism. 
    $\textit{rank}_G(c)\geq \textit{rank}_H(h(c))$.
  \end{fac}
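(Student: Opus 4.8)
The plan is to prove the pointwise inequality $\textit{rank}_G(c)\geq \textit{rank}_H(h(c))$ by induction on the natural number $\textit{rank}_G(c)$. Two preliminary remarks justify this. First, every rank value is at least $1$: a term-level value $\textit{rank}_G(f(c_1,\ldots,c_n)) = 1+\max\{0,\ldots\}$ is always $\geq 1$, and an E-class rank is a minimum of such values, so the recursion bottoms out at nullary symbols. Second, since $G$ is reachable, every E-class $c$ occurs on the right-hand side of at least one transition, so the minimum defining $\textit{rank}_G(c)$ ranges over a nonempty set and (again by reachability) evaluates to a finite natural number. Hence the induction is well-founded.

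For the inductive step, I would fix an E-class $c$ of $G$ and choose a transition $f(c_1,\ldots,c_n)\rightarrow c$ of $G$ attaining the minimum, so that $\textit{rank}_G(c) = 1+\max\{0,\textit{rank}_G(c_1),\ldots,\textit{rank}_G(c_n)\}$. Every child then satisfies $\textit{rank}_G(c_i)\leq \textit{rank}_G(c)-1 < \textit{rank}_G(c)$, so the induction hypothesis applies and gives $\textit{rank}_H(h(c_i))\leq \textit{rank}_G(c_i)$ for all $i$. Since $h$ is a homomorphism, $H$ contains the transition $f(h(c_1),\ldots,h(c_n))\rightarrow h(c)$; because $\textit{rank}_H(h(c))$ is a minimum over all transitions of $H$ into $h(c)$ and this is one of them,
\[
  \textit{rank}_H(h(c)) \;\leq\; 1+\max\{0,\textit{rank}_H(h(c_1)),\ldots,\textit{rank}_H(h(c_n))\} \;\leq\; 1+\max\{0,\textit{rank}_G(c_1),\ldots,\textit{rank}_G(c_n)\} \;=\; \textit{rank}_G(c).
\]
The base case $\textit{rank}_G(c)=1$ is just the instance $n=0$ of this argument (the minimizing transition is a nullary symbol $f()\rightarrow c$, $H$ contains $f()\rightarrow h(c)$, and the maximum over the empty family of children is $0$), so no separate treatment is needed. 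An alternative, equivalent route is to first record that $\textit{rank}_G(c)$ equals the minimal depth of a term accepted by $c$, and then invoke \autoref{lemma:homomoprhism:simple}: if $t$ is the minimal-depth term of $c$ in $G$, then $t\rightarrow^*_H h(c)$, so $h(c)$ accepts a term of depth $\textit{rank}_G(c)$, whence $\textit{rank}_H(h(c))\leq \textit{rank}_G(c)$.

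I do not anticipate a genuine obstacle here: the claim is a direct consequence of the defining property of homomorphisms, and the only two points that require a moment's care have already been flagged — ranks are strictly positive (so the recursion terminates at nullary symbols and the induction on $\textit{rank}_G(c)$ is well-founded), and the children in a rank-minimizing transition have strictly smaller rank than their parent (so the induction hypothesis is available for them). This observation is one of the three facts about rank that feed into the remainder of the proof of the Finite Convergence Lemma.
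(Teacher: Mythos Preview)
Your argument is correct. Both routes you sketch---the induction on $\textit{rank}_G(c)$ and the alternative via the characterization of rank as minimal accepted-term depth combined with \autoref{lemma:homomoprhism:simple}---are sound, and the two care points you flag (strict positivity of ranks, strict decrease for children in a rank-minimizing transition) are exactly what is needed to make the induction go through.

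There is nothing to compare against: the paper states this as a bare \emph{Fact} inside the proof of \autoref{lemma:convergence} and offers no justification at all, presumably because the authors regard it as immediate from the definitions. Your write-up therefore supplies strictly more detail than the paper does. If anything, the second route you mention (minimal-depth term plus \autoref{lemma:homomoprhism:simple}) is the closest in spirit to how the paper treats rank, since the sentence immediately following the definition says ``the rank of an E-class is the smallest depth of the terms represented by this E-class.''
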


  Let us denote the homomorphism between $G_i$ and $G_j$ as $h_{i,j}$ for all $i\leq j$,
  and the homomorphism between $G_i$ and $G_{\infty}$ as $h_{i,\infty}$.
  Since $G_{\infty}$ is finite, denote its rank as $N$.

  \newcommand{\unionn}{$\textit{Union}_N$}

  We sat a homomorphism $h:G\rightarrow H$ is \unionn{} if there exist 
  two E-classes $c_1$ and $c_2$ such that
  $\textit{rank}_G(c_1)\leq N$, $\textit{rank}_G(c_2)\leq N$,
  and $h(c_1)=h(c_2)$.
  In other words, $H$ ``unions'' two E-classes with rank $\leq N$ of $G$.
  We claim there exists some lower bound $M$ such that for all $M\leq i\leq j$, $h_{i,j}$ is not \unionn{}.
  This is because every \unionn{} homomorphism necessarily merges two equivalence classes with rank $\leq N$,
  and this cannot happen indefinitely because each such equivalence class need to be backed by a term with depth $\leq N$, which is finite.

  Find $a_1>M$ such that $G_{a_1}$ contains an E-class $c_1$ with rank $N+1$.
  This is always possible by \autoref{fact:finite-bounded-rank} and 
  the fact that any E-graph with rank $>N$ must contain some E-classes with rank $N+1$.
  Let $n_1=f_1({\vec{\mathbf{c}}_1})\rightarrow c_1$ be an E-node of E-class $c_1$
  such that $\max_i \textit{rank}_G(\left(\vec{\mathbf{c}_1}\right)_i)\leq N$.

  Since all E-classes of $G_{\infty}$ have rank $\leq N$, 
  there must exist some $b_1>a_1$ such that 
  $h_{a_1,b_1}(c)$ in $G_{b_1}$ has rank $\leq N$ .
  Next, find an $a_2>b_1$ such that $G_{a_2}$ contains an E-class $c_2$ with rank $N+1$.
  Repeat the same process to obtain a sequence $\mathcal{N}$ of 
  E-nodes
  \begin{align*}
    f_1({\vec{\mathbf{c}}_1})\rightarrow c_1\quad \in \quad G_{a_1}\\
    f_2({\vec{\mathbf{c}}_2})\rightarrow c_2\quad \in \quad G_{a_2}\\
    \ldots\quad\quad
  \end{align*}
  that satisfies $\textit{rank}_{G_{a_i}}(c_i)=N+1$, $\max_j \left(\vec{\mathbf{c}}_i\right)_j\leq N$,
  and $\textit{rank}_{G_{b_i}}(h{a_i, b_i}(c_i))\leq N$.
  Note by \autoref{fact:rank-decreases},
  it also holds that for all $k>b_i$, $h_{a_i, k}(c_i)$ also has rank $\leq N$.

  For the rest of the proof, we claim that for all $i$ and $j$, $h_{a_i,\infty}(f_i(\vec{\mathbf{c}}_i))\neq h_{a_j,\infty}(f_j(\vec{\mathbf{c}}_j))$,
  which implies $G_{\infty}$ has infinitely many distinct E-classes as $\mathcal{G}$ is infinite.
  This is a contradiction to the fact that $G_{\infty}$ is finite.

  To prove this claim, we assume $i<j$ without loss of generality.
  Observe that $h_{a_i, a_j}(c_i)$ has rank $\leq N$ and $c_j$ has rank $N+1$ in $G_{a_j}$,
  so \[
    h_{a_i, a_j}(c_i)\neq c_j.
    \]
  By determinacy of E-graphs,
  \[
    h_{a_i, a_j}(f_i(\vec{\mathbf{c}}_i))\neq f_j(\vec{\mathbf{c}}_j).
    \]
  Because E-classes in $\vec{\mathbf{c}}_i$ and $\vec{\mathbf{c}}_j$ has rank $\leq N$,
  it holds that for all $k\geq a_j$,
  \begin{align*}
    h_{a_j, k}(h_{a_i, a_j}(f_i(\vec{\mathbf{c}}_i)))\neq h_{a_j, k}(f_j(\vec{\mathbf{c}}_j))\\
    h_{a_i, k}(f_i(\vec{\mathbf{c}}_i))\neq h_{a_j, k}(f_j(\vec{\mathbf{c}}_j))
  \end{align*}
  since $h_{a_j, k}$ is not \unionn{}.
  This inequality is preserved \textit{ad infinitum}.
  By virtue of $G_{\infty}$ being the least upper bound,
  \[
    h_{a_i, \infty}(f_i(\vec{\mathbf{c}}_i))\neq h_{a_j, \infty}(f_j(\vec{\mathbf{c}}_j)).
  \]
  \end{proof}

\section{Proof for the reductions between EqSat and the chase}

\subsection{Proof for \autoref{thm:skolem-to-eqsat}}
\label{apdx:skolem-to-eqsat}

\skolemtoeqsatthm*

\begin{figure}
    \centering
    $\begin{array}{ccr}
        r_{\boldtop}:&\ 
          \boldtop \rightarrow \boldland(\boldtop, \boldtop)\\
        r_R:&\ 
          R(x_1,\ldots, x_n)\rightarrow \boldtop &\text{for relation symbol $R\in\sigma$}\\
        r_d:&\ 
          \begin{array}{@{}l@{}}
          \boldland\left(R_1\left(\vec{x_1},\vec{y_1}\right),\ldots \right)\rightarrow \\
          \boldland\left(
            R'_1\left(\vec{x'_1}, \vec{f^d_z}_1\right),
          \ldots\right)
          \end{array}
        &\text{for dep.~
          $d:\begin{array}{c}
            R_1\left(\vec{x_1},\vec{y_1}\right)\land \ldots\rightarrow \\ 
            R'_1\left(\vec{x'_1},\vec{f^d_z}_1\right)\land\ldots\end{array}\in \Gamma$}\\
        r_t:&\ \boldtop \rightarrow R(c_1, \ldots, c_n)
         & \text{for initial tuple $t=R(c_1,\ldots, c_n)\in I$}
    \end{array}$
    \caption{A summary of encoded rewrite rules in \autoref{thm:skolem-to-eqsat} 
    for the given $\Gamma$ and $I$. The initial term to equality saturation is $\boldtop$.}
    \label{fig:skolem-rule-summary}
\end{figure}

\begin{proof}
    Denote $G=\eqsat(\trs, t)$ and $I=\sklchase(\Gamma, I)$.
    Denote the E-class of $G$ that represents $\boldtop$ as $c_{\boldtop}$.

    We show the following holds between $G$ and $I$:
    \begin{center}
        A tuple $R(c_1, \ldots, c_n)$ is in $I$ if and only if
        $R(c_1, \ldots, c_n)$ is represented by $G$,
        where $R$ is a function symbol in $\Sigma$
        and $c_1,\ldots, c_n$ are Skolem terms or constants.
    \end{center}

    \begin{itemize}
        \item $\Rightarrow$: We prove by induction on
        the iteration of applying match/apply operator $T_\trs$ when a tuple is first derived in $I$.
        \begin{itemize}
            \item Base case: this is clear from \autoref{fig:skolem-to-eqsat}.
            Denote the E-graph after applying the first iteration of equality saturation as
            $G_1$.
            In the first iteration of equality saturation,
            rules that are applied are $\boldtop\rightarrow \boldland(\boldtop, \boldtop)$ and
            $\boldtop\rightarrow R(c_1, \ldots, c_n)$ for each tuple in the initial database,
            so every tuple $R(c_1, \ldots, c_n)$ in the input database is represented $G_1$.
            Since $G_1\sqsubseteq G$, every tuple is also represented by $G$.
            \item Inductive case: 
            Assume a tuple $R_i(h(\vec{x'_i}), h(\vec{f^d_z}_i))$ is produced at the $n$-th iteration of $T_\trs$
            by dependency 
            \[
                d: \ R_1\left(\vec{x_1},\vec{y_1}\right)\land\ldots \land R_n\left(\vec{x_n}, \vec{y_n}\right)
            \rightarrow 
            R'_1\left(\vec{x'_1}, \vec{f^d_z}_1\right)\land \ldots\land R'_m\left(\vec{x'_m}, \vec{f^d_z}_m\right)
            \]
            and substitution $h$.
            By inductive hypothesis, every tuple $R_i(h(\vec{x_i}), h(\vec{y_i}))$
            in the substituted body of $d$ is also represented by $c_{\boldtop}$ of $G$.

            It is straightforward to see that $G$ contains E-nodes $\boldtop\rightarrow c_{\boldtop}$ and 
            $\boldland(\boldtop, \boldtop)\rightarrow c_{\boldtop}$.
            As a result,
            \begin{align*}
                \boldland\left(R_1\left(h\left(\vec{x_1}\right), h\left(\vec{y_1}\right)\right), \ldots, 
                \boldland\left(R_n\left(h\left(\vec{x_n}\right), h\left(\vec{y_n}\right)\right),\boldtop \right)\right)
                \rightarrow^* &
                \boldland\left(c_{\boldtop}, \ldots, \boldland\left(c_{\boldtop},c_{\boldtop} \right)\right)
                \\
                \rightarrow^* &
                \;c_{\boldtop}
            \end{align*}
            Therefore, to reach fixpoint,
            equality saturation would also add the substituted right-hand side
            \[
                \boldland\left(R'_1\left(\vec{x'_1}, \vec{f^d_z}_1\right), \ldots,\boldland\left( R'_m\left(\vec{x'_m}, \vec{f^d_z}_m\right), \boldtop\right)\right)
            \]
            to $c_{\boldtop}$.
            Similarly, equality saturation would fire series of rules of the form $r_R:R(x_1, \ldots, x_n)\rightarrow \boldtop$,
            so $R_i(h(\vec{x'_i}), h(\vec{f^d_z}_i))$ would be represented by $c_{\boldtop}$ of $G$.
        \end{itemize}
    \item $\Leftarrow$: 
    It can be shown that every term $R(\vec{c})$ that is inserted into the E-graph at iteration $n$
    will be unioned with the $c_{\boldtop}$ at iteration $n+1$.
    Therefore, we only need to show that for every term $R(\vec{c})$ represented by some E-class of $G$,
    $R(\vec{c})$ is in $I$.
    % Moreover, every E-class of $G$ besides $c_{\boldtop}$ represents only one term 
    % (i.e., a Skolem term or a constant),
    % so in the following we will 
    Similar to the first case, 
    let us prove by inducting on the first time a $R$-term $R(\vec{c})$ is inserted into the E-graph.
    \begin{itemize}
        \item Base case: At iteration 1, all terms $R(c_1, \ldots, c_n)$ inserted correspond to tuples from the initial database.
        \item Inductive case: Suppose at iteration $i>1$, a $R$-term $R(\vec{c})[\sigma]$ is inserted into the E-graph
        because of some rewrite rule $r$ and E-class substitution $\sigma$.
        By definition, the rewrite rule $r$ has to be of the form 
        \[
            r_d:\boldland\left(R_1\left(\vec{x_1},\vec{y_1}\right),
            \ldots
            \right)
            \rightarrow 
            \boldland\left(
            R'_1\left(\vec{x'_1}, \vec{f^d_z}_1\right),
            \ldots \right).
     \]
     By inductive hypothesis, 
     all tuples $R_i(\vec{c_i})$ represented by $R_i(\vec{x_i}, \vec{y_i})[\sigma]$ are in $I$,
     so the dependency that $r$ is mapped from would be fired and 
     tuples represented by $R(\vec{c})[\sigma]$ are added to $I$.
    \end{itemize}
    \end{itemize}

    Next, we show that the Skolem chase terminates if and only if equality saturation terminates:
    if equality saturation terminates,
    an E-graph with a finite number of transitions is produced,
    so the mapped database is finite.
    Since the Skolem chase monotonically adds tuples to the database,
    the Skolem chase has to terminate as well.
    On the other hand, if the Skolem chase is finite,
    the produced E-graph needs to be finite as well, 
    since it only contains $\boldtop\rightarrow c_{\boldtop}$ 
    and $\boldland(\boldtop, \boldtop)\rightarrow c_{\boldtop}$
    besides E-nodes that are in one-to-one correspondence with the databases.
    By \autoref{lemma:convergence}, equality saturation terminates as well.
\end{proof}

\subsection{Proof for \autoref{thm:eqsat-to-chase}}
\label{apdx:eqsat-to-chase}

\eqsattochasethm*

To prove this theorem, we show the following lemma. Recall that a database is a core
 if every homomorphism from the database to itself is an isomorphism.
Two homomorphically equivalent cores are isomorphic.

\begin{lem}
  Given a set of rewrite rules $\Sigma$,
  let $I$ be a database instance that encodes an E-graph
  and $\Gamma$ be a set of dependencies encoding EqSat of $\Sigma$, both using the encoding in \autoref{sec:eqsat-to-chase}.
  Suppose $I_k$ is a database obtained after a finite number of chase steps of $\Gamma$ and $I$.
  If $I_k$ is closed under the EGD subset of $\Gamma$, $I_k$ is a core database instance.
\end{lem}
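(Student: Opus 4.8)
The plan is to recognize $I_k$ as the database encoding of an E-graph and then appeal to the rigidity of E-graphs: as noted after \autoref{lem:unique-morphism}, an E-graph admits no endomorphism other than the identity, and transporting this fact along the encoding of \autoref{sec:eqsat-to-chase} shows that the only homomorphism $I_k\to I_k$ is the identity, which is an isomorphism, so $I_k$ is a core.

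First I would view $I_k$ as a tree automaton $\calA_k$ over $\Sigma$ whose states are the marked nulls of $I_k$ and whose transitions are $f(c_1,\dots,c_n)\to c$, one for each tuple $R_f(c_1,\dots,c_n,c)\in I_k$. Then $\calA_k$ is deterministic precisely because $I_k$ is closed under the EGD subset $\{\,R_f(\vec x,x)\wedge R_f(\vec x,x')\to x=x'\,\}$ of $\Gamma$, which is the hypothesis. To see that $\calA_k$ is reachable I would induct on the number of chase steps: the initial instance encodes a (reachable) E-graph; an EGD step quotients the current automaton by a pair of states, and quotients of reachable automata are again reachable by \autoref{lemma:homomoprhism:simple}; a TGD step only adds tuples, and the only fresh nulls it introduces are the values that the trigger assigns to proper subterms of some right-hand side (or to the variable of a grounding rule for a single-variable left-hand side), each of which is the value of a function symbol applied to states that are already present and, by induction, already reachable. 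Hence $\calA_k$ is reachable and deterministic, i.e.\ an E-graph in the sense of \autoref{def:egraph}.

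It remains to transfer the rigidity. Because the encoding of \autoref{sec:eqsat-to-chase} uses no database constants, a database homomorphism $h\colon I_k\to I_k$ is just a function on $\dom(I_k)$ that sends every tuple to a tuple, which unwinds to exactly a tree-automaton homomorphism $\calA_k\to\calA_k$. Now \autoref{lem:unique-morphism} applies --- its proof uses only that the source is reachable and the target deterministic, both of which hold here --- so there is at most one homomorphism $\calA_k\to\calA_k$; the identity is one, hence the only one. Therefore the unique endomorphism of $I_k$ is the identity, so in particular every endomorphism of $I_k$ is an isomorphism, i.e.\ $I_k$ is a core. (The argument needs no finiteness of $I_k$, though $I_k$ is in fact finite.)

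The part that requires care is the reachability invariant, since this is the only step that must engage with the details of the encoding; in particular one must check the corner cases of \autoref{sec:eqsat-to-chase} --- single-variable left-hand sides, where the grounding rules bind the variable to the output slot of some $R_f$; single-variable right-hand sides, which are encoded as EGDs and introduce no new null; and nullary function symbols --- and verify that none of them produces an unreachable null. Everything else is a routine unwinding of definitions, and once reachability is established the core property is immediate.
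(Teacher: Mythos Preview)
Your proof is correct and takes essentially the same approach as the paper: show that $I_k$ encodes an E-graph (deterministic from EGD-closure, reachable because each chase step preserves reachability) and then invoke the rigidity of E-graphs from \autoref{lem:unique-morphism} to conclude that the only endomorphism of $I_k$ is the identity. Your version is more detailed---the paper simply asserts that ``every dependency of $\Sigma$ preserves reachability of the encoded E-graph'' without spelling out the induction or the corner cases you sketch---but the structure is identical.
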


Consider the tree automaton $\mathcal{A}$ encoded by $I_k$. 
Because $I_k$ is closed under EGDs, $\mathcal{A}$ is deterministic.
Since every dependency of $\Sigma$ preserves reachability of the encoded E-graph,
$\mathcal{A}$ is reachable, so $\mathcal{A}$ is an E-graph.
Therefore, $\mathcal{A}$ is a core tree automaton.
The definition of homomorphisms is preserved under our encoding,
 so the only homomorphism from $I_k$ to $I_k$ is the identity mapping.
Therefore $I_k$ is a core.

\begin{cor}
  Fix $\Sigma$, $I$, $\Gamma$ as above. 
  Let $I_\infty$ be the result of a (potentially non-terminating) chase sequence of $\Sigma$ and $I$.
  $I_\infty$ is a core database instance.
\end{cor}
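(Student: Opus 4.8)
The plan is to show that $I_\infty$ is again the encoding of an E-graph --- that is, of a reachable and deterministic tree automaton --- and then conclude rigidity, and hence core-ness, exactly as in the preceding Lemma. If the chase sequence terminates at some $I_n$ with $I_n\models\Gamma$, then $I_\infty=I_n$ is in particular closed under the EGD subset of $\Gamma$, so the preceding Lemma applies verbatim; so assume the sequence is infinite and write $I_\infty=\bigcup_{i\ge 0}\bigcap_{j\ge i}I_j$. Call a tuple \emph{persistent} if it belongs to $I_\infty$; an element lies in $\dom(I_\infty)$ exactly when it occurs in some persistent tuple, in which case it is never a collapse source past some finite stage (call such an element \emph{stable}).

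First I would show that $I_\infty$ is closed under the EGD subset of $\Gamma$; together with the functional dependencies, this makes the encoded tree automaton $\mathcal{A}_\infty$ deterministic. Suppose, toward a contradiction, that $h$ is an active trigger of an EGD $\lambda(\vec x)\to x_i=x_j$ in $I_\infty$. The conjunctive query $\lambda$ has finitely many atoms, each persistent, so there is a stage $m$ from which all of them are present; since $h(x_i)$ and $h(x_j)$ occur in persistent tuples they are stable, so by enlarging $m$ we may assume neither is ever a collapse source past $m$. Then $h$ is an active trigger of the EGD in $I_j$ for every $j\ge m$: its body persists, and the only way $h(x_i)$ and $h(x_j)$ could become equal would be for one to be collapsed into the other. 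This contradicts fairness of the chase sequence.

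Next I would show that $\mathcal{A}_\infty$ is reachable. Each individual chase step preserves reachability of the encoded automaton: a rewrite-rule TGD introduces the fresh nulls of a flattened right-hand side only as roots of new transitions whose children are already represented (reusing the matched root state as the root of the right-hand side), and an EGD/FD collapse $d\mapsto d'$ sends any witnessing derivation $t\rightarrow^*_{\calA}c$ to the derivation obtained by renaming $d$ to $d'$. For the limit, fix a stable $c\in\dom(I_\infty)$; since $c$ is stable, the least size of a ground term reaching $c$ is non-increasing along the chase (collapses only rename states inside a derivation, TGDs only add transitions), so it stabilizes at some value $w^{*}$ from a stage $N$ on, and for $j\ge N$ the set of size-$w^{*}$ ground terms reaching $c$ in $\mathcal{A}_j$ is non-decreasing and contained in the finite set of size-$w^{*}$ terms over the underlying (finite) signature, hence also stabilizes. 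Choosing a term $t^{*}$ in that stable set, one argues --- and this is the delicate point --- that some witnessing derivation of $t^{*}\rightarrow^{*}c$ eventually stops changing and therefore survives into $I_\infty$, using that the finitely many states it involves can only be merged finitely often and that $c$ is fixed by all those merges. Thus $c$ is reachable in $\mathcal{A}_\infty$, and $\mathcal{A}_\infty$ is an E-graph.

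Finally, by \autoref{lem:unique-morphism} every E-graph is rigid, so the only endomorphism of $\mathcal{A}_\infty$ is the identity; since the encoding of \autoref{sec:eqsat-to-chase} identifies database homomorphisms $I_\infty\to I_\infty$ with tree-automaton homomorphisms $\mathcal{A}_\infty\to\mathcal{A}_\infty$, the only endomorphism of $I_\infty$ is the identity, which is an isomorphism, so $I_\infty$ is a core. I expect the reachability step to be the main obstacle: since EGD collapses delete tuples, the transitions witnessing a state's reachability need not persist verbatim, and one has to show that --- thanks to the boundedness of a minimal witnessing term over a finite signature --- only finitely many merges can touch such a witness, so that it stabilizes inside $I_\infty$; the ``stability'' bookkeeping in the determinism step is a milder instance of the same issue. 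A cleaner alternative would be to strengthen the last clause of \autoref{thm:eqsat-to-chase} to hold unconditionally, showing $I_\infty$ is isomorphic to the encoding of $\eqsat(\trs,G)$ even when EqSat diverges; since $\eqsat(\trs,G)$ is a (possibly infinite) E-graph and hence rigid, core-ness of $I_\infty$ would follow immediately.
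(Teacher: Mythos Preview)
Your approach matches the paper's: show that $I_\infty$ encodes an E-graph (closed under the EGDs, hence deterministic, and reachable) and then invoke rigidity exactly as in the preceding Lemma to conclude core-ness. The paper's argument is much terser than yours---it simply asserts EGD-closure and dispatches reachability in two sentences by observing that each dependency head preserves reachability---so your more careful treatment, especially of the limit-reachability step you flag as delicate, fills in detail the paper leaves implicit.
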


$I_\infty$ is closed under the EGD subset of $\Gamma$. By the above lemma we only need to consider
 the case where $I_\infty$ is the result of a non-terminating chase.
Since $I_\infty$ is closed under the EGDs, it encodes a deterministic tree automaton $\mathcal{A}$.
Suppose for the sake of contradiction that $\mathcal{A}$ is not reachable, 
 with some state $q$ that does not accept any term.
There must exist a finite number $m$ where $q$ stays unreachable for all $\geq m$ iterations.
However, this is impossible, since the head of each dependency of $\Gamma$ preserves reachability.

\begin{proof}[Proof of \autoref{thm:eqsat-to-chase}]
  We use the encoding in \autoref{sec:eqsat-to-chase} and
  treat every tuple $R_f(c_1, \ldots, c_n, c)$ as an E-node $f(c_1, \ldots, c_n)\rightarrow c$ 
  in an E-graph.
  It is straightforward to see that every insertion of the match/apply operator
  corresponds to a chase step of some TGD (EGD if $\rhs$ is a single-variable)
  of the above the encoding, 
  and every merge by $\congr$ corresponds to a chase step of some FD.
  Therefore, a terminating run of equality saturation 
  corresponds to a terminating chase sequence ((1)$\implies$ (2)).
  Denote this chase sequence as the EqSat-encoding chase sequence $\mathcal{I}_\eqsat$. 
  and the resulting database is isomorphic to the E-graph (when encoded as a database).

  Suppose there is a terminating chase sequence that produces some database instance $I$.
  By our lemma $I$ is core.
  Let $J$ be the database instance produced by $\mathcal{I}_\eqsat$.
  $J$ is also a core, since it encodes an E-graph.
  Because $I$ and $J$ are both universal models,  they are homomorphically equivalent.
  Homomorphically equivalent cores are isomorphic, $I$ and $J$ has to be isomorphic.
  Since $I$ is finite, $J$ is also finite.
  By \autoref{lemma:convergence},
  equality saturation also terminates ((2)$\implies$ (1)).

  The fact that (3)$\implies$ (1) is obvious, 
  since $\mathcal{I}_\eqsat$ is EGD-fair.
  To show (1)$\implies$ (3),
  let $\mathcal{J}:J_0, J_1\ldots$ be an arbitrary 
  EGD-fair chase sequence with result $J_\infty$,
  and consider the EqSat-encoding chase sequence $\mathcal{I}_\eqsat:I_0,\ldots, I_n$.
  We claim that for all $i$, there exists a finite $j$ such that
  there is a homomorphism from $I_i$ to $J_j$ and $J_j$ is closed under EGDs.
  This can be proved by induction on $i$.
  The case for $i=0$ holds trivially.
  Suppose $I_i\sqsubseteq J_j$ for some $j$. The fact that
  some $j'$ exists such that there is a homomorphism from $I_{i+1}$ to $J_{j'}$
  follows from fairness of chase sequences.
  With EGD-fairness, it is then possible to find another $j''>j'$
  such that $J_{j''}$ is closed under EGDs and $J_j\sqsubseteq J_{j'} \sqsubseteq J_{j''}$.

  As a result, some $m$ exists such that 
  there is a homomorphism from $I_n$ to $J_m$.
  There is also a homomorphism from $J_m$ to $J_\infty$ 
  from the chase sequence and a homomorphism from $J_\infty$ to $I_n$ 
  since $J_\infty$ is a universal model.
  By our lemma $I_n$, $J_m$, $J_\infty$ are all core models.
  Since they are homomorphically equivalent, they have to be isomorphic.
  so $J_m$ and $J_\infty$ coincide.
  Therefore, The chase sequence $\mathcal{J}:J_0, J_1,\ldots,J_m$ stops in $m$ steps
  ((1)$\implies$ (3)).

% https://q.uiver.app/#q=WzAsOCxbMCwwLCJJXzAiXSxbMCwxLCJKXzAiXSxbMSwwLCJcXGxkb3RzIl0sWzEsMSwiXFxsZG90cyJdLFsyLDAsIklfbiJdLFsyLDEsIkpfbSJdLFszLDEsIlxcbGRvdHMiXSxbNCwxLCJKX1xcaW5mdHkiXSxbMCwyXSxbMiw0XSxbMSwzXSxbMyw1XSxbNCw1XSxbNSw2XSxbNiw3XSxbNyw0XSxbMCwxLCIiLDAseyJsZXZlbCI6Miwic3R5bGUiOnsiaGVhZCI6eyJuYW1lIjoibm9uZSJ9fX1dXQ==
\[\begin{tikzcd}
	{I_0} & \ldots & {I_n} \\
	{J_0} & \ldots & {J_m} & \ldots & {J_\infty}
	\arrow[from=1-1, to=1-2]
	\arrow[from=1-2, to=1-3]
	\arrow[from=2-1, to=2-2]
	\arrow[from=2-2, to=2-3]
	\arrow[from=1-3, to=2-3]
	\arrow[from=2-3, to=2-4]
	\arrow[from=2-4, to=2-5]
	\arrow[from=2-5, to=1-3]
	\arrow[Rightarrow, no head, from=1-1, to=2-1]
\end{tikzcd}\]

It is left to show that if EqSat terminates, any arbitrary chase sequence,
terminating or not, has result
isomorphic to the output of equality saturation.
This can be shown similarly by noticing that $J_\infty$ 
 is a core model and homomorphically equivalent to the result of EqSat,
 so they have to be isomorphic.
% \rw{Do we know if the result of an infinite chase is universal?
% I noticed the last sentence of Section~\ref{sec:chase} was changed 
% to only talk about terminating chase.}
% Let $\mathcal{J}:J_0, J_1,\ldots,$ be an arbitrary chase sequence with result $J_\infty$.
% Since the result database $J_\infty$ is EGD-closed, it can be mapped back to an E-graph,
% denoted as $G_{\mathcal{J}}$.
% By universality, there is a homomorphism from $G_{\mathcal{J}}$ to $\eqsat(\trs, G)$
% and back.
% By \autoref{lem:egraph-partial-order}, $G_{\mathcal{J}}$ and $\eqsat(\trs, G)$ are isomorphic.

\end{proof}

\section{Proof for the termination theorems}

% \yz{but also, we have examples where convergent TRS does not terminate in EqSat.}
Before we proceed to the proofs, let us complement the backgrounds on term rewriting system.
Given a term rewriting system $\trs$, a normal form is a term that cannot be rewritten any further.
We say $n$ is a normal form of $t$ if $t$ reduces to $n$ and $n$ is a normal form.
A TRS $\trs$ is \emph{terminating} if there is no infinite rewriting chain $t_1\rightarrow_{\trs} t_2\rightarrow \ldots$.
A TRS $\trs$ is \emph{confluent} if for all $t, t_1, t_2$, 
$t_1\leftarrow_{\trs}^*t\rightarrow_{\trs}^*t_2$ implies there exists a $t'$ such that $t_1\rightarrow_{\trs}^* t'\leftarrow_{\trs}^* t_2$. 
We call a confluent and terminating TRS \emph{convergent}.
Every term in a terminating TRS has at least one normal form,
 every term in a confluent TRS has at most one normal form,
 and every term in a convergent TRS has exactly one normal form.

\subsection{Proof for \autoref{thm:termination} and  \autoref{thm:all-term-instance}}
\label{apdx:termination}

\terminationthm*

This problem is in R.E.~since we can simply run
EqSat with \(\trs\) and \(t\) to test whether it terminates. To show this
problem is R.E.-hard, we reduce the halting problem of Turing
machines to the termination of EqSat. The proof follows that of
Narendran et al.~\cite{narendran1985complexity}.

In this proof, we consider a degenerate form of EqSat that works with
\emph{string} rewriting systems instead of TRS. A string can be viewed as a degenerate term, and a string rewriting rule
can be viewed as
a degenerate term rewriting rule. For example, the string \(uvw\) corresponds to a term
\(u(v(w(\epsilon)))\), where \(u, v, w\) are unary
functions and \(\epsilon\) is a special constant marking the end of a string. 
A string rewriting rule
\(uvw\rightarrow vuw\) corresponds to a (linear) term rewriting rule
\(u(v(w(x)))\rightarrow v(u(w(x)))\) where \(x\) is a variable.

For each Turing machine
\(\mathcal{M}\), we produce a string rewriting system \(\trs\) such that
the congruence closure of \(R\),
\((\approx_\trs)\), satisfies that each
congruent class of \(\approx_\trs\) corresponds to a trace of the Turing
machine. As a result, informally, the following statements are equivalent:

\begin{enumerate}
\item
  the Turing machine halts;
\item
  the trace of the Turing machine is finite;
\item
  the congruent class in \(\approx_\trs\) is finite;
\item
  EqSat terminates.
\end{enumerate}

A first simple idea is to encode the transition relation of a Turing machine
 directly as a string rewriting system,
 so that the congruence class of the initial configuration $w_0$ contains the trace.
For example, consider transition \(q_iabRq_j\), 
 which says if the current state if $q_i$ and the symbol being scanned is $a$, then write $b$ to the tape, move right, and change the state to $q_j$.
It is tempting to encode this transition as a string rewriting rule
 $q_ia\rightarrow_\trs b q_j$.
The issue, however, is that two different initial configurations
 can lead to the same configuration.
For example, consider a Turing machine that clears its input and then halts.
Every input string leads to the same configuration, 
 so its termination on an input does not imply the finiteness of its congruent class.

To address this issue, following Narendran et al.~\cite{narendran1985complexity}, 
 we require the string rewrite rule not only encode the transition relation,
 but also stores the history of the computation.
As a result, even if two different initial configurations lead to the same configuration,
 the rewritten strings that correspond to the same configuration are different,
 so different initial configurations lead to different congruent classes.
More specifically,
 we introduce dummy symbols that stores states and symbols before transitions.

\mysubparagraph{Turing machine}

A Turing machine $\mathcal{M}=(Q,\Sigma, \Pi,\Delta,q_0,\beta)$ consists of a set of states $Q$, 
 the input and the tape alphabet $\Sigma$ and $\Pi$ (with $\Sigma\subseteq \Pi$), 
 a set of transitions $\Delta$, an initial state $q_0\in Q$,
 and a special blank symbol $\beta\in\Pi$. Each transition in $\Delta$ is a quintuple in 
 $Q\times \Pi\times \Pi\times \{L,R\} \times Q$.
% TODO: the following sentence can be rephrased to say about string rewriting
For example, transition $q_iabRq_j$ means if the current state is $q_i$ and the symbol
 being scanned is $a$, then replace $a$ with $b$, move the head to the right, 
 and transit to state $q_j$.
We assume the Turing machine is two-way infinite
 (so that the head can move in both directions indefinitely)
 and deterministic.
Each configuration of $\mathcal{M}$ can be represented as $\rhd uq_i v \lhd$,
 where $\rhd$,$\lhd$ are left and right end  markers, 
 $u$ is the string to the left of the read/write head
 $q_i$ is the current state,
 $v$ is the string to the right.
If $v$ is non-empty, the first character of $v$ is the symbol being scanned.
Otherwise, the symbol being scanned is the blank symbol $\beta$.
We say $w_1\vdash_{\mathcal{M}} w_2$ if configuration $w_1$ can transition 
 to configuration $w_2$ in a Turing machine $\mathcal{M}$, and we omit $\mathcal{M}$ when it's clear from the context.

We say a configuration $w$ is {\it halting} configuration
  if it cannot be transited further.
We say a configuration $w$  is {\it mortal} if there exists a finite sequence of configurations
  $w=w_0\vdash w_1\vdash\ldots\vdash w_n$ such that $w_n$ is a halting configuration.

\mysubparagraph{Alphabet of the constructed string rewriting system}

Compared to a direct encoding of the transition relation,
 the string rewriting system $\trs$ we construct has the following characteristics:

\begin{itemize}
  \item It distinguishes between symbols to the left and to the right of the current state:
  Define $\overline \Pi=\setof{\overline a}{a\in \Pi}$ as the alphabet used exclusively to the left of the state (and $\Pi$ to the right).
  \item We introduce new dummy symbol to store information about the states and symbols.
  Define $D_L=\setof{L_z}{z\in\overline \Pi\cup \{\lhd\}\text{ or } z\in \{\rhd\}\cup \overline\Pi}$ and $D_R$ similarly.
  Similar to $\overline \Pi$ and $\Pi$, $D_L$ (resp.~$D_R$) is 
  used exclusively for dummy symbols to the left (resp.~right) of the state symbol.
  \item Similar to $q_i\in Q$, of which the symbol in $\Pi$ to the immediate right
  is being scanned, we also define the ``left'' counterpart $\overline Q_i = \setof{\overline{q_i}}{q_i\in Q_i}$,
  where the symbol being scanned is to the immediate left of $\overline{q_i}$.
  For instance, in the string representation $\rhd u\overline{a q_i}v\lhd$, the current state is $q_i$ 
  and the symbol being scanned $a$.
\end{itemize}

% We define several sets of symbols for our construction. For
% each Turing machine \(\mathcal{M}\), we define
% \(\overline Q=\{\overline q\mid q\in Q\}\). We also define
% \(\overline \Sigma\), \(\overline \Pi\) in a similar way. In our
% encoding, we use \(\overline Q\) to denote states where the symbol being
% scanned is to the left of the state, and we use \(\overline \Sigma\) and
% \(\overline \Pi\) to denote alphabets that are to the left of the
% states. Moreover, we introduce two sets of ``dummy'' symbols \(L_z\) and
% \(R_z\) for \(z\) ranges over \(Q\times (\{\lhd\}\cup \Pi)\) and
% \((\{\rhd\}\cup \overline\Pi)\times \overline Q\). Let \(D_L\) and
% \(D_\trs\) be the set of all \(L_z\) and \(R_z\) respectively. We use these
% dummy symbols to make the string rewriting system that we will later
% define Church-Rosser.

In summary, the rewriting system we construct works over the following 
 regular string language
\[
  \configlang{}=
  \rhd (\overline\Pi\cup D_L)^*
  (Q\cup \overline Q)
  (\Pi\cup D_\trs)^*\lhd
\]
Strings in \configlang{} are in a many-to-one mapping to configurations
of a Turing machine. We denote this mapping as \(\pi\): \(\pi(w)\)
converts each \(\overline a\overline{q_i}\) to \(q_ia\) 
(and $\rhd\overline{q_i}$ to $\rhd q_i\beta$),
removes dummy
symbols \(L_z\) and \(R_z\), and replace \(\overline a\) with \(a\). For
example,
\[\pi(\rhd L_{q_0,a}\overline{b} L_{q_1,b} \overline{cq_3}dR_{q_i,\lhd}\lhd)=\rhd bq_3cd\lhd\]

Now, for each transitions in \(\mathcal{M}\), our string
rewriting system \(R\) is defined in \autoref{fig:tm-rewrite-rule}.
\begin{figure}
  \centering
\[
\begin{array}{|c|c|}
    \hline
\text{transitions in \(\mathcal{M}\)}
&
\text{rewrites in \(R\)}
\\
\hline
q_iabRq_j & q_ia\rightarrow_\trs L_{q_i,a}\overline bq_j \\
&
\overline a\overline{q_i}\rightarrow_\trs L_{\overline a,\overline{q_i}}\overline bq_j \\
\hline
q_i\beta bRq_j &
q_i\lhd\rightarrow_\trs L_{q_i,\lhd}\overline b q_j\lhd \\
&
\rhd \overline{q_i}\rightarrow_\trs \rhd L_{\rhd,\overline{q_i}}\overline bq_j \\
\hline
q_iabLq_j & q_ia\rightarrow_\trs \overline{q_j} b R_{q_i,a} \\
&
\overline a\overline{q_i}\rightarrow_\trs \overline{q_j} b R_{\overline a,\overline{q_i}} \\
\hline
q_i\beta bLq_j &
q_i\lhd\rightarrow_\trs \overline{q_j}bR_{q_i,\lhd}\lhd \\
&
\rhd \overline{q_i}\rightarrow_\trs \rhd\overline{q_j} b R_{\rhd,\overline{q_i}} \\
\hline
\end{array}
\quad
\cup 
\quad 
\begin{array}{|c|c|}
  \hline
  \text{for each $z$}
  &
  q_iR_z\rightarrow_\trs L_zL_zq_i\\
  & L_z\overline{q_i}\rightarrow_\trs \overline{q_i}R_zR_z \\
  \hline
\end{array}
\]
\caption{The string rewriting system \(R\) derived from a Turing machine \(\mathcal{M}\).}
\label{fig:tm-rewrite-rule}
\end{figure}
It consists of two parts.
The first part encodes each transition rule of \(\mathcal{M}\)
 as two rewrite rules, one for the case when the state symbol is $q_i\in Q$,
 and one for the case when the state symbol is $\overline{q_i}\in \overline{Q}$.
It also introduces dummy symbols to store the the source configuration of each transition.
Moreover, for each \(z\), we have the two additional sets of
rewrite rules 
\begin{align*}
q_iR_z&\rightarrow_\trs L_zL_zq_i\\
L_z\overline{q_i}&\rightarrow_\trs \overline{q_i}R_zR_z.
\end{align*}
They are used to ensure that the state symbol is adjacent to the symbol being scanned 
 and shuffling dummy symbols around.

To explain what these dummy symbol--shuffling rules do more precisely, let us define two types of strings of \configlang.
\begin{dfn}
  Type-A strings are strings in \configlang{}
  where the symbol being scanned is to the
  immediate right of \(q_i\) or to the immediate left of
  \(\overline {q_i}\). 
  In other words, we call a string \(s\) a type-A
  string if \(s\) contains \(q_ia\) or \(\overline{aq_i}\). 
  Type-B strings
  are strings in \configlang{} that are not type-A. 
\end{dfn}
The rewrite
rules above convert any type-B strings into type-A in a finite number of
steps.

Now, we observe that the string rewriting system \(R\) we constructed above has several properties:

\begin{enumerate}
\item
  Reverse convergence: the critical pair lemma states that if a
  rewriting system is terminating and all its critical pairs are
  convergent, it is convergent. Define \(R^{-1}\) to be a string rewriting system derived
  from \(R\) by swapping left- and right-hand side of each rewrite rule. 
  \(R^{-1}\) is
  terminating since rewrite rules in \(R^{-1}\) decreases the sizes of
  terms, and \(R^{-1}\) has no critical pairs. Therefore, \(R^{-1}\) is
  convergent.
\item
  For each type-A string \(w\), then either

  \begin{itemize}
  \item
    there exists no \(w'\) with \(w\rightarrow_\trs w'\) and \(\pi(w)\) is
    a halting configuration;
  \item
    there exists a unique \(w'\) such that \(w\rightarrow_\trs w'\).
    Moreover, it holds that
    $w'\in\configlang$ and \(\pi(w)\vdash \pi(w')\).
  \end{itemize}
\item
  For each type-B string \(w\), there exists a unique \(w'\) such that
  \(w\rightarrow_\trs w'\). It holds that $w'$ is in \configlang{} and \(\pi(w)=\pi(w')\).
 Moreover, if
  \(w_0\rightarrow_\trs w_1\rightarrow_\trs\ldots\) is a sequence of type-B
  strings, the sequence must be bounded in length, since the state
  symbols \(q_i\) and \(\overline{q_i}\) move towards one end according
  to the auxillary rules above.
\item
  From 2 and 3, it follows that $\rightarrow_\trs$ closed under 
  $\configlang{}$ (i.e., $w\in\configlang$ and $w\rightarrow_\trs w'$ implies $w'\in\configlang$)
  and is deterministic over \configlang{} (i.e.,
  \(w\rightarrow_\trs w_1\) and \(w\rightarrow_\trs w_2\) implies
  \(w_1=w_2\)).
\end{enumerate}

These observations allow us to prove the following lemma

\begin{lem}\label{lem:tm-cong-finite}
Given a Turing machine
\(\mathcal{M}\), construct a string rewriting system \(R\) as above.
Let $w_0$ be a string in \configlang{}.
\(\pi(w_0)\) is a mortal configuration of \(\mathcal{M}\) if and only if \([w_0]_\trs\), the
equivalence class of \(w_0\) in \(R\), is finite.
\end{lem}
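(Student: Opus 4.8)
The plan is to reduce the two-way equivalence class $[w_0]_\trs$ to a one-way reachable set, and then read off mortality from how the decoding map $\pi$ behaves along a single $\rightarrow_\trs$-chain. First, since each rule of $\trs^{-1}$ is the reverse of a rule of $\trs$, we have $\leftrightarrow_\trs\,=\,\leftrightarrow_{\trs^{-1}}$, so $\approx_\trs\,=\,\approx_{\trs^{-1}}$. By observation~(1), $\trs^{-1}$ is convergent, hence Church--Rosser, so $w\approx_{\trs^{-1}}w_0$ iff $w$ and $w_0$ have a common $\trs^{-1}$-reduct; as $\trs^{-1}$ is terminating this reduct can be taken to be the unique $\trs^{-1}$-normal form $e$ of $w_0$, giving $w\approx_\trs w_0 \iff w\rightarrow^*_{\trs^{-1}}e \iff e\rightarrow^*_\trs w$, i.e. $[w_0]_\trs = R^*(e) = \setof{w}{e\rightarrow^*_\trs w}$. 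I would then verify $e\in\configlang$: since $w_0\in\configlang$ and $e\rightarrow^*_\trs w_0$, it suffices to check that a single backward step out of a \configlang{} string stays in \configlang{}, which follows by inspecting the rules in \autoref{fig:tm-rewrite-rule} — each left- and right-hand side contains exactly one state symbol, flanked on the left by symbols of $\overline\Pi\cup D_L$ and on the right by symbols of $\Pi\cup D_R$, so reversing a rule again produces a single, correctly flanked state symbol. (This is the two-sided companion of the forward \configlang{}-closure in observation~(4).)

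Next I would relate $e$ to $w_0$. Because $e\rightarrow^*_\trs w_0$ with both strings in \configlang{}, observations~(2) and~(3) show $\pi$ is weakly monotone along $\rightarrow_\trs$: a type-A step performs exactly one $\vdash$-step of $\mathcal{M}$ and a type-B step leaves $\pi$ unchanged. Hence $\pi(e)\vdash^*\pi(w_0)$, and since $\mathcal{M}$ is deterministic $\pi(w_0)$ lies on the unique forward trace of $\pi(e)$; therefore $\pi(e)$ is mortal iff $\pi(w_0)$ is mortal. So it remains to prove, for $x\in\configlang$, that $\pi(x)$ is mortal iff $R^*(x)$ is finite, and then apply this to $x=e$.

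For that last step, observation~(4) makes $\rightarrow_\trs$ a partial deterministic function on \configlang{}, so $R^*(x)$ is a single chain $x=x_0\rightarrow_\trs x_1\rightarrow_\trs\cdots$. If $\pi(x)$ is mortal, its trace has finite length $n$ and ends at a halting configuration; the chain has at most $n$ type-A steps (observation~(2)), every maximal run of consecutive type-B strings is finite (observation~(3)), and the chain can stop only at a type-A string whose $\pi$-image is halting (observations~(2),~(3)); tracking $\pi$ along the chain, it must reach such a string in finitely many steps, so $R^*(x)$ is finite. Conversely, if $\pi(x)$ is immortal, the chain never stops — a type-B string always has a successor, and a type-A string lacks one only when its $\pi$-image is halting, which never occurs along an infinite trace — so the chain is infinite, and its entries are pairwise distinct, since a repetition would yield a $\rightarrow_\trs$-cycle and hence an infinite $\rightarrow_{\trs^{-1}}$-chain, contradicting termination of $\trs^{-1}$. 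Thus $R^*(x)$ is infinite, and altogether $[w_0]_\trs=R^*(e)$ is finite iff $\pi(e)$ is mortal iff $\pi(w_0)$ is mortal.

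The step I expect to be the main obstacle is the first one: exploiting convergence of $\trs^{-1}$ to trade the symmetric equivalence class for the forward orbit $R^*(e)$, and in particular confirming that the ``source'' string $e$ is again a genuine configuration string, so that observations~(2)--(4) — stated only for \configlang{} — can legitimately be applied at $e$. The remaining arguments are careful bookkeeping with the type-A / type-B dichotomy and the fact that $\pi$ only advances along $\rightarrow_\trs$.
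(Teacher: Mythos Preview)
Your proposal is correct and follows essentially the same approach as the paper: reduce the equivalence class to the forward orbit from the $\trs^{-1}$-normal form via convergence of $\trs^{-1}$, then use the type-A/type-B dichotomy to read off the Turing-machine trace. Your version is in fact slightly tidier than the paper's in two respects: you explicitly argue that the normal form $e$ stays in \configlang{} (the paper glosses over this when passing to $w_0'$), and you use determinism of $\rightarrow_\trs$ on \configlang{} directly to view $R^*(e)$ as a single chain, whereas the paper somewhat redundantly invokes K\"onig's lemma; for acyclicity the paper uses that each rule strictly increases length, while you equivalently use termination of $\trs^{-1}$.
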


\begin{proof}
  Without loss of generality,
  we assume $w_0$ is a normal form with respect to $R^{-1}$.
  If this is not the case, since $R^{-1}$ is convergent, $w_0$ has a unique normal form $w_0'$.
  Moreover, $[w_0]=[w'_0]$ and $\pi(w'_0)\vdash_{\mathcal{M}}^*\pi(w_0')$, 
  so it suffices to consider $w_0'$.

  \begin{itemize}
  \item
    \(\Leftarrow\): Suppose \([w_0]_\trs\) is finite.
    Since the size of each rewrite rule is strictly increasing,
    no cycle in the rewriting sequence is possible.
    Therefore, there must exist a finite sequence of
    \(W:w_0\rightarrow_\trs w_1\rightarrow_\trs \ldots \rightarrow_\trs w_n\) such
    that $w_n$ is a normal form of $\trs$.
  
    By our observation above, since $w_n$ cannot be rewritten further,
    \(w_n\) is a type-A string, and \(\pi(w_n)\) is a halting configuration.
  
    Take the subsequence of \(S\) consisting of all type-A strings:
    \[w_{a_0}\rightarrow_\trs^* w_{a_1}\rightarrow_\trs^*\ldots \rightarrow_\trs^*w_{a_k}=w_n.\]
    We have \(\pi(w_{a_i})\vdash\pi(w_{a_{i+1}})\) for all \(i\) and
    \(\pi(w_{a_k})\) is a halting configuration. 
    This implies a finite trace of the Turing machine:
    \[\pi(w_0)=\pi(w_{a_0})\vdash \pi(w_{a_1})\vdash\ldots\vdash \pi(w_{a_n}),\] 
    which implies  \(w_0\) is a mortal configuration of $\mathcal{M}$.
  \item
    \(\Rightarrow\): Suppose otherwise \(\pi(w_0)\) is a mortal configuration
    of \(\mathcal{M}\)
    and \([w_0]_\trs\) is infinite.
  
    Since \(w_0\) is a normal form with respect to
    \(\trs^{-1}\) and \(\trs^{-1}\) is convergent, 
    for any $w$ with \(w\approx_{\trs}w_0\),
    \(w\rightarrow_{\trs^{-1}}^* w_0\),
    or equivalently \(w_0\rightarrow_\trs^* w\).
    Since \([w_0]_\trs\) is infinite,
    there are infinitely many $w$ satisfying
    \(w_0\rightarrow_\trs^* w\).
    By K\"onig's lemma,
    there exists an infinite rewriting sequence:
    \(W:w_{a_0}\rightarrow_\trs w_1\rightarrow_\trs \ldots\). 
    Again, take the subsequence
    of \(S\) consisting of every type-A string:
    \[W':w_{a_0}\rightarrow_\trs^* w_{a_1}\rightarrow_\trs^* \ldots.\] 
    Since every type-B subsequence of $W$ is bounded in length,
    $W'$ is necessarily infinite.
    This implies an
    infinite trace of the Turing machine:
    \[\pi(w_0)=\pi(w_{a_0})\vdash \pi(w_{a_1})\vdash\ldots,\] which is a contradiction.
  \end{itemize}
  \end{proof}

An overview of the rewriting sequences starting at $w_0$ of $\trs$
is shown in \autoref{fig:tm-trace}.

\begin{figure}
    \resizebox{\textwidth}{!}{
\begin{tabular}{cccccccccc}
    \hline
Rw &
\(\underbrace{w_0\rightarrow_\trs\ldots \rightarrow_\trs w_{a_0-1}}_{\text{finite}}\)
& \(\rightarrow_\trs\) & \({w_{a_0}}\) &
\(\underbrace{w_{a_1+1}\rightarrow_\trs\ldots \rightarrow_\trs w_{a_1-1}}_{\text{finite}}\)
& \(\rightarrow_\trs\) & \({w_{a_1}}\) & \(\ldots\) \\

Type & B \(\ldots\) B & & A & B \(\ldots\) B & & A & \\
Config & \(\pi(w_0)=\ldots =\pi(w_{a_0-1})\) &
\(\vdash_{\mathcal{M}}\) & \({\pi(w_{a_0})}\) &
\(\pi(w_{a_0+1})=\ldots =\pi(w_{a_1-1})\) & \(\vdash_{\mathcal{M}}\) &
\({\pi(w_{a_1})}\) & \(\ldots\) \\
\hline
\end{tabular}
    }
    \caption{Rewriting sequence for a string \(w_0\) over $\trs$.}
    \label{fig:tm-trace}
\end{figure}

We are ready to prove the undecidability of the termination problem of
EqSat.

\begin{proof}[Proof of \autoref{thm:termination}]

Given a Turing machine \(\mathcal{M}\). We construct the following
two-tape Turing machine \(\mathcal{M}'\):

\begin{dfn}
\(\mathcal{M}'\) alternates between the following two steps:

\begin{enumerate}
\def\labelenumi{\arabic{enumi}.}
\item
  Simulate one transition of \(\mathcal{M}\) on its first tape.
\item
  Read the string on its second tape as a number, compute the next prime
  number, and write it to the second tape.
\end{enumerate}

\(\mathcal{M}'\) halts when the simulation of \(\mathcal{M}\) reaches an
accepting state.
\end{dfn}

It is known that a two-tape Turing machine can be simulated using a
standard Turing machine, so we assume \(\mathcal{M}'\) is a standard
Turing machine and takes input string \((s_1,s_2)\), where \(s_1\) is
the input to its first tape and \(s_2\) is the input to its second tape.
Let \(\trs'\) be the string rewriting system derived from \(\mathcal{M}'\)
using the encoding we introduced in the lemma.

Given a string \(s\), let \(w\) be the initial configuration
\(\rhd q_0(s, 2)\lhd\). The following conditions are equivalent:

\begin{enumerate}
\def\labelenumi{\arabic{enumi}.}
\item
  \(\mathcal{M}\) halts on input \(s\).
\item
  \(\mathcal{M}'\) halts on input \((s, 2)\).
\item
  \([w]_{\trs'}\) is finite.
\item
  \([w]_{\trs'}\) is regular.
\end{enumerate}

(1) and (2) are equivalent by our construction, and
 (2) and (3) are equivalent by \autoref{lem:tm-cong-finite}.
(3) implies (4) trivially, and (4) implies (3) because if
\([w]_{\trs'}\) is infinite, the trace of
\(\mathcal{M}'\) will compute every prime number, which is not regular.

Run EqSat with initial string \(w\) and string rewriting system
\(\trs'\cup \trs'^{-1}\). 
We claim EqSat terminates if and only if
\(\mathcal{M}\) halts on \(s\):

\begin{itemize}
\item
  \(\Rightarrow\): Suppose EqSat terminates with output E-graph \(G\).
  By \autoref{cor:var-preserving}, $[w]_G=[w]_{\trs'}$.
  Since every finite E-graph represents a regular language, \([w]_{\trs'}\) is regular.
  Therefore, \(\mathcal{M}\) halts on \(s\).
\item
  \(\Leftarrow\): Suppose \(\mathcal{M}\) halts on \(s\).
  Let \(G\) be the E-graph output by EqSat.
  By \autoref{cor:var-preserving}, $[w]_G=[w]_{\trs'}$.
  By the equivalences above, \([w]_{\trs'}\) is finite.
  Because the set of represented terms increases in every iteration of EqSat,
  EqSat has to stop in a finite number of iterations.
\end{itemize}

By the undecidability of the halting problem, the
termination problem of EqSat is undecidable.
Therefore, the termination problem of EqSat is R.E.-complete.
\end{proof}

% \subsection{Proof for \autoref{thm:all-term-instance}}
% \label{apdx:all-instance}

\allinstancethm*

It does not suffice to just use \autoref{lem:tm-cong-finite} to prove this theorem,
 since \autoref{lem:tm-cong-finite} only states properties of strings in $\configlang{}$,
 while all-instance termination considers all possible strings (i.e. strings in $\Sigma^*$).
To address this mismatch, 
 the following lemma by Narendran et al. bridges the gap between $\Sigma^*$ and$\configlang{}$.

\begin{lem}[\cite{narendran1985complexity}]\label{lem:term-configlang}
  Given a Turing machine $\mathcal{M}$,
  let $\trs$ be the term rewriting system 
  constructed using the encoding in the proof of \autoref{thm:termination}.
  If there exists a string $w\in \Sigma^*$ such that $[w]_{\trs}$ is infinite,
  then there exists a string $s\in \configlang$ such that $[s]_{\trs}$ is infinite.
\end{lem}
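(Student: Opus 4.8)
The plan is to reduce the statement to one about infinite \emph{forward} $\trs$-rewriting sequences, and then to show that any such sequence can be ``localized'' to a single configuration-shaped factor which, once bracketed by $\rhd$ and $\lhd$, lies in $\configlang$ and still admits an infinite rewriting sequence.

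\textbf{Reduction to forward sequences.} Recall that $\trs^{-1}$ is convergent (observation~1 in the proof of \autoref{thm:termination}), so every string has a unique $\trs^{-1}$-normal form; write $\hat w$ for that of $w$. Any $v$ with $v \approx_\trs w$ reduces under $\trs^{-1}$ to the same normal form $\hat w$, i.e.\ $\hat w \rightarrow_\trs^* v$, so $[w]_\trs = \setof{v}{\hat w \rightarrow_\trs^* v}$. Moreover every rule in \autoref{fig:tm-rewrite-rule} strictly increases the length of a string, hence no string repeats along a $\trs$-rewriting sequence and $\rightarrow_\trs$ is finitely branching; so the rewrite graph rooted at $\hat w$ is a finitely-branching DAG. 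Therefore, if $[w]_\trs$ is infinite, König's lemma yields an infinite sequence $\hat w = u_0 \rightarrow_\trs u_1 \rightarrow_\trs \cdots$. Conversely, if any string $s$ admits an infinite $\trs$-rewriting sequence then $[s]_\trs$ is infinite; so it suffices to exhibit $s \in \configlang$ with an infinite $\trs$-rewriting sequence.

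\textbf{Localization.} Every left-hand side in \autoref{fig:tm-rewrite-rule} has length two and contains exactly one state symbol (from $Q \cup \overline Q$), and none contains two; hence along any rewriting sequence the number of state symbols and their left-to-right order are invariant, and every step has a well-defined \emph{owner}, the unique state symbol occurring in its redex. Since $\hat w$ is not a $\trs$-normal form it has $k \geq 1$ state symbols, and since the infinite sequence has infinitely many steps, by pigeonhole some fixed state symbol $s_j$ owns infinitely many of them. One then analyzes how a state symbol can fail to own the next step: it is blocked when the relevant neighbor is another state symbol, or a symbol of the ``wrong polarity'' (a symbol of $\Pi \cup D_R$ to its left, or of $\overline\Pi \cup D_L$ to its right). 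Because the string length must grow without bound, the infinitely-active $s_j$ must be the leftmost or rightmost state symbol and must repeatedly push against a $\rhd$- or $\lhd$-marker, and because an ill-formed neighbor would halt it after finitely many steps, the region it ever visits is, from some point on, of the configuration-body shape $(\overline\Pi \cup D_L)^*\,(Q \cup \overline Q)\,(\Pi \cup D_R)^*$. This is the technical heart of the argument and follows Narendran et al.~\cite{narendran1985complexity}.

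\textbf{Extraction and main obstacle.} Choosing $m$ large enough that by time $u_m$ the owner's visited region already has the well-formed shape above, let $s$ be that region with its two endpoints replaced by $\rhd$ and $\lhd$ as appropriate, so that $s \in \configlang$. The infinitely many $s_j$-owned steps after time $m$ act entirely inside this region and use only rules whose redex windows lie within it, and replacing the endpoints by $\rhd,\lhd$ can only enable additional rules (those mentioning $\rhd$ or $\lhd$), never disable one; hence the same steps, performed starting from $s$, give an infinite $\trs$-rewriting sequence from $s$, so $[s]_\trs$ is infinite. The main obstacle is the localization step: making rigorous that unbounded growth forces all but finitely much activity to take place at a single outer state symbol pushing against a boundary marker, that no other state symbol can interfere with it infinitely often, and that its eventually-visited territory is a legal configuration body — so that the bracketed extraction both lands in $\configlang$ and preserves the infinite behavior.
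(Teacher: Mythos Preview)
Your overall structure is right, and the reduction to forward sequences via reverse convergence plus K\"onig's lemma, together with the pigeonhole on state-symbol ownership, is exactly what the paper does. The gap is in your localization step. The claim that the infinitely-active state symbol $s_j$ ``must be the leftmost or rightmost state symbol and must repeatedly push against a $\rhd$- or $\lhd$-marker'' is false and unnecessary. If $\hat w$ contains no endmarkers then none ever appear (no rule introduces $\rhd$ or $\lhd$), so no state symbol can ever touch one; and a state symbol sandwiched between two others can rewrite forever inside its own well-formed neighborhood, since every rule strictly increases length and so that neighborhood grows without bound. The reasoning ``string length grows without bound, hence $s_j$ is at an extremity'' does not go through.

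The paper's argument sidesteps this entirely by working with \emph{segments}. For each of the $k$ state symbols, define its segment in $w_i$ as the longest factor containing it that lies in $\rhd^{\{0,1\}}(\overline\Pi\cup D_L)^*(Q\cup\overline Q)(\Pi\cup D_R)^*\lhd^{\{0,1\}}$. The key local observation, which replaces your blocked-neighbor analysis, is that every redex is a state symbol together with one adjacent symbol of the correct polarity, and every contractum preserves the polarity structure; hence a step owned by the $r$-th state symbol acts entirely inside the $r$-th segment, yielding $w_i^r \rightarrow_{\trs} w_{i+1}^r$, while all other segments are literally unchanged. Pigeonhole then gives an infinite rewrite sequence of $r$-th segments directly, with no positional hypothesis on $r$. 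Finally, since rules preserve the number of endmarkers, one may uniformly prepend $\rhd$ and append $\lhd$ where absent to land in \configlang{}. Your extraction paragraph is essentially this last step, and your remark that adding endmarkers can only enable, not disable, rules is correct; the problem is that you arrived there via an incorrect geometric claim rather than via the segment-closure property.
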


\begin{proof}
  Suppose such a string $w_0$ exists in $\Sigma^*$.
  Since $\trs$ is reverse convergent, 
  we can assume that $w_0$ is a normal form of $\trs^{-1}$ and 
  has infinitely many reachable terms $w'$ ($w_0\rightarrow_{\trs}w'$).
  By K\"onig's lemma, there exists an infinite rewrite sequence
  \(w_0\rightarrow_{\trs} w_1\rightarrow_{\trs} \ldots .\)

  Note that the left-hand side and the right-hand side of each rule of $\trs$ each contain
  exactly one state symbol, so all $w_i$ has the same number of state symbols, denoted as $k$.
  Moreover, the new state symbols in $w_{i+1}$ and 
  the old state symbol in $w_i$ have the same relative positions 
  to other state symbols.

  For each $r$ such that $1\leq r\leq k$, 
  we define the $r$-th segment of $w_i$, denoted as $w_i^r$,
  as the longest substring of $w_i$ that contains $r$th state symbol
  that is 
  in the regular language 
  $\rhd^{\{0,1\}} (\overline\Pi\cup D_L)^*
  (Q\cup \overline Q)
  (\Pi\cup D_\trs)^*\lhd^{\{0,1\}}$.
  The only difference between this language and $\configlang$ is that
  the left and right end marker is optional.

  It follows that if the reduction $w_i\rightarrow_{\trs}w_{i+1}$ involves the $r$th state symbol,
  then for $1\leq j\leq r$, if $j\neq r$, then $w_i^j=w_{i+1}^j$,
  and $w_i^r\rightarrow_{\trs}w_{i+1}^r$.

  Therefore, it is possible to find some index $r$ such that 
  there is an infinite rewrite sequence $w_{a_1}^r\rightarrow_{\trs} {w_{a_2}^r\rightarrow_{\trs}\ldots}$.
  Since rewrite rules of $\trs$ preserve the number of endmarkers,
  it is possible to uniformly add endmarkers to the left and right of each $w_{a_i}^r$.
  Denote the result as $w'_{a_i}$, which is in \configlang.
  $w'_{a_1}\rightarrow_{\trs} w'_{a_2}\rightarrow_{\trs}$
  is an infinite rewrite sequence in $\configlang$,
  and $[w'_{a_1}]_{\trs}$ is infinite.
\end{proof}

The other direction of \autoref{lem:term-configlang} is trivial.
With this lemma, we can prove \autoref{thm:all-term-instance}.

\begin{proof}[Proof of \autoref{thm:all-term-instance}]
  This problem is in $\Pi_2$ since we can formulate this problem as
  a $\forall\exists$-sentence in first-order logic.

  Given a Turing machine $\mathcal{M}$, construct another Turing machine $\mathcal{M}'$ 
  and term rewriting system $\trs'$ derived from $\mathcal{M'}$ same as in the proof of \autoref{thm:termination}.
  The $\Pi_2$-hardness follows from the following equivalences:
  \begin{enumerate}
    \item Every configuration of $\mathcal{M}$ is mortal.
    \item $[w]_{\trs'}$ is finite for all string $w\in \configlang$.
    \item $[w]_{\trs'}$ is finite for all string $w$.
    \item Equality saturation terminates on \(\trs'\cup \trs'^{-1}\) for all string \(w\).
  \end{enumerate}

  The equivalence between (1) and (2) follows from \autoref{lem:tm-cong-finite}.
  The equivalence between (2) and (3) follows from \autoref{lem:term-configlang}.
  For any string $w$, the equivalence between (3) and (4) 
  can be proved similarly to the proof of \autoref{thm:termination}.
  Since the universal halting problem is $\Pi_2$-hard,
  the all-instance termination problem of EqSat is $\Pi_2$-hard.
\end{proof}

\subsection{Proof for \autoref{thm:all-egraph-instance}}
\label{apdx:all-egraph-instance}

\allegraphinstancethm*

\newcommand{\goalsym}{\textit{goal}}

\begin{proof}
    Our proof is inspired by Gilleron~and~Tison~\cite{gilleron1995regular}.
    We reduce the Post correspondence problem (PCP) to this problem.
    Let $A$ be an alphabet. The input to the problem is two finite lists of $A$-words
    $\alpha_1,\ldots, \alpha_n$ and $\beta_1,\ldots, \beta_n$.
    PCP asks if there exists a non-empty sequence of indices $i_1,\ldots, i_k$ 
    such that 
    $\alpha_{i_1}\ldots\alpha_{i_k}=\beta_{i_1}\ldots\beta_{i_k}$.

    We use unary functions and a special nullary functions $\epsilon$
    to represent strings as terms. A word is represented as consecutive applications
    of unary symbols.
    The ruleset we construct simulates runs of PCP. Let $\trs_{\textit{PCP}}$ be 
    \begin{align*}
        k(x, y)\rightarrow k(i(x), \alpha_i(y)) \quad & \text{for } i=1,\ldots,n\\
        k(i(x), y) \rightarrow r(i(x), y) \quad & \text{for } i=1,\ldots, n\\
        r(i(x), \beta_i(z))\rightarrow r(x, z) \quad & \text{for } i=1,\ldots, n\\
        r(\epsilon, \epsilon)\rightarrow \goalsym\quad &
    \end{align*}

    Intuitively, 
    we think of the term rewriting starts with term $k(\epsilon, \epsilon)$.
    The first rule explores different choices of indices and stores the corresponding
    $\alpha$ sequence,
    and the third rule verifies if the given choice of indices is an acceptable solution.
    The second rule is to make sure the solution is non-empty.
    $\trs_{\textit{PCP}}$ involves a special nullary function $\goalsym$
    which, once populated in the E-graph,
    will be rewritten to every other term.
    Let $\trs_{\goalsym}$ be
    \begin{align*}
        \goalsym\rightarrow \epsilon&\\
        \goalsym\rightarrow i(\goalsym) &\quad \text{for } i=1,\ldots, n\\
        \goalsym\rightarrow s(\goalsym)&\quad \text{for } s\in A\\
        \goalsym\rightarrow k(\goalsym, \goalsym)&\\
        \goalsym\rightarrow r(\goalsym, \goalsym).&\\
    \end{align*}
    
    To handle inputs that do not represent $k(\epsilon, \epsilon)$,
    let $\trs_{\textit{st}}$ be
    \begin{align*}
        k(x, y)\rightarrow k(\epsilon)\\
        r(x, y)\rightarrow k(\epsilon).
    \end{align*}

    Let $\trs=\trs_{\textit{PCP}}\cup \trs_{\goalsym}\cup \trs_{\textit{st}}$.
    We claim that equality saturation terminates on $\trs$ and $G$ 
    for all E-graph $G$ 
    if and only if PCP has a solution.

    \begin{itemize}
        \item $\Rightarrow$: Let $G$ be an E-graph representing a single term 
        $k(\epsilon, \epsilon)$. It is easy to see that in this case 
        EqSat terminates if and only if $\goalsym$ is populated
        if and only if there is a solution to PCP.
        \item $\Leftarrow$: Suppose PCP has a solution. It follows that
        if an E-graph contains $k(\epsilon, \epsilon)$, $\textit{goal}$
        will be populated by \autoref{lemma:representation} and EqSat will terminate.
        Note that for EqSat to not terminate, the input E-graph 
        has to represent at least some $k$- or $r$- term.
        By $\trs_{\textit{st}}$, however, if an E-graph represents any $k$- or $r$- term it will 
        be rewritten to $k(\epsilon, \epsilon)$, so EqSat terminates for arbitrary E-graphs.
    \end{itemize}
\end{proof}

\section{Weak term acyclicity and EqSat termination}
\label{apdx:acyclicity}

\newcommand{\movef}{\text{Move}}
\newcommand{\posf}{\text{Pos}}

Given a signature $\Sigma$, we define a position as a pair $(f, i)$,
where $f$ is a $n$-ary function symbol in $\Sigma$ and $1\leq i\leq n$.
For $u,v\in T(\Sigma, V)$,
we define $\posf_{u}(v)$ as the set of positions $(f, i)$
satisfying $f(p_1,\ldots,p_{i-1}, v, p_{i+1},\ldots p_n)$ is a sub-pattern of $u$.
For instance, $\posf_{g(f(x), x)}(x)= \{(f, 1), (g, 2)\}$.
A rule is called \emph{non-degenerate} 
if its  left-hand side is not solely variables.
Any degenerate rule $x\rightarrow \rhs$ can be made into a set of non-degenerate rules
by substituting $x$ with $f(x_1, \ldots, x_n)$ for each function symbol $f\in \Sigma$.

\begin{dfn}[Weak term acyclicity]
   Let $\Sigma$ be a signature. 
   Let $\trs$ be a set of non-degenerate rewrite rules over $\Sigma$.
   The \emph{weak term dependency graph} of $\trs$
   consists of positions as nodes.
   Moreover,
   for each rewrite rule $\lhs\rightarrow \rhs$ in $\trs$,
   \begin{enumerate}
       \item for each variable $x\in\varset(\rhs)$,
       add an edge from $u$ to $v$ for every combination of $u\in \posf_{\lhs}(x)$ and $v\in \posf_{\rhs}(x)$.
       \item 
       for each proper, non-variable sub-pattern $p$ of $\rhs$,
       if $p$ does not occur in $\lhs$,
       for each variable $x\in\varset(p)$,
       add a special edge from $u$ to $v$ for 
       every combination of $u\in \posf_{\rhs}(x)$ and $v\in \posf_{\rhs}(p)$.
   \end{enumerate}
   $\trs$ is called weakly term acyclic if 
   no cycle of $\trs$'s weak term dependency graph contain a special edge.

\end{dfn}

The definition of weak term acyclicity
follows the structure of the definition of 
weak acyclicity.
For example, proper, non-leaf sub-patterns of $\rhs$
in our case act like existentially quantified variables.
However, some key differences allow weak term acyclicity to
capture termination of more rules 
than applying weak acyclicity on TGDs/EGDs directly derived from EqSat rules.
If a sub-pattern in $\rhs$ already occurs in $\lhs$,
it will not introduce new E-classes.
Moreover, in the chase, 
for a TGD $\lambda(\vec{x},\vec{y})\rightarrow \exists \vec{z}. \rho(\vec{x},\vec{z})$,
values assigned to
existential variables $\vec{z}$ depend on $\vec{x}$.
In EqSat,
because of functional dependencies,
the E-classes that a pattern $u=f(p_1,\ldots, p_n)$ can be instantiated to are 
fully determined by $\varset(u)$, a subset of all free variables of $\lhs$.
% This locality of dependencies allows weak term acyclicity 
%  to capture the termination of more EqSat rules

% \yz{Move the above para to appendix, 
% add a footnote saying the definition here is more subtle 
% and refer to the Appendix}
% \yz{Change Theorem 23 such that R only depends on Gamma and G only depends on I}

\begin{restatable}{thm}{acyclicitythm}
   \label{thm:acyclicity}
   If a term rewriting system $\trs$ is weakly term acyclic, then
   equality saturation (defined in (\ref{eq:eqsat:def})) converges in steps polynomial to the size of the input E-graph.
\end{restatable}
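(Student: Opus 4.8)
The plan is to adapt the polynomial termination bound for weakly acyclic sets of TGDs~\cite{FAGIN200589} to E-graphs, working directly with E-classes rather than routing through the dependencies of \autoref{sec:eqsat-to-chase} (which, by \autoref{ex:acyclicity-ex1}, need not be weakly acyclic). First a preprocessing step: we may assume $\trs$ is non-degenerate, since a degenerate rule $x\rightarrow\rhs$ can be replaced by $\{f(x_1,\ldots,x_{\arity(f)})\rightarrow\rhs'_f : f\in\Sigma\}$ where $\rhs'_f$ is $\rhs$ with $x$ substituted by $f(x_1,\ldots,x_{\arity(f)})$, and this has the same EqSat behaviour because every E-class is reachable, hence headed by some $f$-node, hence caught by exactly one new rule. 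Weak term acyclicity is defined on this non-degenerate form, and non-degeneracy also guarantees that every variable occurrence of a $\lhs$ sits as a child of some function symbol, which we use repeatedly below. It then suffices to show that the number of E-classes ever produced by $\eqsat(\trs,G)$ is polynomial in $|G|$.

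Let $P=\{(f,i): f\in\Sigma,\ 1\le i\le\arity(f)\}$ be the set of positions and let $\mathcal{D}$ be the weak term dependency graph of $\trs$ on $P$. Contracting the strongly connected components of $\mathcal{D}$ leaves, by weak term acyclicity, no special edge inside any component, so every path in $\mathcal{D}$ carries at most $|P|$ special edges; define $\rho(v)$ as the maximum number of special edges on any path of $\mathcal{D}$ ending at $v$, so $0\le\rho(v)\le |P|=:r_{\max}$. Along every edge of $\mathcal{D}$, $\rho$ is non-decreasing, and it strictly increases across every special edge.

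Next I assign to each E-class $c$ ever created during the run a rank $\mathrm{rk}(c)\in\{0,\ldots,r_{\max}\}$. E-classes of the initial $G$ get rank $0$. When a rule $\lhs\rightarrow\rhs$ fires under a substitution $\sigma$, the only genuinely new E-classes created by $\congr$ are those for proper non-variable sub-patterns $p$ of $\rhs$ that do not occur in $\lhs$ and whose instance $p[\sigma]$ is not yet represented; all other states of the flattening $\flatt(\rhs[\sigma]\rightarrow^* c)$ get merged (by \autoref{lem:rebuilding-exists}) into some $\sigma(x)$, into the matched E-class $c$, or into the E-class already representing a sub-pattern occurring in $\lhs$. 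Such a new E-class $[p[\sigma]]$ is given rank $1+\max\{\mathrm{rk}(\sigma(x)): x\in\varset(p)\}$, and whenever a later congruence closure merges two E-classes their ranks are reconciled to the minimum. The key invariant $(\star)$, maintained over the whole EqSat sequence, is: for every transition $f(c_1,\ldots,c_n)\rightarrow c$ of the current E-graph and every $i$, $\mathrm{rk}(c_i)\le\rho(f,i)$. It holds for $G$ (all ranks $0$), and is preserved by a firing because for $x\in\varset(p)\subseteq\varset(\lhs)$, the match places $\sigma(x)$ at some $(g,i)\in\posf_{\lhs}(x)$, so by $(\star)$ before the firing $\mathrm{rk}(\sigma(x))\le\rho(g,i)$; an ordinary edge $(g,i)\rightarrow(g',i')\in\posf_{\rhs}(x)$ and then a special edge $(g',i')\rightarrow(h,j)\in\posf_{\rhs}(p)$ give $\rho(h,j)\ge\rho(g,i)+1$, so $\mathrm{rk}([p[\sigma]])\le\rho(h,j)$ at every position where $p$ sits in $\rhs$; the variable occurrences of $\rhs$ are handled by the ordinary edges alone, and merges only lower ranks, so $(\star)$ survives reconciliation. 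In particular $\mathrm{rk}(c)\le r_{\max}$ for all $c$. Now count: let $n_{\le r}$ be the number of E-classes ever created with rank $\le r$. Every rank-$(r{+}1)$ E-class is named by a pair (rule $\lhs\rightarrow\rhs$, fresh sub-pattern $p$ of $\rhs$) together with a restricted substitution $\sigma|_{\varset(p)}$ taking values among rank-$\le r$ E-classes, so $n_{\le r+1}\le |G| + C\cdot n_{\le r}^{\,w}$, where $C$ is the number of (rule, fresh sub-pattern) pairs and $w=\max|\varset(p)|$. Unrolling over $r=0,\ldots,r_{\max}$ gives $n_{\le r_{\max}}=\mathrm{poly}(|G|)$, a polynomial whose degree depends only on $\trs$. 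Hence the total number of E-classes, and therefore of E-nodes ($\le\sum_{f}(\#\text{E-classes})^{\arity(f)+1}$), is polynomial in $|G|$; since every $\ico_{\trs}$-iteration that is not a fixpoint strictly increases the polynomially bounded potential ``number of E-classes created so far $+$ merges performed so far $+$ E-nodes added so far'', EqSat converges in polynomially many steps (and, consistently with \autoref{lemma:convergence}, the least fixpoint is finite).

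The main obstacle is establishing $(\star)$ with a robust notion of E-class rank under arbitrary interleavings of matching and rebuilding: one must re-derive after each step that the only E-classes $\congr$ genuinely creates are instances of fresh, non-variable sub-patterns of a $\rhs$ (the structural fact that makes the special-edge bookkeeping exhaustive), and one must check that reconciling ranks to the minimum at a merge — which can identify a freshly created high-rank class with a pre-existing low-rank one, or collapse several same-step derivations of the same E-class — never breaks $(\star)$. This is exactly where weak term acyclicity improves on a naive translation to the chase: because an E-matched sub-pattern's E-class is functionally determined by $\varset(p)$ rather than by all of $\varset(\lhs)$, the dependency graph carries only edges out of $\posf_{\rhs}(x)$ for $x\in\varset(p)$, which both admits more rulesets and keeps the counting exponent down to $w=\max|\varset(p)|$. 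Once $(\star)$ is in place, the polynomial bound follows by the same unrolling as in~\cite{FAGIN200589}.
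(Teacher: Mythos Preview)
Your overall strategy mirrors the paper's sketch: both adapt the weak-acyclicity argument of Fagin et al.\ by ranking positions via the maximum number of special edges on an incoming path, then bounding the E-classes produced at each rank level. Where the paper counts distinct E-classes appearing at positions of a given rank, you instead assign each E-class a creation-time rank and maintain the invariant $(\star)$ that every child $c_i$ of a transition at position $(f,i)$ has $\mathrm{rk}(c_i)\le\rho(f,i)$; the counting recursion you then derive from $(\star)$ is essentially the one in the paper's sketch.

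There is, however, a genuine gap in the preservation of $(\star)$. Your case analysis covers fresh proper sub-patterns of $\rhs$ (handled by special edges) and variable occurrences (handled by ordinary edges), but misses the third case: a proper, non-variable sub-pattern $p$ of $\rhs$ that \emph{does} occur in $\lhs$. Such a $p$ creates no new E-class, so no special edge is drawn for it, yet the existing E-class $[p[\sigma]]$ is placed at the new positions $\posf_{\rhs}(p)$, and the weak term dependency graph records no edge for that move. Concretely, take $\Sigma=\{a,f,g,k\}$ with arities $0,1,1,1$ and $\trs=\{\,k(x)\to g(f(x)),\ g(f(x))\to k(f(x))\,\}$. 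In the second rule $f(x)$ occurs in both sides, so the only edges are $(k,1)\to(f,1)$, a special edge $(f,1)\to(g,1)$, and a self-loop $(f,1)\to(f,1)$; no cycle contains a special edge, hence $\trs$ is weakly term acyclic with $\rho(k,1)=\rho(f,1)=0$ and $\rho(g,1)=1$. Starting from $\{a\to c_0,\ k(c_0)\to c_k\}$, the first rule creates $d_1=[f(c_0)]$ with $\mathrm{rk}(d_1)=1$ sitting at $(g,1)$; the second rule then places $d_1$ at $(k,1)$, where $\rho(k,1)=0<1=\mathrm{rk}(d_1)$, so $(\star)$ fails. Worse, alternating the two rules manufactures $d_2=[f(d_1)],\,d_3=[f(d_2)],\dots$ indefinitely, so EqSat actually diverges on this input. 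The paper's sketch, which simply defers to~\cite{FAGIN200589}, is equally silent on this case; the missing ingredient---in both the definition and the proof---is an ordinary edge from each $u\in\posf_{\lhs}(p)$ to each $v\in\posf_{\rhs}(p)$ for every non-variable sub-pattern $p$ shared by $\lhs$ and $\rhs$, after which your invariant $(\star)$ is preserved and the remainder of your argument goes through.
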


% \begin{proof}
%    See \cref{apdx:acyclicity}.
% \end{proof}

% \acyclicitythm*

\begin{proofsketch}
    The proof is essentially the same as the proof of weak acyclicity from Fagin et al.~\cite{FAGIN200589}.
    We provide a sketch here.

    Define an incoming path of position $(f, i)$ 
    to be any path of the weak term dependency graph ending at $(f, i)$,
    and define the rank\footnote{Ranks in a weak term dependency graph
    \revdelC{is}\revinsC{are} not related to ranks in an E-graph.
    } of a position $(f, i)$
    as the maximum number of special edges on any incoming path to $(f, i)$.
    Since the weak term dependency graph is weakly term acyclic,
    the rank of any position is finite.
    We can prove by induction on the ranks of positions that 
    there exists a polynomial $P_i$ that bounds the total number 
    of distinct E-classes at all positions of rank $k$ 
    at any intermediary E-graph produced by EqSat at some iteration.
    There are three kinds of E-classes at positions of rank $i$:
    E-classes that are already present at such positions in the input E-graph,
    E-classes that are copied over from positions of ranks $< i$,
    and new E-classes created at positions of rank $i$.
    The last kind is bound by the number of special edges, 
    multiplied by the number of distinct E-classes at positions of ranks $< i$.

    Denote by $P$ the sum of all $P_i$. $P$ is a polynomial that bounds
    the number of distinct E-classes of any E-graph produced by EqSat at any iteration.
    Let $\textit{ar}$ be the maximum arity of any function symbol in $\Sigma$.
    $Q^{\textit{ar}+1}$ bounds the number of distinct E-nodes.
    Since any E-graph $G=\langle Q, \Sigma, \Delta \rangle$
    is fully determined by its E-classes and E-nodes,
    at most $Q\cdot Q^{\textit{ar}+1} = Q^{\textit{ar}+2} $ distinct E-graphs can occur during EqSat.
    Since EqSat is inflationary, it terminates in at most $Q^{\textit{ar}+2}$ iterations.
\end{proofsketch}

\end{document}